\begin{document}
\title{Differential Privacy for Network Assortativity}

\author{Fei Ma}
\affiliation{%
  \institution{School of Computer Science\\Northwestern Polytechnical University}
  \city{Xi'an}
  \country{China}
  \postcode{710072}
}
\email{feima@nwpu.edu.cn}

\author{Jinzhi Ouyang}
\affiliation{%
  \institution{School of Computer Science\\Northwestern Polytechnical University}
  \city{Xi'an}
  \country{China}
  \postcode{710072}
}
\email{ouyangjinzhi@mail.nwpu.edu.cn}

\author{Xincheng Hu}
\affiliation{%
  \institution{School of Computer Science\\Northwestern Polytechnical University}
  \city{Xi'an}
  \country{China}
  \postcode{710072}
}
\email{xinchenghu@mail.nwpu.edu.cn}

\begin{abstract}
The analysis of network assortativity is of great importance for understanding the structural characteristics of and dynamics upon networks. Often, network assortativity is quantified using the assortativity coefficient that is defined based on the Pearson correlation coefficient between vertex degrees (see Eq.(\ref{eq:3-2-1}) for concrete expression). It is well known that a network may contain sensitive information, such as the number of friends of an individual in a social network (which is abstracted as the degree of vertex.). So, the computation of the assortativity coefficient leads to privacy leakage, which increases the urgent need for privacy-preserving protocol. However, there has been no scheme addressing the concern above. 


To bridge this gap, in this work, we are the first to propose approaches based on differential privacy (DP for short). Specifically, we design three DP-based algorithms: $\mathbf{Local_{ru}}$, $\mathbf{Shuffle_{ru}}$, and $\mathbf{Decentral_{ru}}$. The first two algorithms, based on Local DP (LDP) and Shuffle DP respectively, are designed for settings where each individual only knows his/her direct friends. In contrast, the third algorithm, based on Decentralized DP (DDP), targets scenarios where each individual has a broader view, i.e., also knowing his/her friends' friends. Theoretically, we prove that each algorithm enables an unbiased estimation of the assortativity coefficient of the network. We further evaluate the performance of the proposed algorithms using mean squared error (MSE), showing that $\mathbf{Shuffle_{ru}}$ achieves the best performance, followed by $\mathbf{Decentral_{ru}}$, with $\mathbf{Local_{ru}}$ performing the worst. Note that these three algorithms have different assumptions, so each has its applicability scenario. Lastly, we conduct extensive numerical simulations, which demonstrate that the presented approaches are adequate to achieve the estimation of network assortativity under the demand for privacy protection.

\end{abstract}

\maketitle



\section{Introduction}

Nowadays, network analysis plays a crucial role in understanding a great variety of complex systems such as social networks \cite{cinelli2021echo}, transportation networks \cite{ganin2017resilience}, biological networks \cite{gosak2018network}, and so on. It is well known that network analysis is often performed using various measures including degree distribution, clustering coefficient, diameter, and assortativity coefficient, to name but a few \cite{bhattacharya2020impact}. Among them, the assortativity coefficient (see subsection 3.2 for more details) \cite{newman2002assortative} is used to reflect the tendency of nodes to connect to other nodes with similar attributes or characteristics in networks under consideration, and has received increasing attention from various science communities \cite{noldus2015assortativity}. One of the main reasons for this is that the assortativity coefficient, as a fundamental measure, plays a key role in understanding the structural characteristics of and dynamics upon networks. Figure \ref{fig:example} shows an example of assortativity (see Eq.\ref{eq:3-2-1} for specific expression). Hereinafter, two terms network and graph are used interchangeably. 


\begin{figure}
  \centering
  \includegraphics[width=0.76\linewidth]{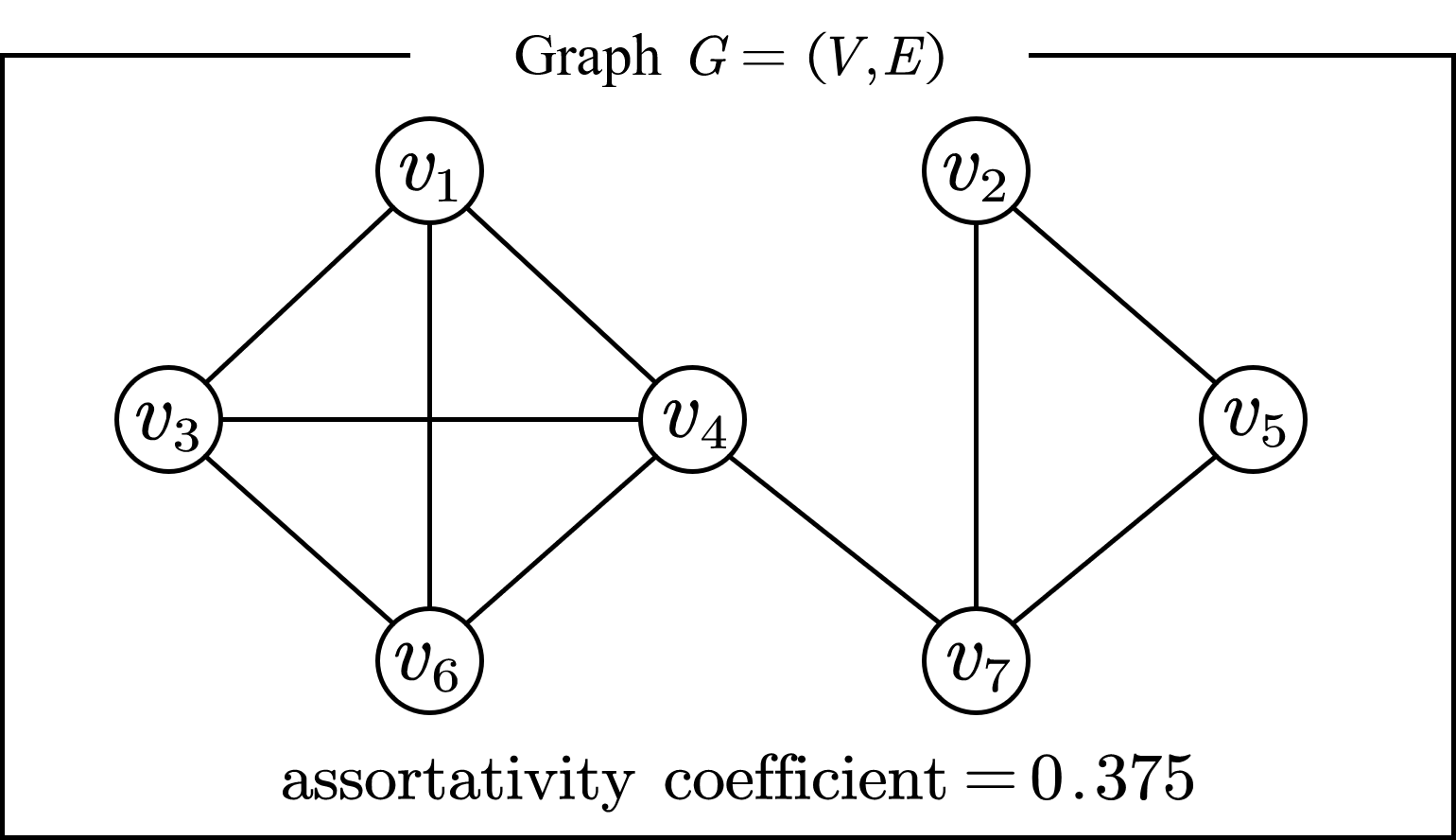}
  \caption{An example graph and its assortativity coefficient.}
  \label{fig:example}
\end{figure}

Specifically, in the assortative network, nodes with similar attributes tend to connect. Essentially, assortativity can be understood as `birds of a feather flock together,' where individuals with similar experiences, backgrounds, or knowledge tend to cluster together. The opposite connection between nodes is found in the disassortative network \cite{newman2003mixing}. Besides providing structural information about networks, network assortativity is also related to the dynamic behavior on networks \cite{melamed2018cooperation}. Assortativity of the network may influence information propagation pathways and speeds, as nodes preferentially connect with others sharing similar attributes, which potentially accelerates information dissemination within homogeneous communities. In addition, network assortativity can impact the stability and resilience of a network \cite{d2012robustness}. Thus, it is of great interest to analyze network assortativity in order to unravel mixing patterns on the network, predict the evolution behavior of the network, and design proper strategies for information diffusion on the network. 

As tried in the literature \cite{jiang2021applications}, the data involved in networks may contain sensitive information when carrying out the analysis of network assortativity, which potentially leads to privacy breaches. For example, the network in question is a social one, where friendships and the number of friends used to calculate its assortativity are deemed sensitive information for each individual. Therefore, it is necessary to develop a framework that is suitable for the analysis of network assortativity and provides a guarantee for protecting the privacy of individuals within the network simultaneously. Assortativity is structurally sensitive and involves joint statistics over node degrees and their connectivity, making its estimation under strong privacy constraints particularly challenging. To the best of our knowledge, there is no such scheme by far. 
To bridge this gap, we propose available approaches based on differential privacy in this paper.

Differential Privacy (DP) \cite{dwork2006differential} has become the gold standard for privacy-preserving network analysis, with broad applications in degree distribution estimation \cite{hay2009accurate, day2016publishing, ma2024ptt}, triangle and k-star counting \cite{imola2022communication, imola2022differentially}. While the traditional Central DP (CDP) model offers strong utility by assuming a trusted data curator, it poses significant risks under server compromise or insider threats \cite{yan2021monitoring, edmonds2020power}. To address this limitation, Local DP (LDP) shifts the trust boundary to individual users by requiring them to perturb their data before submission \cite{erlingsson2014rappor}. On the other hand, this is achieved at the cost of high noise and poor utility, especially for graph statistics relying on edge correlations or higher-order structure \cite{kasiviswanathan2013analyzing}.

To balance privacy and utility, intermediate models such as Shuffle DP \cite{cheu2019distributed, bittau2017prochlo} and Decentralized DP (DDP) have been proposed. Shuffle DP enhances LDP by anonymizing locally randomized reports via an intermediate server called shuffler, enabling amplified privacy guarantees and improved estimation accuracy \cite{erlingsson2019amplification}. DDP, firstly introduced by Sun et al. \cite{sun2019analyzing}, is a decentralized model tailored for network settings where each user holds an extended local view (e.g., 1-hop or 2-hop subgraph). Users apply randomized algorithms to their local views and send only privatized summaries to the server. This allows more expressive and accurate network analysis than standard LDP and, at the same time, preserves strong user-level privacy.

Motivated by this, the goal of this work is to investigate the private estimation of the assortativity coefficient—a critical metric measuring degree correlation between nodes—under LDP, Shuffle DP, and the DDP framework.  The main contribution of this work is as follows:
\begin{itemize}
    \item  We are the first to study the estimation of network assortativity under differential privacy. 
    \item  We propose two DP-algorithms $\mathbf{Local_{ru}}$ and $\mathbf{Shuffle_{ru}}$ for accurate estimation of network assortativity in the setting where each individual is assumed to know only his/her friends. In a mathematically rigorous manner, we prove that the two algorithms output unbiased estimator, and also determine their steadiness by calculating MSE. In addition, we analyze both time and space complexity of the proposed algorithms.   
    \item We also propose another DP-algorithm $\mathbf{Decentral_{ru}}$ suitable for a common situation in which each individual has a broader view, i.e., also knowing his/her friends' friends. Accordingly, we study the unbiasedness, MSE, time and space complexity related to this algorithm. The result shows that in sparse networks, the consequence produced by $\mathbf{Decentral_{ru}}$ is overwhelmingly better than that by $\mathbf{Local_{ru}}$. This implies that a broader view is helpful to obtain a much steadier estimation.
    \item We conduct experimental evaluation on both synthetic and real-world datasets, which demonstrates that empirical analysis is in good agreement with theoretical consequences. In the meantime, $\mathbf{Shuffle_{ru}}$ shows better performance than the other two algorithms due to its own privacy amplification.
\end{itemize}

The rest of this paper is organized as below. We showcase the related work in Section 2. Section 3 introduces terminologies and notations including graph, assortativity coefficient, differential privacy and shuffle model. The main results, namely, the proposed algorithms, are shown in Section 4. We conduct extensive experiments, and the results are displayed in Section 5. Some future research directions are listed in Section 6. Lastly, we close this work in Section 7.

\section{Related Work}

Here, we briefly introduce related work. Roundly speaking, it contains two parts, i.e., non-private network assortativity analysis and private graph statistics. Note that two terms network and graph are used interchangeablely in the remainder of this work.   

\subsection{Non-private network assortativity analysis}
Assortativity, an important network property, has been extensively studied since its introduction \cite{newman2002assortative}. In \cite{newman2002assortative}, Newman introduced the concept of assortativity in networks in 2002 and proposed the assortativity coefficient as a measure for structure mixing of networks. By applying this metric to various real-world networks, he showed that social networks tend to be assortative, whereas technological and biological networks tend to be disassortative. In 2003, Newman \cite{newman2003mixing} further analyzed mixing patterns in networks, distinguished between assortative and disassortative mixing, and explored their effects on network structure and dynamics. In the follow-up work, Piraveenan et al. \cite{piraveenan2008local} investigated the relationship between assortativity and the information content of networks, discovering that assortative and disassortative networks contain more information than neutral networks, which shows neither assortativity nor disassortativity. Chang et al. \cite{chang2007assortativity} focused on assortativity in weighted networks by defining the node's strength as the sum of the weights of its connecting edges. Leung et al. \cite{leung2007weighted} also studied the assortativity of weighted networks and introduced the concept of weighted assortativity. Estrada et al. \cite{estrada2011combinatorial} presented a method to determine network assortativity by examining the relationships among the clustering coefficient, modular connectivity, and branching. The results indicate that the clustering coefficient and modular connectivity positively influence assortativity, while branching has a negative impact.


\subsection{Private graph statistics}
Graph analysis under the differential privacy framework has been extensively studied, covering areas such as degree distribution \cite{hay2009accurate,day2016publishing,ma2024ptt}, subgraph counting \cite{karwa2011private,kasiviswanathan2013analyzing}, and synthetic graph generation \cite{sala2011sharing,xiao2014differentially}. These studies typically operate within either a centralized or a local model. The centralized model assumes a centralized data curator, posing risks of data leakage. Consequently, the local model (LDP) has become increasingly favored by researchers.

In recent years, there has been a surge in research on Local Differential Privacy (LDP) for graph data, with an increasing number of notable advancements. For example, Sun et al. \cite{sun2019analyzing} proposed a subgraph counting algorithm under the assumption that each user knows all friends of his/her friends. Qin et al. \cite{qin2017generating} devised a method for generating synthetic graphs based on LDP. Ye et al. \cite{ye2020towards} presented a local one-round algorithm to estimate graph metrics including the clustering coefficient. Zhang et al. \cite{zhang2020differentially} developed a software usage analysis algorithm under LDP. Imola et al. \cite{imola2022differentially} proposed an exact triangle and 4-cycle counting algorithm under LDP by introducing the Shuffle model.

However, to date, there has been no research conducted on assortativity analysis within the context of privacy protection. Hence, the goal of this work is to present assortativity analysis algorithms based on Differential Privacy which is widely recognized as a robust privacy protection framework.

\section{Preliminary}

In this section, we introduce some preliminaries for our work. Subsection 3.1 defines some basic notations related to networks. Subsections 3.2, 3.3 and 3.4 introduce the assortivity coefficient of networks, DP on graph and the shuffle model, respectively. Subsection 3.5 shows the utility metrics used in this work.

\subsection{Notations}
Let $\mathbb{N}$, $\mathbb{R}$ and $\mathbb{R}_{\ge 0}$ be the sets of natural numbers, real numbers and non-negative
real numbers, respectively. We consider an undirected, non-weighted graph $G=\left( V,E\right)$, where $V$ represents a set of nodes (users) and $E\subseteq V\times V$ represents a set of edges. Let $n\in \mathbb{N}$ be the number of nodes (users) in $G$, and $M\in \mathbb{N}$ be the number of edges in $G$. We use $v_i$ to represent the $i$-th node (user), and then have $V=\left\{ v_1,v_2,\dots,v_n\right\}$. Let $d_i\in \mathbb{N}$ be the degree of node $v_i$ (i.e. the number of edges connected to $v_i$), and $d_{max}\in \mathbb{N}$ be the maximum degree of nodes in $G$, i.e., $d_{max}=\text{max}\left\{ d_1,\dots,d_n\right\}$. We denote by $\mathcal{G}$ the set of graphs with $n$ nodes.  It is a convention to represent a graph $G$ using the adjacency matrix $\mathbf{A}=\left( a_{ij}\right) \in \left\{ 0,1\right\} ^{n\times n}$ in which if $\left( v_i,v_j\right) \in E$, then $a_{ij}=1$; and $a_{ij}=0$ otherwise. Accordingly, the $i$-th row of $\mathbf{A}$ is $\mathbf{a_i}=(a_{i1},a_{i2},...,a_{in})$. Table \ref{tab1:notation} shows some basic notations used in this paper.

\begin{table}[h]
    \centering
    \caption{Basic notations}
    \label{tab1:notation}
    \renewcommand\arraystretch{1.2}
    \begin{tabular}{c|l}
        \hline
        Symbol & Description \\
        \hline
        $G$ & Undirected, non-weighted graph with $n$ nodes. \\
        $M$ & Number of edges in $G$. \\
        $\mathbf{A}$ & Adjacency matrix corresponding to graph $G$. \\
        $v_i$ & $i$-th node (user) in $G$. \\
        $\mathbf{a}_i$ & Neighbor list of $v_i$ (i.e., the $i$-th row of $\mathbf{A}$). \\
        $d_i$ & Degree of $v_i$. \\
        $d_{max}$ & Maximum degree of nodes in $G$. \\
        $d_{avg}$ & Average degree of nodes in $G$. \\
        $q_{ru}$ & Assortativity factor query of $G$. \\
        \hline
    \end{tabular}
\end{table}

We adhere to standard notation conventions in our study. Concretely speaking, we employ the normal font $\theta$ to represent the true value of a statistic (e.g., $d_i$ for the true degree of node $v_i$) and use the tilde symbol $\tilde{\theta}$ to indicate the noisy version of the quantity (e.g., $\tilde{d}_i$ is the noisy degree of $v_i$ after applying some operation, for example, the Laplace mechanism). Additionally, the hat symbol $\hat{\theta}$ is used to signify the estimated value (e.g., $\hat{d}_{i}$ is the estimation of degree of node $v_i$ in $G$).

\subsection{Assortativity Coefficient}
Assortativity is a significant property of networks, measuring the preference of nodes to attach to others that are alike in some way. To quantitatively study network assortativity, Newman proposed an index $r$ called assortativity coefficient \cite{newman2003mixing}. Specifically, the concept of assortativity coefficient is based on Pearson correlation coefficient, which is defined as
\begin{equation}\label{eq:3-2-1}
    r=\frac{M^{-1}\Sigma_{e_{ij} \in E}{d_id_j}-\left[ M^{-1}\Sigma_{e_{ij} \in E}{\frac{1}{2}\left( d_i+d_j \right)} \right] ^2}{M^{-1}\Sigma_{e_{ij} \in E}{\frac{1}{2}\left( d_{i}^{2}+d_{j}^{2} \right)}-\left[ M^{-1}\Sigma_{e_{ij} \in E}{\frac{1}{2}\left( d_i+d_j \right)} \right] ^2},
\end{equation}
where $e_{ij}$ represents an edge between vertices $v_i$ and $v_j$. If $r>0$, the network in question is referred to as assortative. In this setting, it is believed that on average, nodes with similar degrees tend to connect to each other. If $r<0$, the network is considered disassortative, where nodes with different degrees are more likely to connect to one another. If $r=0$, it indicates that nodes in the network are connecting randomly, without any preference for nodes with similar or dissimilar degrees.

Ma et al. have proven that the denominator in Eq.(\ref{eq:3-2-1}) consistently remains non-negative \cite{mafei2024}. This implies that whether a network exhibits assortativity or disassortativity depends solely on the value of the numerator, denoted by $r_u$ for convenience, in Eq.(\ref{eq:3-2-1}). Namely, we write 
\begin{equation}\label{eq:3-2-2}
   r_u=M^{-1}\Sigma_{e_{ij} \in E}{d_id_j}-\left[ M^{-1}\Sigma_{e_{ij} \in E}{\frac{1}{2}\left( d_i+d_j \right)} \right] ^2.
\end{equation}
Also, we refer to $r_u$ as the assortativity factor. As shall be shown below, we focus on how to compute the accurate estimate of $r_u$ under DP. Let $q_{ru}:\mathcal{G}\rightarrow \left[ -1,1\right]$ be a assortativity factor query that takes $G\in \mathcal{G}$ as input and outputs the assortativity factor $q_{ru}(G)$ of $G$.

\subsection{Differential Privacy}
\textbf{Local Differential Privacy.} Local Differential Privacy (LDP) \cite{duchi2013local} is a privacy measure that protects the personal information of each user locally. Due to its ability to protect sensitive information without relying on trusted third-party servers, LDP has garnered widespread attention in the field of network analysis \cite{qin2017generating,wei2020asgldp,imola2021locally}. In LDP, each user first obfuscates his/her personal data by himself/herself and sends the obfuscated data to a data collector. 

LDP on graphs includes edge LDP and node LDP \cite{qin2017generating}. The former conceals the presence of any edge in a graph, while the latter hides both a node and all its adjacent edges. Node LDP is a stronger privacy definition but is more challenging to implement since it requires algorithms to hide more information about the input graph. This paper focuses on edge LDP because it can provide sufficient protection in many scenarios including subgraph counting \cite{imola2021locally}, synthetic graphs \cite{qin2017generating} and maximizes the availability of perturbed data. Additionally, assortativity represents the likelihood of forming edges between nodes with similar attributes, and the primary focus of its study is the network’s edges. In other words, its calculation only requires edge information. From this perspective, edge LDP is sufficient.

\begin{definition}[$\left(\varepsilon,\delta\right)$-edge LDP \cite{qin2017generating}]
    Let $n\in\mathbb{N}$, $\varepsilon\in\mathbb{R}_{\ge 0}$, and $\delta\in\left[ 0,1\right]$. For $i\in\left\{ 1,2,\dots,n\right\}$, let $\mathcal{R}_i$ be a local randomizer of user $v_i$ that takes $\mathbf{a_i}$ as input. $\mathcal{R}_i$ provides $\left( \varepsilon,\delta \right)$-edge LDP if for any two neighbor lists $\mathbf{a}_i,\mathbf{a}_i^\prime \in \left\{ 0,1\right\} ^n$ that differ in one bit and any $S\subseteq \mathrm{Range}\left( \mathcal{R}_i\right)$, we have 
    \begin{equation}\label{eq:3-3-1}
        \mathrm{Pr}\left( \mathcal{R}_i\left( \mathbf{a}_i\right) \in S\right) \le e^\varepsilon \mathrm{Pr}\left( \mathcal{R}_i\left( \mathbf{a}_i^\prime \right) \in S\right) + \delta.
    \end{equation}
    \label{def:edge LDP}
\end{definition}
The parameter $\varepsilon$ is referred to as the privacy budget, which reflects the level of privacy protection offered by algorithm $\mathcal{R}_i$. When $\varepsilon$ is small (e.g. $\varepsilon \le 1$ \cite{li2017differential}), each bit of $\mathbf{a}_i$ is strongly protected by edge LDP. The parameter $\delta$ represents the privacy failure probability and is typically set to a value much smaller than $\frac{1}{n}$ \cite{dwork2014algorithmic}. If $\delta=0$, $\mathcal{R}_i$ provides $\varepsilon$-edge LDP. 

\textbf{Randomized Response (RR).} Randomized response is the mainstream obfuscation mechanism in the study of categorical data under LDP. The classic Warner's RR (Randomized Response) \cite{warner1965randomized} is defined as follows. Applying Warner's RR to neighbor lists provides $\varepsilon$-edge LDP. 
\begin{definition}[Randomized Response \cite{warner1965randomized}]
    Given $\varepsilon \in \mathbb{R}_{\ge 0}$, the Randomized Response Mechanism $\mathcal{R}_\varepsilon^W:\left\{ 0,1\right\} \rightarrow \left\{ 0,1\right\} $ maps $x\in \left\{ 0,1\right\} $ to $y\in \left\{ 0,1\right\}$ with the probability
    \begin{align}
        \mathrm{Pr}\left( \mathcal{R}_{\varepsilon}^{W}\left( x \right) =y \right) =\left\{ \begin{matrix}
	   \frac{e^{\varepsilon}}{e^{\varepsilon}+1}&		\,\,\text{if\,\,\,}x=y,\\
	   \frac{1}{e^{\varepsilon}+1}&		\text{otherwise}.\\
    \end{matrix} \right. 
    \end{align}
    \label{def:RR}
\end{definition}

Note that two users in a network share information about the existence of an edge between them. That is to say, both users' outputs may reveal sensitive data. Based on this, the following algorithms (which will be proposed in this work) are designed for the lower triangular part of the adjacency matrix $\mathbf{A}$ to enhance the level of user privacy protection. Specifically, given $\varepsilon \in \mathbb{R}_{\ge 0}$ and a neighbor list $\mathbf{a}_i\in \left\{ 0,1\right\} ^n$, the local randomizer $\mathcal{R}_i$ outputs noisy bits $\left( \tilde{a}_{i,1},\dots,\tilde{a}_{i,i-1}\right) \in \left\{ 0,1\right\} ^n$ for users with
smaller IDs, i.e., for each $j\in \left\{ 1,\dots,i-1\right\}$, $\tilde{a}_{i,j}=1-a_{i,j}$ with probability $p=\frac{1}{e^\varepsilon +1}$ and $\tilde{a}_{i,j}=a_{i,j}$ with probability $1-p$.

\textbf{Laplacian Mechanism.} The Laplace mechanism \cite{dwork2014algorithmic} is a data obfuscation method introduced by Dwork et al for numerical data to guarantee differential privacy. The mechanism protects privacy by injecting random noise independently sampled from a Laplace distribution into the query statistic, where the level of noise added depends on the global sensitivity of the statistic. The greater the global sensitivity of the statistic, the more noise is added for privacy preservation.
\begin{definition}[Global Sensitivity under LDP \cite{dwork2014algorithmic}]
    In edge LDP, the global sensitivity of a function $f:\left\{ 0,1\right\} ^n\rightarrow \mathbb{R}$ is given by
    \begin{align}
        \Delta_f = \max _{\mathbf{a}_{i}, \mathbf{a}_{i}^{\prime} \in\{0,1\}^{n}, \left| \mathbf{a}_{i}-\mathbf{a}_{i}^{\prime}\right| _1=1}\left|f\left(\mathbf{a}_{i}\right)-f\left(\mathbf{a}_{i}^{\prime}\right)\right|,
    \end{align}
where $\left| \mathbf{a}_{i}-\mathbf{a}_{i}^{\prime}\right| _1=1$ represents that $\mathbf{a}_{i}$ and $\mathbf{a}_{i}^{\prime}$ differ in one bit.
\label{def:global sensitivity}
\end{definition}
\begin{definition}[Local Laplacian Mechanism \cite{dwork2014algorithmic}]
    Given $\varepsilon \in \mathbb{R}_{\ge 0}$ and any query function $f:\left\{ 0,1\right\} ^n\rightarrow \mathbb{R}$ with a global sensitivity $\Delta_f$, the Local Laplacian Mechanism is defined as
    \begin{align}
        \mathcal{R}_\varepsilon^L\left( f\left( \mathbf{a}_i\right)\right) =f\left( \mathbf{a}_i\right) +\mathrm{Lap} \left( \frac{\Delta_f}{\varepsilon}\right).
    \end{align}
    \label{def:local Lap}
\end{definition}

\textbf{Extended Local Views (ELV).} In the LDP mechanism, each user's local input is his/her neighbor list $\mathbf{a}_i$. However, in many real-world scenarios, users typically have a broader local view. For example, in social networks like Twitter and LinkedIn, even with privacy settings enabled, the number of friends of a user's friends is usually accessible. In this paper, in addition to the user's 1-hop local view (i.e., neighbor list) we also focus on the 2-hop Extended Local View, which is defined as follows.

\begin{definition}[2-hop Extended Local View \cite{sun2019analyzing}]
    Given a node $v_i$, its 2-hop Extended Local View (ELV) consists of all nodes reachable within 2 hops and all corresponding edges.
    \label{def:ELV}
\end{definition}

Conducting local differential privacy graph statistics in a local setting with 2-hop ELVs can significantly increase privacy risks due to the substantial overlap between different users' ELVs. Specifically, if ELVs from different users overlap, a single edge may contribute to multiple local reports. In other words, inserting/removing an edge in the global graph would result in changes to multiple local reports. Consequently, the presence of an edge is revealed multiple times, leading to an increased privacy risk.

\textbf{Decentralized Differential Privacy.} To address this privacy risk in the ELV setting, Sun et al. \cite{sun2019analyzing} introduced Decentralized Differential Privacy (DDP), which requires each user to protect not only his/her own privacy but also that of his/her neighbors. DDP is a privacy concept that lies between CDP and LDP. Specifically, DDP follows CDP to define neighboring datasets on the entire graph, but it protects each report locally during data collection. Formally, edge DDP is defined as follows.

\begin{definition}[$\left(\varepsilon,\delta\right)$-edge DDP \cite{sun2019analyzing}]
    Let $n\in\mathbb{N}$, $\varepsilon\in\mathbb{R}_{\ge 0}$, and $\delta\in\left[ 0,1\right]$.  A set of local randomizers $\left\{ \mathcal{R}_1, 1\le i\le n\right\} $ provides $\left( \varepsilon,\delta \right) $-edge DDP if for any two neighboring graphs $G,G^\prime \in \mathcal{G}$ that differ in one edge and any $\left\{ S_i\in \mathrm{Range}\left( \mathcal{R}_i\right) , 1\le i\le n\right\}$,
    \begin{equation}\label{eq:DDP}
        \begin{aligned}
        &\quad \mathrm{Pr}\left( \left( \mathcal{R}_1\left( G_1\right) \in S_1\right) , \dots, \left( \mathcal{R}_n\left( G_n\right) \in S_n\right) \right) \\
        &\le e^\varepsilon \mathrm{Pr}\left( \left( \mathcal{R}_1\left( G_1^\prime \right) \in S_1\right) , \dots, \left( \mathcal{R}_n\left( G_n^\prime \right) \in S_n\right) \right) + \delta.
        \end{aligned}
    \end{equation}
    where $G_i$ (resp. $G_i^\prime$) is the ELV of user $v_i$ in graph $G$ (resp. $G^\prime$).
    \label{def:edge DDP}
\end{definition}

Since DDP protects an edge in the entire graph $G$, the global sensitivity under DDP is defined as

\begin{definition}[Global Sensitivity under DDP \cite{sun2019analyzing}]
    Given a function $f$, the global sensitivity of $f$ is defined as
    \begin{equation}\label{eq:GS under DDP}
        \begin{aligned}
        GS_{DDP}\left( f\right) =\max _{G,G^\prime }{\sum_{i=1}^n{\lvert f(G_i)-f(G_i^\prime)\rvert}},
        \end{aligned}
    \end{equation}
    where $G$ and $G^\prime$ are two arbitrary neighboring graphs, and $G_i$ (resp. $G_i^\prime$) is the ELV of user $v_i$ in graph $G$ (resp. $G^\prime$).
    \label{def:GS under DDP}
\end{definition}

The global sensitivity of a function is determined by the function itself, and some functions may have large global sensitivities. For example, in the 2-hop ELV setting, the global sensitivity of the sum of the degrees of a user's friends under DDP is $4n - 6$. Since $n$ can be very large in real-world graphs (e.g., the number of users in social networks), directly perturbing data using the global sensitivity as the noise scale can significantly compromise the accuracy of the estimates. To improve accuracy, local sensitivity is typically used to reduce the noise scale, which is defined as follows.

\begin{definition}[Local Sensitivity under DDP \cite{sun2019analyzing}]
    Given a global graph $G$ and a function $f$, the local sensitivity of $f$ is defined as
    \begin{equation}\label{eq:LS under DDP}
        \begin{aligned}
        LS_{DDP}\left( f\right) =\max _{G^\prime }{\sum_{i=1}^n{\lvert f(G_i)-f(G_i^\prime)\rvert}},
        \end{aligned}
    \end{equation}
    where $G^\prime$ is an arbitrary neighboring graph of $G$ and $G_i$ (resp. $G_i^\prime$) is the ELV of user $v_i$ in graph $G$ (resp. $G^\prime$).
    \label{def:LS under DDP}
\end{definition}

Since local sensitivity is data-dependent, to prevent it from compromising privacy, we derive a probabilistic upper bound for local sensitivity by Lemma \ref{lemma:TailBound}, and use this upper bound as the noise scale for data perturbations.

\begin{lemma}[Tail Bound for Laplace Distribution \cite{sun2019analyzing}]
    Let $x$ be any real value, and $\tilde{x} = x+\mathrm{Lap}(\alpha )$ for some $\alpha >0$. Then, with probability $1-\delta$,
    \begin{equation}
        \tilde{x}+\alpha \cdot \log \left( \frac{1}{2\delta}\right) \ge x
    \end{equation}
    \label{lemma:TailBound}
\end{lemma}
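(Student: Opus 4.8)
The plan is to reduce the claim to a standard one-sided tail computation for the centered Laplace distribution. Write $Z := \tilde{x} - x$, so that $Z \sim \mathrm{Lap}(\alpha)$ with density $f(z) = \frac{1}{2\alpha}\exp(-|z|/\alpha)$ on $\mathbb{R}$; crucially, $Z$ does not depend on $x$, so the probability in the statement is a property of $\mathrm{Lap}(\alpha)$ alone. The event $\tilde{x} + \alpha \log\!\left(\frac{1}{2\delta}\right) \ge x$ is equivalent to $Z \ge -\alpha \log\!\left(\frac{1}{2\delta}\right)$, and hence it suffices to show $\Pr\!\left[\,Z \ge -\alpha \log\!\left(\frac{1}{2\delta}\right)\right] \ge 1 - \delta$.

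Next I would evaluate this tail probability directly from the Laplace CDF. For any threshold $z_0 \le 0$ one has $\Pr[Z \le z_0] = \int_{-\infty}^{z_0} \frac{1}{2\alpha} e^{z/\alpha}\,dz = \tfrac{1}{2} e^{z_0/\alpha}$, so that $\Pr[Z \ge z_0] = 1 - \tfrac{1}{2} e^{z_0/\alpha}$. In the regime of interest, $\delta \le \tfrac{1}{2}$, the threshold $z_0 = -\alpha \log\!\left(\frac{1}{2\delta}\right)$ is indeed $\le 0$, so substituting gives
\[
\Pr\!\left[\,Z \ge -\alpha \log\!\left(\tfrac{1}{2\delta}\right)\right] \;=\; 1 - \tfrac{1}{2}\exp\!\left(-\log\!\left(\tfrac{1}{2\delta}\right)\right) \;=\; 1 - \tfrac{1}{2}\cdot 2\delta \;=\; 1 - \delta,
\]
which is exactly the claimed bound, in fact with equality. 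For completeness I would also dispatch the degenerate range $\delta > \tfrac{1}{2}$: there $z_0 > 0$, so $\Pr[Z \ge z_0] = \tfrac{1}{2} e^{-z_0/\alpha} = \tfrac{1}{4\delta}$, and $\tfrac{1}{4\delta} \ge 1 - \delta$ follows from $(2\delta - 1)^2 \ge 0$; hence the inequality holds for every $\delta \in (0,1]$.

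There is essentially no hard step here — the two points requiring a little care are (i) recognizing that the shift by $x$ is irrelevant, which reduces everything to the fixed distribution $\mathrm{Lap}(\alpha)$, and (ii) tracking the sign of $\log\!\left(\frac{1}{2\delta}\right)$ so that the correct branch of the Laplace CDF is invoked. Since the application of this lemma (bounding local sensitivity so it can serve as a noise scale) only ever uses $\delta \le \tfrac{1}{2}$, the relevant computation is the exact one above rather than merely an inequality.
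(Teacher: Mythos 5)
Your proof is correct and complete: the reduction to the centered variable $Z=\tilde{x}-x\sim\mathrm{Lap}(\alpha)$, the CDF computation $\Pr[Z\le z_0]=\tfrac12 e^{z_0/\alpha}$ for $z_0\le 0$ giving exactly $1-\delta$, and the check of the degenerate regime $\delta>\tfrac12$ via $(2\delta-1)^2\ge 0$ are all sound. The paper itself does not prove this lemma (it is imported from Sun et al.\ \cite{sun2019analyzing} and used only as a black box), and your argument is precisely the standard one-sided Laplace tail calculation that underlies the cited result, so there is no substantive difference in approach to report.
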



\subsection{Shuffle Model}
In recent studies \cite{balle2019privacy,cheu2019distributed,erlingsson2019amplification}, the shuffle model has garnered significant interest due to its privacy amplification effect. The shuffle model, also known as Shuffle DP (SDP for short), introduces an intermediary server called shuffler between the user and the data collector based on LDP, and it works as follows. First, each user $v_i$ obfuscates his/her personal data using an LDP mechanism $\mathcal{R}$ common to all users. Then, user $v_i$ encrypts the obfuscated data and sends it to the shuffler. The shuffler randomly shuffles the encrypted data to ensure anonymization and sends the result to a data collector for which we make no trust assumptions. Finally, the data collector decrypts the shuffled data. Under the assumption that the data collector does not collude with the shuffler, the shuffle model amplifies DP guarantees of the obfuscated data because shuffling removes the link between users and the obfuscated data. Shuffling can amplify privacy without loss of data utility for analytical tasks that are insensitive to data order, such as sum, average and histogram queries. Current research on the shuffle model primarily focuses on determining bounds on the level of privacy obtained after shuffling \cite{girgis2021renyi,feldman2022hiding}. 

In this paper, we use the privacy amplification bound provided by Feldman et al., which is the current state of the art.
\begin{lemma}[Privacy Amplification by Shuffling \cite{feldman2022hiding}]
    Let $n\in \mathbb{N}$ and $\varepsilon_0\in \mathbb{R}_{\ge 0}$. Let $\mathcal{X}$ be the set of input data for each user. Let $x_i\in \mathcal{X}$ be input data of the $i$-th user and $\mathbf{x}=\left( x_1,\dots,x_n\right) \in \mathcal{X}^n$. For $i\in \left\{ 1,2,\dots,n\right\}$, let $\mathcal{R}_i:\mathcal{X} \rightarrow \mathcal{Y}$ be a local randomizer of the $i$-th user that provides $\varepsilon_0$-LDP. Let $\mathcal{A}_S:\mathcal{X}^n \rightarrow \mathcal{Y}^n$ be the algorithm that given a dataset $\mathbf{x} \in \mathcal{X}^n$, samples a uniform random permutation $\pi$ over $\left\{ 1,2,\dots,n\right\}$, then sequentially computes $y_{\pi \left( i\right)}=\mathcal{R}_i\left( x_{\pi \left( i\right)}\right)$ and outputs $\mathbf{y}=\left( y_{\pi \left( 1\right)},\dots,y_{\pi \left( n\right)}\right)$. Then for any $\delta \in \left[ 0,1\right]$ such that $\varepsilon_0 \le \log \left( \frac{n}{16\log \left( 2/\delta \right)}\right)$, $\mathcal{A}_S$ provides $\left( \varepsilon,\delta \right)$-DP, where
    \begin{align}
        \varepsilon \le \log \left( 1+\frac{e^{\varepsilon _0}-1}{e^{\varepsilon _0}+1}\left( \frac{8\sqrt{e^{\varepsilon _0}\log \left( 4/\delta \right)}}{\sqrt{n}}+\frac{8e^{\varepsilon _0}}{n} \right) \right).
        \label{eq:shuffle}
    \end{align}
    \label{lemma:shuffle}
\end{lemma}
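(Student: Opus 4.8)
\noindent\textit{Plan of proof (reconstructing the argument of \cite{feldman2022hiding}).}
This amplification bound is quoted directly from Feldman, McMillan and Talwar, so establishing it amounts to reconstructing their ``hiding among the clones'' argument. Since the shuffler releases only the \emph{multiset} of reports, it suffices to show, for every pair of input tuples $\mathbf{x},\mathbf{x}'$ agreeing on all coordinates except one (say the first, with $x_1 \ne x_1'$), that the two induced output multisets are $(\varepsilon,\delta)$-indistinguishable. The first step is to decompose each $\varepsilon_0$-LDP local randomizer via its \emph{blanket}: for any $\varepsilon_0$-LDP randomizer $\mathcal{R}$, the quantities $\Pr[\mathcal{R}(x)=y]$ and $\Pr[\mathcal{R}(x')=y]$ lie within a factor $e^{\varepsilon_0}$ for all $x,x',y$, so the pointwise infimum $\omega(y)=\inf_x\Pr[\mathcal{R}(x)=y]$ is a sub-distribution of some total mass $\gamma$, and one may sample $\mathcal{R}(x)$ by: with probability $\gamma$ emitting a fresh draw from the input-independent distribution $\bar\omega=\omega/\gamma$, and with probability $1-\gamma$ emitting a draw from an input-dependent residual $q_x$. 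A reduction of \cite{feldman2022hiding} shows that the worst case, over all $\varepsilon_0$-LDP randomizers, is binary randomized response, for which $\bar\omega$ is uniform on $\{0,1\}$, $\gamma=\tfrac{2}{e^{\varepsilon_0}+1}$, and $q_x$ is the point mass on $x$; in particular the residual weight $1-\gamma=\tfrac{e^{\varepsilon_0}-1}{e^{\varepsilon_0}+1}$ is exactly the prefactor in (\ref{eq:shuffle}).

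The second step is to \emph{condition on the clones}: the users $i\ge 2$ whose report was drawn from $\bar\omega$. Their shuffled contributions are i.i.d.\ $\bar\omega$-samples, hence statistically identical under $\mathbf{x}$ and $\mathbf{x}'$, so they camouflage user $1$. In the binary worst case, writing $Y$ for the number of $1$'s in the shuffled output, one obtains $Y=\mathrm{Bin}(C',\tfrac12)+(\text{input-independent shift})$, where $C'$ is the clone count (equal to $C$ or $C+1$ according to whether user $1$ itself clones) and the shift counts the non-clone reports equal to $1$; crucially, the two laws of $Y$ (under $\mathbf{x}$ versus $\mathbf{x}'$) coincide on the event that user $1$ clones and differ by a shift of exactly one unit on its complement. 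Thus, conditioned on $C$, the privacy-loss analysis collapses to comparing a mixture placing weight $\gamma$ on an \emph{identical} component and weight $1-\gamma$ on two components that are unit shifts of the same $\mathrm{Bin}(C,\tfrac12)$ law --- a one-dimensional problem governed entirely by ratios of adjacent Binomial pmf values.

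The third step is the quantitative accounting. The clone count satisfies $C\sim\mathrm{Bin}(n-1,\gamma)$, so a Chernoff bound gives $C\ge (n-1)\gamma/2 = \tfrac{n-1}{e^{\varepsilon_0}+1} =: m$ except with probability $O(\delta)$; the hypothesis $\varepsilon_0\le\log\big(n/(16\log(2/\delta))\big)$ is exactly what makes $m$ large compared with $\log(2/\delta)$, i.e.\ what makes this concentration meaningful. On the event $C\ge m$, the ratio $\Pr[\mathrm{Bin}(C,\tfrac12)=j+1]/\Pr[\mathrm{Bin}(C,\tfrac12)=j]=(C-j)/(j+1)$ lies within $1\pm O\big(\sqrt{\log(1/\delta)/C}\big)$ of $1$ for all $j$ with $|j-C/2|=O(\sqrt{C\log(1/\delta)})$, and a Binomial tail bound confines all but an $O(\delta)$ fraction of the mass to that window; combining this estimate with the mixture weights $\gamma,1-\gamma$ and substituting $C\ge m$ yields $e^{\varepsilon}-1\le \tfrac{e^{\varepsilon_0}-1}{e^{\varepsilon_0}+1}\big(O(\sqrt{e^{\varepsilon_0}\log(1/\delta)/n})+O(e^{\varepsilon_0}/n)\big)$, the second term coming from lower-order corrections in the pmf-ratio expansion. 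Tracking the explicit constants through the Chernoff and tail bounds --- with $\delta$ split appropriately across the two failure events, whose union bound supplies the additive $\delta$ --- produces precisely the factor $8$ and the argument $\log(4/\delta)$ in (\ref{eq:shuffle}). I expect the main obstacle to be exactly this last step: obtaining the \emph{stated constants}, rather than merely the $O(\cdot)$ rate, requires delicate simultaneous control of the clone-count tail and the local Binomial pmf ratio, and one must also justify rigorously that binary randomized response is the worst-case $\varepsilon_0$-LDP randomizer so that the blanket weight $\tfrac{2}{e^{\varepsilon_0}+1}$ and the $\tfrac{e^{\varepsilon_0}-1}{e^{\varepsilon_0}+1}$ prefactor are genuinely tight.
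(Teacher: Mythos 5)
You should note at the outset that the paper does not prove Lemma \ref{lemma:shuffle} at all: it is imported verbatim from Feldman, McMillan and Talwar \cite{feldman2022hiding} and used as a black box (both in Theorem \ref{theorem:7} and in the function $\mathrm{LocalPrivacyBudget}$ of Algorithm 2). So there is no in-paper proof to compare against, and the only question is whether your sketch faithfully reconstructs the cited argument.

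It does so only partially, and the deviation matters for the specific bound stated. Your first step uses the \emph{privacy-blanket} decomposition $\omega(y)=\inf_x\Pr[\mathcal{R}(x)=y]$, i.e.\ a mixture between an input-independent component and an input-dependent residual; that is the Balle--Bell--Gasc\'on--Nissim technique, and by itself it yields amplification bounds with worse dependence than Eq.~(\ref{eq:shuffle}). The ``hiding among the clones'' argument of \cite{feldman2022hiding} decomposes differently: for each user $i\ge 2$ one writes $\mathcal{R}_i(x_i)$ as a mixture that, with probability $\tfrac{1}{2e^{\varepsilon_0}}$ each, outputs a sample from $\mathcal{R}_1(x_1)$ or from $\mathcal{R}_1(x_1')$ --- i.e.\ a clone of the \emph{distinguished user's} report on one of the two neighboring inputs --- and with the remaining probability outputs from a leftover distribution. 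Consequently the reduction is not to a single binomial with a unit shift, but to comparing the pair of clone counts $(A+1,B)$ versus $(A,B+1)$ with $A,B\sim\mathrm{Bin}(C,\tfrac12)$ conditioned on the total number of clones $C\approx n/e^{\varepsilon_0}$; the claim that binary randomized response is the exact worst case among all $\varepsilon_0$-LDP randomizers is not how the reduction is justified there. Your Chernoff-plus-binomial-ratio outline is the right spirit for the final step, but as you concede, the constants $8$ and $\log(4/\delta)$ and the prefactor $\tfrac{e^{\varepsilon_0}-1}{e^{\varepsilon_0}+1}$ are asserted rather than derived, and they do not drop out of the blanket route. As a self-contained proof the proposal therefore has a genuine gap (wrong decomposition for this bound, unproven worst-case reduction, constants not tracked); as a citation-level justification it is more than the paper itself attempts, and simply invoking \cite{feldman2022hiding} as the paper does would suffice.
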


In the lemma above, the shuffled data $y_{\pi(1)}, \dots, y_{\pi(n)}$ is protected by $\varepsilon$-DP, where $\varepsilon \ll \varepsilon_0$. In addition to providing this closed-form upper bound, Feldman et al. \cite{feldman2022hiding} developed an efficient algorithm to numerically compute a tighter upper bound, which we refer to as the numerical upper bound. In this paper, we use both the closed-form and numerical upper bounds and also distinguish the impact of each term on the performance of the algorithm.

\subsection{Utility Metrics}
We use MSE and Relative Error as utility metrics in theoretical analysis and experimental evaluation, respectively. Let $\hat{q}_{ru}:\mathcal{G}\rightarrow \mathbb{R}$ be the assortativity factor estimator. Let $MSE:\mathbb{R}\rightarrow \mathbb{R}_{\ge 0}$ be the MSE function, which maps the estimate $\hat{q}_{ru}\left( G\right)$ and the true value $q_{ru}\left( G\right)$ to the MSE; i.e., 
\begin{align}
        MSE\left[ \hat{q}_{ru}\left( G\right),q_{ru}\left( G\right)\right] =\mathbb{E}\left[ \left( \hat{q}_{ru}\left( G\right)-q_{ru}\left( G\right) \right)^2 \right]. 
        \label{eq:MSE}
    \end{align}
Note that the MSE can also be large when the $q_{ru} \left( G\right)$ is large. Therefore, in our experiments, we evaluate the relative error given by 
\begin{align}
        RE\left[ \hat{q}_{ru}\left( G\right),q_{ru}\left( G\right)\right] =\frac{ |\hat{q}_{ru}\left( G\right)-q_{ru}\left( G\right)|}{\min\{q_{ru}\left( G\right),\eta\}},
        \label{eq:RE}
    \end{align}
where $\eta\in \mathbb{R}_{\ge 0}$ is a small positive value. In general, $\eta$ is set to $\frac{n}{10^3}$ \cite{xiao2011ireduct, chen2012differentially, bindschaedler2016synthesizing}. If the relative error is much smaller than 1, the estimation is highly accurate.

As shown in Eq.(\ref{eq:3-2-1}), it is clear that the estimation of the assortativity coefficient $r$ involves intricate analyses of some terms reflecting inter-node correlations, which poses a challenge to obtaining an accurate estimate of $r$ under the requirement of privacy protection. Specifically, bias correction is required for the expression $X/Y$ derived from locally differentially private data to obtain an unbiased privacy estimate of the assortativity coefficient. Here, both $X$ and $Y$ denote a random polynomial. Despite recent advances in algorithms for unbiased local privacy estimation of Laplacian variable polynomials \cite{hillebrand2023unbiased}, it remains unresolved to obtain an unbiased estimate of $X/Y$ from locally differentially private data.


The next lemma (i.e., lemma \ref{lemma:expectation of the ratio}) offers an approach for approximating the unbiased privacy estimate of $r$. It is clear that we can approximate the unbiased estimate of the assortativity coefficient $r$ by calculating the ratio of the unbiased estimates of its numerator and denominator. Of course, a more accurate unbiased estimate of $r$ can be approximated by Eq.(\ref{subeq:3-5-1-1}). As shall be seen below, the following method is suitable for a more accurate estimation of $r$ while many tedious computations need to be performed. This is because the expression for $r$ involves a large number of terms representing connections between nodes. In this paper, we use a simple approximate estimator for $r$. Note that the denominator of $r$ (denoted by $r_d$) is
\begin{equation}
    \begin{aligned}
        r_d &= M^{-1}\Sigma_{e_{ij} \in E}{\frac{1}{2}\left( d_{i}^{2}+d_{j}^{2} \right)}-\left[ M^{-1}\Sigma_{e_{ij} \in E}{\frac{1}{2}\left( d_i+d_j \right)} \right] ^2 \\
        & = \frac{1}{2M}\sum_{i=1}^n{d_i^3}-\left[ \frac{1}{2M} \sum_{i=1}^n{d_i^2}\right]^2
    \end{aligned}
\end{equation}
Clearly, the denominator $r_d$ is a polynomial function of node degrees, and its locally differentially private unbiased estimator can be obtained using the debiasing algorithm proposed in \cite{hillebrand2023unbiased}. Therefore, this paper focuses on computing a stable, unbiased estimator of the assortative factor $r_u$ (numerator) under LDP.

\begin{lemma}[\cite{casella2024statistical}]
    Let $X$ and $Y$ be two random variables,
\begin{subequations}
\label{eq:whole}
\begin{eqnarray}
 \mathbb{E}\left( \frac{X}{Y} \right) \approx \frac{\mathbb{E}\left( X \right)}{\mathbb{E}\left( Y \right)}-\frac{\mathrm{Cov}\left( X,Y \right)}{\left[ \mathbb{E}\left( Y \right) \right] ^2}+\frac{\mathbb{E}\left( X \right)}{\left[ \mathbb{E}\left( Y \right) \right] ^3}\cdot \mathbb{V}\left( Y \right) ,\label{subeq:3-5-1-1}
\end{eqnarray}
\begin{equation}
\mathbb{V}\left( \frac{X}{Y} \right) \approx \frac{\mathbb{V}\left( X \right)}{\left[ \mathbb{E}\left( Y \right) \right] ^2}-\frac{2\mathbb{E}\left( X \right) \mathrm{Cov}\left( X,Y \right)}{\left[ \mathbb{E}\left( Y \right) \right] ^3}+\frac{\left[ \mathbb{E}\left( X \right) \right] ^2}{\left[ \mathbb{E}\left( Y \right) \right] ^4}\mathbb{V}\left( Y \right). \label{subeq:3-5-1-2}
\end{equation}
\end{subequations}
    \label{lemma:expectation of the ratio}
\end{lemma}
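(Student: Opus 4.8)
The plan is to derive both formulas from a second-order Taylor (delta-method) expansion of the smooth map $g(x,y)=x/y$ about the point $(\mu_X,\mu_Y):=(\mathbb{E}(X),\mathbb{E}(Y))$, under the standing regularity assumptions that $\mu_Y\neq 0$ and that $Y$ is concentrated closely enough around $\mu_Y$ that the Taylor remainder is negligible. First I would record the partial derivatives of $g$, namely $g_x=1/y$, $g_y=-x/y^2$, $g_{xx}=0$, $g_{xy}=-1/y^2$, and $g_{yy}=2x/y^3$, and evaluate them at $(\mu_X,\mu_Y)$ to get $1/\mu_Y$, $-\mu_X/\mu_Y^2$, $0$, $-1/\mu_Y^2$, and $2\mu_X/\mu_Y^3$ respectively.

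For the mean, I would expand to second order,
\[
\frac{X}{Y}\approx \frac{\mu_X}{\mu_Y}+\frac{1}{\mu_Y}(X-\mu_X)-\frac{\mu_X}{\mu_Y^2}(Y-\mu_Y)-\frac{1}{\mu_Y^2}(X-\mu_X)(Y-\mu_Y)+\frac{\mu_X}{\mu_Y^3}(Y-\mu_Y)^2,
\]
the $g_{xx}$ term vanishing identically. Taking expectations and using $\mathbb{E}(X-\mu_X)=\mathbb{E}(Y-\mu_Y)=0$, $\mathbb{E}[(X-\mu_X)(Y-\mu_Y)]=\mathrm{Cov}(X,Y)$, and $\mathbb{E}[(Y-\mu_Y)^2]=\mathbb{V}(Y)$ yields Eq.(\ref{subeq:3-5-1-1}) directly.

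For the variance, I would keep only the linear part of the same expansion, $X/Y\approx \mu_X/\mu_Y+\tfrac{1}{\mu_Y}(X-\mu_X)-\tfrac{\mu_X}{\mu_Y^2}(Y-\mu_Y)$, since the constant contributes nothing to the variance and the quadratic terms are of higher order in the fluctuations. Applying the bilinearity identity $\mathbb{V}(aX+bY)=a^2\mathbb{V}(X)+2ab\,\mathrm{Cov}(X,Y)+b^2\mathbb{V}(Y)$ with $a=1/\mu_Y$ and $b=-\mu_X/\mu_Y^2$ then produces Eq.(\ref{subeq:3-5-1-2}).

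The main obstacle is not the algebra but justifying that discarding the higher-order remainder is legitimate, which is precisely why the statement is only an approximation. A fully rigorous argument would bound the Lagrange remainder of the Taylor expansion, forcing a hypothesis such as $\mu_Y$ bounded away from $0$ together with control on higher central moments of $Y$ (equivalently, a small coefficient of variation); absent such conditions $\mathbb{E}(X/Y)$ need not even be finite. Since the lemma is quoted from the statistics literature in its customary delta-method form, I would state these as implicit regularity assumptions and present the computation above as the derivation.
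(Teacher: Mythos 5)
Your derivation is correct: the paper itself gives no proof of this lemma, quoting it directly from the cited statistics text, and the second-order Taylor (delta-method) expansion of $g(x,y)=x/y$ about $\left( \mathbb{E}(X),\mathbb{E}(Y)\right)$ that you carry out is exactly the standard argument behind Eqs.~(\ref{subeq:3-5-1-1}) and (\ref{subeq:3-5-1-2}), and your algebra checks out. Your closing remark about the need for $\mathbb{E}(Y)$ bounded away from zero and a small coefficient of variation correctly identifies why the result is stated only as an approximation rather than an identity.
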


\section{Schemes}

This section elaborates on our main results. Concretely, we propose three algorithms for estimating the assortativity factor $r_u$ of network on $n$ nodes in the requirement of DP. 

First of all, we consider the common local setting where each individual is assumed to know only his/her friends. Accordingly, two algorithms $\mathbf{Local_{ru}}$ and $\mathbf{Shuffle_{ru}}$ are presented. $\mathbf{Local_{ru}}$ is built based on a simple assumption that there is only one round of interaction between each user and the data collector. In this model, each user (data provider) perturbs his/her data once and then sends it to the data collector. Due to the limited interaction, the data collector can just gather a restricted amount of information. Consequently, the one-round algorithm is generally suitable for simple data analysis tasks where high accuracy is not critical. In order to obtain more accurate consequences, we add a round of communication between users and the data collector. Put this another way, a two-round algorithm $\mathbf{Shuffle_{ru}}$ is designed. Specifically, the data collector sends query requests to each user two times, allowing for more powerful queries. As shown in the previous works, such as more accurate subgraph counting \cite{sun2019analyzing,imola2021locally}, synthetic graph construction \cite{qin2017generating}, a similar thought has been employed to design some two-round algorithms that can handle more complex data analysis tasks and achieve greater precision. However, this increased accuracy comes at the cost of added complexity and higher communication overhead. 

Secondly, we also discuss the other scenario in which each individual has a broader view, i.e., also knowing his/her friends' friends. For instance, with the default setting of Facebook (facebook.com), a user allows each of her friends to see all her connections. The third algorithm $\mathbf{Decentral_{ru}}$ is proposed to get assortativity factor $r_u$ in this situation. 

For convenience and understanding, we summarize the main properties of these three algorithms (i.e., $\mathbf{Local_{ru}}$, $\mathbf{Shuffle_{ru}}$ and $\mathbf{Decentral_{ru}}$) in the next table ahead of time. 

\begin{table}[h]
    \centering
    \begin{threeparttable}
        \caption{main properties of algorithm}
        \label{tab2}
        \renewcommand\arraystretch{1.2}
        \begin{tabular}{|c|c|c|c|c|l}
            \hline
            &  $\mathbf{Local_{ru}}$ & $\mathbf{Shuffle_{ru}}$ & $\mathbf{Decentral_{ru}}$\\
            \hline
            $x$-DP& $\varepsilon$-LDP & $(\varepsilon,\delta)$-DP & $(\varepsilon,\delta)$-DDP   \\
            \hline
            Unbiasedness & yes & yes & yes \\
            \hline
            MSE & $ O\left( K_1 \right) $\tnote{1} & $O\left( K_2 \right) $\tnote{2} & $O\left( \frac{n^3d_{\max}^{6}}{M^4} \right) $  \\
            \hline
            Time complexity & $O(n^2)$& $O(n^2)$ & $O(n^2)$ \\
            \hline
            Space complexity & $O(n^2)$& $O(n^2)$ & $O(n^2)$  \\
            \hline
        \end{tabular}
        \begin{tablenotes}
            \item[1] $K_1=\frac{n^3d_{\max}^{2}+n^2d_{\max}^{4}}{M^2}+\frac{n^3d_{\max}^{6}}{M^4}$.
            \item[2] $K_2=\frac{n^{1+\alpha}d_{\max}^{4}}{M^2}+\frac{n^3d_{\max}^{2}}{\left( \log n \right) ^2M^2}+\frac{n^3d_{\max}^{6}}{\left( \log n \right) ^2M^4}$.
        \end{tablenotes}
    \end{threeparttable}
\end{table}

\emph{Remark} In the rest of this work, we assume that parameter $M$ is known to all. This is mainly because $M$ can be accurately calculated using well-established degree distribution estimation techniques \cite{day2016publishing, wei2020asgldp}. Doing so simplifies the following analysis as well. At the same time, it is clear that parameter $M$ has little influence on the estimation of network assortativity. It is worth noticing that the analysis approach established below is also suitable for the situation where parameter $M$ is unknown in advance. Nonetheless, this causes more tedious calculations when conducting the theoretical analysis, which becomes clear to understand based on the following analysis and lemma \ref{lemma:expectation of the ratio}. The goal of this paper is to build up a methodology that has wide applications. In a nutshell, we only consider the case where parameter $M$ is known.


Now, let us divert our attention to the development of the first algorithm $\mathbf{Local_{ru}}$. It should be noted that all the proofs of both unbiasedness and MSE of three algorithms are deferred to show in Supplementary Material for readability. Also, some notations are abused yet the meanings are clear from the context. 



\subsection{One-Round LDP Algorithm for Assortativity Factor}
As shown in Eq.\eqref{eq:3-2-2}, the calculation of the assortativity factor $r_u$ involves two terms: the degrees of nodes and the edges between them. In other words, we can derive $r_u$ from the degree sequence $d_1, d_2, \dots, d_n$ and the network's adjacency matrix $\mathbf{A}$. Based on this, we propose a one-round algorithm for $r_u$ under LDP. This algorithm adds random noise to the degrees of all nodes using the Laplace mechanism and obfuscates the lower triangular part of the adjacency matrix $\mathbf{A}$ with Randomized Response simultaneously. The data collector then estimates $r_u$ from these noisy data sent by each user.

\begin{algorithm}[h]
    \caption{$\mathbf{Local_{ru}}$. $\left[ \mathrm{v_i}\right] $ and $\left[ \mathrm{d}\right] $ represent that the process is run by user $v_i$ and the data collector, respectively.}\label{algo:1}
    \SetAlgoLined
    \KwData{Graph $G$ represented as neighbor lists $\mathbf{a}_1,\dots,\mathbf{a}_n$ $\in \left\{ 0,1\right\}^n$, privacy budget $\varepsilon_1,\varepsilon_2 \in \mathbb{R}_{\ge 0}$}
    \KwResult{Private estimate $\hat{q}_{ru}\left( G\right)$ of $q_{ru}\left( G\right)$}
   \BlankLine

    \For{$i=1$ \KwTo $n$}{
        $\left[ \mathrm{v_i}\right] $ $\tilde{\mathbf{R}}_i\leftarrow \left( \mathcal{R}_{\varepsilon_1}^W \left( a_{i,1}\right) ,\dots ,\mathcal{R}_{\varepsilon_1}^W \left( a_{i,i-1}\right) \right)$ \tcp*{Apply $RR$ to the lower triangular part of adjacency matrix $\mathbf{A}$.}
        $\left[ \mathrm{v_i}\right] $ $d_i \leftarrow \sum_{j=1}^n{a_{i,j}}$\;
        $\left[ \mathrm{v_i}\right] $ $\tilde{d}_i \leftarrow d_i+\mathrm{Lap}\left( \frac{1}{\varepsilon_2}\right)$\;
        $\left[ \mathrm{v_i}\right] $ Send $\tilde{\mathbf{R}}_i$ and $\tilde{d}_i $ to the data collector\;
    }
    
    $\left[ \mathrm{d}\right] $ $X_1\leftarrow \sum_{i=2}^n{\sum_{j=1}^{i-1}{\frac{\left( \tilde{a}_{i,j}-p \right) \tilde{d}_i\tilde{d}_j}{1-2p}}}$  \tcp*{$\tilde{a}_{i,j}$ is the output after applying $RR$ to obfuscate $a_{i,j}$}
    
    $\left[ \mathrm{d}\right] $ $Y_1\leftarrow \left( \frac{1}{2}\sum_{i=1}^n{\tilde{d}_{i}^{2}}-\frac{n+2}{\varepsilon _{2}^{2}} \right) ^2-\frac{5n+4}{\varepsilon _{2}^{4}}$\;

    $\left[ \mathrm{d}\right] $ $\hat{q}_{ru}\left( G\right)\leftarrow X_1/M+Y_1/M^2$\;
    
    $\left[ \mathrm{d}\right] $ \textbf{return} $\hat{q}_{ru}\left( G\right)$

\end{algorithm}

Algorithm 1 shows our one-round LDP algorithm for computing assortativity factor $r_u$. It takes a network $G$ (represented as neighbor lists $\mathbf{a}_1,\mathbf{a}_2,\dots,\mathbf{a}_n \in \left\{ 0,1\right\} ^n$) and privacy budgets $\varepsilon_1,\varepsilon_2\in \mathbb{R}_{\ge 0}$ as inputs, and outputs a privacy-preserving estimate of the assortativity factor query $q_{ru}$. We denote this algorithm by $\mathbf{Local_{ru}}$ for brevity. 

First, user $v_i$ uses the Randomized Response mechanism $\mathcal{R}_{\varepsilon_1}^W$ that provides $\varepsilon_1$-LDP to obfuscate $a_{i,1}, \ldots, a_{i,i-1}$ (i.e., the bits with smaller user IDs in the neighbor list $\mathbf{a}_i$) (Line 2). In other words, we apply RR to the lower triangular part of the adjacency matrix $\mathbf{A}$. Meanwhile, user $v_i$ calculates his/her degree $d_i$, i.e., the number of ``1"s in $\mathbf{a}_i$ (Line 3), and adds noise $\mathrm{Lap}\left( \frac{1}{\varepsilon_2}\right)$ to $d_i$ (Line 4), where $\mathrm{Lap}\left( \frac{1}{\varepsilon_2}\right)$ is a Laplace random variable with mean 0 and scale $\frac{1}{\varepsilon_2}$. Then, user $v_i$ sends the noisy data $\tilde{\mathbf{R}}_i$ and $\tilde{d}_i$ to the data collector. Finally, the data collector estimates the assortativity factor $q_{ru}\left( G\right)$ using $\tilde{\mathbf{R}}_1, \dots, \tilde{\mathbf{R}}_n$ and $\tilde{d}_1, \dots, \tilde{d}_n$. Specifically, the privacy estimate $\hat{q}_{ru}\left( G\right)$ of $q_{ru}\left( G\right)$ is as follows
\begin{equation}
    \hat{q}_{ru}\left( G \right) =\frac{X_1}{M}-\frac{Y_1}{M^2},
    \label{eq:est1}
\end{equation}
where
\begin{subequations}
\label{eq:whole}
\begin{eqnarray}
X_1=\sum_{i=2}^n{\sum_{j=1}^{i-1}{\frac{\left( \tilde{a}_{i,j}-p \right) \tilde{d}_i\tilde{d}_j}{1-2p}}},\label{subeq:4-1-1-1}
\end{eqnarray}
\begin{equation}
Y_1=\left( \frac{1}{2}\sum_{i=1}^n{\tilde{d}_{i}^{2}}-\frac{n+2}{\varepsilon _{2}^{2}} \right) ^2-\frac{5n+4}{\varepsilon _{2}^{4}}.\label{subeq:4-1-1-2}
\end{equation}
\end{subequations}

Next, we show some theoretical properties of $\mathbf{Local_{ru}}$. First, we prove that $\mathbf{Local_{ru}}$ certainly provides differential privacy. 
\begin{theorem}
    $\mathbf{Local_{ru}}$ provides $\varepsilon$-edge LDP and $2\varepsilon$-edge DDP, where $\varepsilon=\varepsilon_1+\varepsilon_2$.
    \label{theorem:3}
\end{theorem}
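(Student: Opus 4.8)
The plan is to establish the two privacy guarantees in turn, deriving the DDP bound as a short consequence of the LDP one. First I would isolate, for a fixed user $v_i$, the local randomizer $\mathcal{R}_i$ that on input the neighbor list $\mathbf{a}_i$ returns the pair $(\tilde{\mathbf{R}}_i,\tilde{d}_i)$ computed in Lines~2--4 of Algorithm~\ref{algo:1}. The crucial structural fact is that the randomized-response output $\tilde{\mathbf{R}}_i$ and the noisy degree $\tilde{d}_i=d_i+\mathrm{Lap}(1/\varepsilon_2)$ are produced from independent random coins, so by sequential composition for LDP it suffices to bound the privacy loss of each component separately and add the two budgets.

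For the randomized-response component I would use Definition~\ref{def:RR} directly: given two neighbor lists $\mathbf{a}_i,\mathbf{a}_i'$ that differ in a single coordinate $k$, either $k<i$, in which case only the $k$-th bit of $\tilde{\mathbf{R}}_i$ changes distribution and, by the two probabilities in Definition~\ref{def:RR}, its likelihood ratio equals $e^{\varepsilon_1}$, or $k\ge i$, in which case $\tilde{\mathbf{R}}_i$ does not depend on bit $k$ and the ratio is $1$; either way this step is $\varepsilon_1$-edge LDP. For the Laplace component, the map $\mathbf{a}_i\mapsto d_i=\sum_j a_{i,j}$ has global sensitivity $1$ (flipping one bit moves the sum by exactly $1$), so by Definition~\ref{def:local Lap} adding $\mathrm{Lap}(1/\varepsilon_2)$ gives $\varepsilon_2$-edge LDP. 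Composing, $\mathcal{R}_i$ is $(\varepsilon_1+\varepsilon_2)$-edge LDP, i.e., $\varepsilon$-edge LDP; as this holds for every $i$, so does $\mathbf{Local_{ru}}$.

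For the DDP claim I would take two neighboring graphs $G,G'$ that differ in exactly one edge $(v_a,v_b)$ and note that this edit alters only the neighbor lists $\mathbf{a}_a$ and $\mathbf{a}_b$ (each in one bit), leaving every other user's input unchanged. Since the $n$ randomizers use independent randomness, the joint distribution over $(\mathcal{R}_1(G_1),\dots,\mathcal{R}_n(G_n))$ factorizes, so the ratio in Eq.~\eqref{eq:DDP} is the product of the per-user ratios: it is $1$ for the $n-2$ untouched users and at most $e^{\varepsilon}$ for each of $v_a$ and $v_b$ by the bound just proved. Hence the overall ratio is at most $e^{2\varepsilon}$, which is exactly $2\varepsilon$-edge DDP (with $\delta=0$, since every primitive used is pure DP).

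The step that needs the most care is the bitwise accounting for the randomized response combined with the split into lower- and upper-triangular coordinates: one must verify that a coordinate not reported because it lies on or above the diagonal never inflates the privacy loss, and that whenever such a coordinate differs between $\mathbf{a}_i$ and $\mathbf{a}_i'$ its effect is fully absorbed by the Laplace perturbation of $d_i$. Once this case analysis is in place, the remainder is a routine invocation of composition and of the independence of the users' random coins.
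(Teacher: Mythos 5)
Your proposal is correct and follows essentially the same route as the paper's own proof: bound the randomized-response component by $\varepsilon_1$ and the Laplace component by $\varepsilon_2$, combine via sequential composition (plus post-processing) for $\varepsilon$-edge LDP, and then obtain the $2\varepsilon$-edge DDP bound by factorizing the joint probability ratio over users and noting that a single edge change affects only the two endpoints' neighbor lists. Your extra case analysis on whether the flipped bit lies in the lower-triangular part is a slightly more explicit accounting than the paper gives, but it is the same argument.
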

\begin{proof}
    $\mathbf{Local}_{ru}$ applies RR to the lower triangular part of the adjacency matrix $\mathbf{A}$, ensuring that $\tilde{\mathbf{R}}_1,\dots,\tilde{\mathbf{R}}_n$ are protected under $\varepsilon_1$-edge LDP. Meanwhile, since $\tilde{d}_i=d_i+\text{Lap}\left( \frac{1}{\varepsilon_2} \right)$, the perturbed degrees $\tilde{d}_1,\dots,\tilde{d}_n$ are protected under $\varepsilon_2$-edge LDP. According to the sequential composition property of differential privacy \cite{dwork2014algorithmic}, the combined sequence $\tilde{\mathbf{R}}_1,\dots,\tilde{\mathbf{R}}_n,\tilde{d}_1,\dots,\tilde{d}_n$ are protected under $(\varepsilon_1+\varepsilon_2)$-edge LDP. Since $\hat{q}_{ru}\left( G \right)$ is derived from post-processing $\tilde{\mathbf{R}}_1, \dots, \tilde{\mathbf{R}}_n$ and $\tilde{d}_1, \dots, \tilde{d}_n$, $\hat{q}_{ru}\left( G \right)$ is protected under $(\varepsilon_1 + \varepsilon_2)$-edge LDP by the post-processing immunity property of differential privacy \cite{dwork2014algorithmic}. In a word, $\mathbf{Local_{ru}}$ provides $\varepsilon$-edge LDP, where $\varepsilon=\varepsilon_1+\varepsilon_2$.

    Under the assumption that each individual knows only his/her friends, the local view $G_i$ of user $v_i$ is equivalent to $\mathbf{a}_i$, where $i=1,2,\dots,n$. Thus, according to the definition of DDP (see Definition \ref{def:edge DDP}), for any two neighboring graphs $G,G^\prime$ that differ in one edge, we have
    \begin{equation}
        \begin{aligned}
        &\quad \frac{Pr\left( \mathcal{R}_1\left( G_1 \right) \in S_1,...,\mathcal{R}_n\left( G_n \right) \in S_n \right)}{Pr\left( \mathcal{R}_1\left( G_{1}^{'} \right) \in S_1,...,\mathcal{R}_n\left( G_{n}^{'} \right) \in S_n \right)} \\
        &=\frac{\prod_{i=1}^n{Pr\left( \mathcal{R}_i\left( G_i \right) \in S_i \right)}}{\prod_{i=1}^n{Pr\left( \mathcal{R}_i\left( G_{i}^{'} \right) \in S_i \right)}} \\
        &=\frac{Pr\left( \mathcal{R}_x\left( \mathbf{a}_x \right) \in S_x \right) \cdot Pr\left( \mathcal{R}_y\left( \mathbf{a}_y \right) \in S_y \right)}{Pr\left( \mathcal{R}_x\left( \mathbf{a}_{x}^{'} \right) \in S_x \right) \cdot Pr\left( \mathcal{R}_y\left( \mathbf{a}_{y}^{'} \right) \in S_y \right)} \\
        &\le e^{2\varepsilon}.
        \end{aligned}
    \end{equation}
    Therefore, $\mathbf{Local_{ru}}$ provides $2\varepsilon$-edge DDP.
\end{proof}

The following theorem demonstrates that $\mathbf{Local_{ru}}$ provides an unbiased estimate of the assortativity factor.
\begin{theorem}
    The estimate $\hat{q}_{ru}\left( G\right)$ produced by $\mathbf{Local_{ru}}$ is unbiased, i.e., $\mathbb{E}\left( \hat{q}_{ru}\left( G\right)\right)=q_{ru}\left( G\right)$.
    \label{theorem:4}
\end{theorem}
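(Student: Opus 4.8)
The plan is to use linearity of expectation together with the independence of the two noise sources---Warner's RR applied to the lower triangle of $\mathbf{A}$ and the Laplace noise added to the degrees---so that the claim splits into two separate unbiasedness checks, one for $X_1$ and one for $Y_1$. First I would record the elementary identity $\sum_{e_{ij}\in E}\tfrac{1}{2}(d_i+d_j)=\tfrac{1}{2}\sum_{i=1}^n d_i^2$ (each vertex $v_i$ lies on $d_i$ edges, each contributing $\tfrac{1}{2}d_i$), which rewrites the target of Eq.\eqref{eq:3-2-2} as $q_{ru}(G)=\tfrac{1}{M}\sum_{e_{ij}\in E}d_i d_j-\tfrac{1}{M^2}\left(\tfrac{1}{2}\sum_{i=1}^n d_i^2\right)^2$. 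Since $\hat{q}_{ru}(G)=X_1/M-Y_1/M^2$ by Eq.\eqref{eq:est1}, it then suffices to prove $\mathbb{E}[X_1]=\sum_{e_{ij}\in E}d_i d_j$ and $\mathbb{E}[Y_1]=\left(\tfrac{1}{2}\sum_{i=1}^n d_i^2\right)^2$.

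For $X_1$ I would argue term by term. Fix $i>j$; the RR output $\tilde{a}_{i,j}$ is independent of every perturbed degree, so $\mathbb{E}\!\left[\tfrac{(\tilde{a}_{i,j}-p)\tilde{d}_i\tilde{d}_j}{1-2p}\right]=\tfrac{\mathbb{E}[\tilde{a}_{i,j}]-p}{1-2p}\cdot\mathbb{E}[\tilde{d}_i\tilde{d}_j]$. From Definition~\ref{def:RR}, $\mathbb{E}[\tilde{a}_{i,j}]=(1-p)a_{i,j}+p(1-a_{i,j})=(1-2p)a_{i,j}+p$, so the first factor is exactly $a_{i,j}$ (here $1-2p\neq 0$ since $\varepsilon_1>0$); and because $i\neq j$ the Laplace noises on $d_i$ and $d_j$ are independent and mean zero, giving $\mathbb{E}[\tilde{d}_i\tilde{d}_j]=d_i d_j$. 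Summing over the lower triangle yields $\mathbb{E}[X_1]=\sum_{i=2}^n\sum_{j=1}^{i-1}a_{i,j}d_i d_j=\sum_{e_{ij}\in E}d_i d_j$.

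For $Y_1$ I would write $\tilde{d}_i=d_i+L_i$ with i.i.d.\ $L_i\sim\mathrm{Lap}(1/\varepsilon_2)$ and use the moments $\mathbb{E}[L_i]=\mathbb{E}[L_i^3]=0$, $\mathbb{E}[L_i^2]=2/\varepsilon_2^2$, $\mathbb{E}[L_i^4]=24/\varepsilon_2^4$. Expanding gives $\mathbb{E}[\tilde{d}_i^2]=d_i^2+2/\varepsilon_2^2$ and $\mathbb{V}[\tilde{d}_i^2]=8d_i^2/\varepsilon_2^2+20/\varepsilon_2^4$. Putting $S=\sum_{i=1}^n d_i^2$ and $Z=\tfrac{1}{2}\sum_{i=1}^n\tilde{d}_i^2-\tfrac{n+2}{\varepsilon_2^2}$, independence across $i$ yields $\mathbb{E}[Z]=\tfrac{S}{2}-\tfrac{2}{\varepsilon_2^2}$ and $\mathbb{V}[Z]=\tfrac{2S}{\varepsilon_2^2}+\tfrac{5n}{\varepsilon_2^4}$, hence $\mathbb{E}[Z^2]=\mathbb{V}[Z]+(\mathbb{E}[Z])^2=\tfrac{S^2}{4}+\tfrac{5n+4}{\varepsilon_2^4}$, where the cross term $-\tfrac{2S}{\varepsilon_2^2}$ cancels exactly. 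Subtracting the correction $\tfrac{5n+4}{\varepsilon_2^4}$ built into $Y_1$ leaves $\mathbb{E}[Y_1]=\left(\tfrac{S}{2}\right)^2$. Combining via linearity, $\mathbb{E}[\hat{q}_{ru}(G)]=\tfrac{1}{M}\mathbb{E}[X_1]-\tfrac{1}{M^2}\mathbb{E}[Y_1]=q_{ru}(G)$.

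The main obstacle is the $Y_1$ step. Unlike $X_1$, squaring $\tfrac{1}{2}\sum_i\tilde{d}_i^2$ produces diagonal fourth-moment contributions together with off-diagonal products of second moments, and one must check that the two hand-picked constants $\tfrac{n+2}{\varepsilon_2^2}$ and $\tfrac{5n+4}{\varepsilon_2^4}$ are precisely what is needed to annihilate both the linear-in-$d_i^2$ bias and the pure-noise bias; this requires careful fourth-moment bookkeeping for the Laplace variables. The $X_1$ part, by contrast, is immediate once independence of the RR and Laplace randomness is invoked, and the final assembly is just linearity of expectation.
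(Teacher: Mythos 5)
Your proposal is correct and follows essentially the same route as the paper's proof: both debias the RR term via $\frac{\mathbb{E}[\tilde{a}_{i,j}]-p}{1-2p}=a_{i,j}$ together with the independence and zero mean of the Laplace noises to get $\mathbb{E}[X_1]=\sum_{e_{ij}\in E}d_i d_j$, and both handle $Y_1$ by computing $\mathbb{E}[Z^2]=\mathbb{V}[Z]+(\mathbb{E}[Z])^2$ for $Z=\tfrac{1}{2}\sum_i\tilde{d}_i^2-\tfrac{n+2}{\varepsilon_2^2}$ with the same Laplace second and fourth moments, showing the constants cancel the noise bias and leave $\bigl(\tfrac{1}{2}\sum_i d_i^2\bigr)^2=\bigl[\sum_{e_{ij}\in E}\tfrac{1}{2}(d_i+d_j)\bigr]^2$. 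Your only cosmetic difference is making the handshake identity and the cancellation of the cross term explicit, which the paper leaves implicit.
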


As mentioned above, we also determine the steadiness of $\mathbf{local_{ru}}$, which is shown as follows. 
\begin{theorem}
    When $\varepsilon_1$, $\varepsilon_2$ are constants, the estimate $\hat{q}_{ru}\left( G\right)$ produced by $\mathbf{Local_{ru}}$ provides the following utility guarantee:
    \begin{align}
        \mathrm{MSE}\left( \hat{q}_{ru}\left( G \right) \right) &=O\left( \frac{n^3d_{\max}^{2}+n^2d_{\max}^{4}}{M^2}+\frac{n^3d_{\max}^{6}}{M^4} \right).
    \end{align}
    \label{theorem:5}
\end{theorem}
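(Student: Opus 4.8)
The plan is to bound $\mathrm{MSE}(\hat{q}_{ru}(G)) = \mathbb{E}[(\hat{q}_{ru}(G) - q_{ru}(G))^2]$ by exploiting the fact (already established in Theorem \ref{theorem:4}) that $\hat{q}_{ru}(G)$ is unbiased, so the MSE equals the variance $\mathbb{V}(\hat{q}_{ru}(G))$. Writing $\hat{q}_{ru}(G) = X_1/M - Y_1/M^2$, I would first split $\mathbb{V}(\hat{q}_{ru}(G))$ using $\mathbb{V}(A-B) \le 2\mathbb{V}(A) + 2\mathbb{V}(B)$, reducing the task to separately bounding $\mathbb{V}(X_1)/M^2$ and $\mathbb{V}(Y_1)/M^4$. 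Since $M$ is treated as a known constant, the denominators $M^2$ and $M^4$ simply pass through, and the heart of the argument is obtaining $\mathbb{V}(X_1) = O(n^3 d_{\max}^2 + n^2 d_{\max}^4)$ and $\mathbb{V}(Y_1) = O(n^3 d_{\max}^6)$.

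For $\mathbb{V}(X_1)$ with $X_1 = \sum_{i=2}^n \sum_{j=1}^{i-1} \frac{(\tilde{a}_{i,j}-p)\tilde{d}_i \tilde{d}_j}{1-2p}$: I would expand $\mathbb{V}(X_1) = \sum \sum \mathrm{Cov}$ over pairs of index pairs $(i,j)$ and $(k,l)$. The key structural observations are that $\tilde{a}_{i,j} - p$ has mean $a_{i,j} - p$ (not zero), that distinct $\tilde{a}_{i,j}$ are independent, and that the $\tilde{d}_i = d_i + \mathrm{Lap}(1/\varepsilon_2)$ are mutually independent with $\mathbb{E}[\tilde{d}_i] = d_i$ and constant-order variance. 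Covariance terms are nonzero only when the two index pairs share an index or coincide. A diagonal term $(i,j) = (k,l)$ contributes $O(\mathbb{E}[\tilde{d}_i^2 \tilde{d}_j^2]) = O(d_i^2 d_j^2 + \text{lower order})$, and there are $O(n^2)$ such terms but each involving $\tilde{d}_i^2 \tilde{d}_j^2$; more careful bookkeeping separating the "variance of $\tilde a_{i,j}$" part (which is $O(1)$ per term, giving $\sum_{i,j} d_i^2 d_j^2 / M^2$-type contributions that must collapse to the stated bound) from the "cross" part involving shared degree variables (giving the $n^3 d_{\max}^2$ term from $O(n)$ shared-degree configurations each summed over $O(n^2)$ of the remaining free indices) is where the two summands $n^3 d_{\max}^2$ and $n^2 d_{\max}^4$ arise. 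I would bound each $d_i \le d_{\max}$ and count the number of surviving index configurations, being careful to use $\sum_i d_i = 2M$ where it tightens the bound but falling back to $d_{\max}$ bounds for the worst-case statement.

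For $\mathbb{V}(Y_1)$ with $Y_1 = \big(\tfrac{1}{2}\sum_i \tilde{d}_i^2 - \tfrac{n+2}{\varepsilon_2^2}\big)^2 - \tfrac{5n+4}{\varepsilon_2^4}$: the subtracted constants are deterministic, so $\mathbb{V}(Y_1) = \mathbb{V}\big((\tfrac12 S - c)^2\big)$ where $S = \sum_i \tilde{d}_i^2$ and $c$ is a constant. I would set $Z = \tfrac12 S - c$, note $\mathbb{E}[Z]$ is of order $\sum d_i^2 = O(n d_{\max}^2)$ (modulo the centering constants that correct for Laplace second moments), and use $\mathbb{V}(Z^2) = \mathbb{E}[Z^4] - (\mathbb{E}[Z^2])^2$, which after expanding in terms of the centered variable $Z - \mathbb{E}[Z]$ is controlled by $\mathbb{E}[(Z-\mathbb{E}Z)^2]\cdot(\mathbb{E}Z)^2$ plus higher central moments. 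Since $Z - \mathbb{E}[Z] = \tfrac12\sum_i(\tilde{d}_i^2 - \mathbb{E}[\tilde d_i^2])$ is a sum of $n$ independent bounded-moment terms each of variance $O(d_{\max}^2)$ (the $\tilde d_i^2$ have variance dominated by $d_i^2 \cdot \mathbb{V}(\mathrm{Lap}) = O(d_{\max}^2)$ for constant $\varepsilon_2$), we get $\mathbb{V}(Z) = O(n d_{\max}^2)$, hence $\mathbb{V}(Z^2) = O(\mathbb{V}(Z)(\mathbb{E}Z)^2) = O(n d_{\max}^2 \cdot n^2 d_{\max}^4) = O(n^3 d_{\max}^6)$; the fourth-moment leftover terms are lower order. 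Dividing by $M^4$ gives the $\tfrac{n^3 d_{\max}^6}{M^4}$ term.

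The main obstacle I anticipate is the covariance bookkeeping in $\mathbb{V}(X_1)$: because $\tilde a_{i,j} - p$ is \emph{not} mean-zero, the cross terms where two distinct index pairs share exactly one degree variable $\tilde d_i$ do not vanish, and one must carefully separate the contribution of $\mathbb{V}(\tilde a_{i,j})$ (the "$a$-noise" part, $O(1)$ per surviving term) from the contribution where both $\tilde a$ factors are replaced by their means $a_{i,j} - p \in \{-p, 1-p\}$ but the shared $\tilde d_i$ supplies a variance (the "$d$-noise" part). Getting a clean accounting that yields exactly $n^3 d_{\max}^2 + n^2 d_{\max}^4$ — rather than a cruder $n^4 d_{\max}^4$ — requires noticing that the $n^2$-many diagonal terms carry $O(d_{\max}^4)$ each while the $n^3$-many one-shared-index terms carry only $O(d_{\max}^2)$ each (the shared-index constraint kills one degree factor's worth of magnitude when the $\tilde a$ factors are at their means and $\sum a_{i,j}$-type sums are used), and that all truly off-diagonal ($\ge 2$ distinct pairs, no shared index) contributions vanish by independence. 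I would organize this as a short case analysis on the overlap pattern of $\{(i,j),(k,l)\}$.
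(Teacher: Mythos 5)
Your proposal is correct and follows essentially the same route as the paper: unbiasedness reduces the MSE to the variance, the variance is split between the $X_1/M$ and $Y_1/M^2$ parts, and $\mathbb{V}\left( X_1 \right)$ is bounded by exactly the paper's case analysis on overlapping index pairs (diagonal terms give $O\left( n^2 d_{\max}^{4} \right)$, one-shared-index terms give $O\left( n^3 d_{\max}^{2} \right)$, and disjoint pairs vanish by independence), yielding the stated $O\left( \frac{n^3 d_{\max}^{2}+n^2 d_{\max}^{4}}{M^2} \right)$ contribution. The only difference is organizational: for $\mathbb{V}\left( Y_1 \right)$ you center $Z=\frac{1}{2}\sum_{i=1}^n \tilde{d}_{i}^{2}-c$ and bound $\mathbb{V}\left( Z^2 \right)$ by a constant times $\left( \mathbb{E} Z \right)^2 \mathbb{V}\left( Z \right)$ plus a lower-order fourth central moment, whereas the paper expands the second moment and squared mean in full using Laplace moments up to order eight; both give $O\left( \frac{n^3 d_{\max}^{6}}{M^4} \right)$.
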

Since almost all real-world networks are sparse, i.e. $M=\Theta \left( n\right) $ \cite{albert2002statistical}, it follows that we have $\mathrm{MSE}\left( \hat{q}_{ru}\left( G \right) \right)\leq O\left( n^2\right)$, if quantity $d_{\max}\leq O( n^{1/2})$. It is well known that this condition $d_{\max}\leq n^{1/2}$ is commonly seen in real-world networks \cite{newman2018networks}. 

Finally, we analyze the time and space complexity of $\mathbf{Local_{ru}}$.
\begin{theorem}
    The time complexity of $\mathbf{Local_{ru}}$ is $O\left( n^2\right)$, and the space complexity is $O\left( n^2\right)$.
    \label{theorem:6}
\end{theorem}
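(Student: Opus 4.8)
The plan is to bound the running time by accounting separately for the local computation performed by each user $v_i$ in the \textbf{for} loop and for the post-processing performed by the data collector, and then to combine the two, since the algorithm executes these phases sequentially. For the local phase, user $v_i$ first applies the randomized response $\mathcal{R}_{\varepsilon_1}^W$ independently to the $i-1$ entries $a_{i,1},\dots,a_{i,i-1}$; each application draws a single Bernoulli bit in $O(1)$ time, so this step costs $O(i-1)$. Computing $d_i=\sum_{j=1}^n a_{i,j}$ costs $O(n)$, drawing $\mathrm{Lap}(1/\varepsilon_2)$ and forming $\tilde d_i$ costs $O(1)$, and transmitting $\tilde{\mathbf{R}}_i$ (length $i-1$) together with the scalar $\tilde d_i$ costs $O(i-1)$. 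Summing over $i=1,\dots,n$ gives $\sum_{i=1}^n\bigl(O(i-1)+O(n)\bigr)=O(n^2)$ for the whole local phase.

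For the data-collector phase, forming $X_1$ is a double summation over the index set $\{(i,j):2\le i\le n,\ 1\le j\le i-1\}$, which has $\binom{n}{2}=O(n^2)$ terms, each evaluated with a constant number of arithmetic operations, hence $O(n^2)$. Forming $Y_1$ requires one summation of $n$ squared terms plus $O(1)$ further operations, i.e.\ $O(n)$, and computing $\hat q_{ru}(G)=X_1/M+Y_1/M^2$ and returning it is $O(1)$. Thus the collector contributes $O(n^2)$, and adding the two phases yields an overall time complexity of $O(n^2)+O(n^2)=O(n^2)$, as claimed.

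For the space complexity, each user $v_i$ stores its neighbor list $\mathbf{a}_i\in\{0,1\}^n$ and the noisy report $\tilde{\mathbf{R}}_i$ of length $i-1$, i.e.\ $O(n)$ per user. The data collector must hold the full collection $\tilde{\mathbf{R}}_1,\dots,\tilde{\mathbf{R}}_n$ to evaluate the bilinear form $X_1$, occupying $\sum_{i=1}^n O(i-1)=O(n^2)$ space in the worst (dense) case, together with the $n$ noisy degrees $\tilde d_1,\dots,\tilde d_n$ using $O(n)$ space; all scalar accumulators ($X_1$, $Y_1$, $\hat q_{ru}$, and any partial sums) need only $O(1)$ space. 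Hence the total space is $O(n^2)$.

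The only real subtlety — and the point I would be careful to state explicitly — is the accounting model: the bound is for aggregate work over all $n$ users plus the collector, and the $O(n^2)$ space is genuinely needed because $X_1$ couples reports $\tilde a_{i,j}$, $\tilde d_i$, $\tilde d_j$ originating from different users, so the collector cannot discard a report before the cross terms are computed. For sparse inputs one could instead transmit and store only the reported $1$-bits and reduce both costs, but the stated $O(n^2)$ figures are the worst-case guarantees, matching the size of the lower-triangular part of $\mathbf{A}$ that the algorithm operates on.
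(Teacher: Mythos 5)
Your proposal is correct and follows essentially the same route as the paper, which simply notes that obfuscating the $O(n^2)$ bits of the lower-triangular part of $\mathbf{A}$ dominates the time and that storing $\mathbf{A}$ (equivalently, all reports $\tilde{\mathbf{R}}_1,\dots,\tilde{\mathbf{R}}_n$) dominates the space. Your version is merely more detailed, additionally verifying that the collector's evaluation of $X_1$ and $Y_1$ stays within the same $O(n^2)$ bound.
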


$\mathbf{Local_{ru}}$ requires obfuscating each bit in the lower triangular part of the adjacency matrix of network $G$, resulting in a time complexity of $O\left( n^2\right)$. Additionally, this algorithm needs $O\left( n^2\right)$ space to store the adjacency matrix $\mathbf{A}$, hence the space complexity is also $O\left( n^2\right)$.

By far, we have finished the development of $\mathbf{Local_{ru}}$ and also analyzed it in detail. The next subsection aims to enhance the level of privacy-preserving by bringing the Shuffle Model.

\subsection{Two-Round DP Algorithm for Assortativity Factor with Shuffling}
Generally, multi-round LDP algorithms, which involve multiple rounds of interaction between each user and the data collector, can support more powerful queries and produce more accurate results. Hence, we further design a two-round algorithm for calculating the assortativity factor and, in the meantime, introduce the Shuffle Model to amplify privacy.


\begin{algorithm}[h]
    \caption{$\mathbf{Shuffle_{ru}}$. $\left[ \mathrm{v_i}\right] $, $\left[ \mathrm{s}\right]$ and $\left[ \mathrm{d}\right]$ represent that the process is run by user $v_i$, the shuffler and the data collector, respectively.}\label{algo:2}
    \SetAlgoLined

    \KwData{Neighbor lists $\mathbf{a}_1,\dots,\mathbf{a}_n$ $\in \left\{ 0,1\right\}^n$, privacy budget $\varepsilon \in \mathbb{R}_{\ge 0}$, failure probability $\delta \in \left[ 0,1\right]$, parameter for privacy budget allocation $\alpha \in \left( 0,1\right)$}
    \KwResult{Private estimate $\hat{q}_{ru}\left( G\right)$ of $q_{ru}\left( G\right)$}
    \BlankLine

    $\varepsilon_0 \leftarrow \text{LocalPrivacyBudget}(n,\varepsilon,\delta)$\;
    
    \tcc{First round}
    \For{$i=1$ \KwTo $n$}{
        $\left[ \mathrm{v_i}\right] $ $d_i \leftarrow \sum_{j=1}^n{a_{i,j}}$\;
        $\left[ \mathrm{v_i}\right] $ $\tilde{d}_i \leftarrow d_i+\mathrm{Lap}\left( \frac{1}{\alpha \varepsilon_0}\right)$\;
        $\left[ \mathrm{v_i}\right] $ Send $\tilde{d}_i$ to the data collector\;
    }

    \tcc{Second round}
    $\left[ \mathrm{d}\right] $ Return the sequence $\tilde{d}_1,\dots,\tilde{d}_n$ to each user\;

    $\left[ \mathrm{v_i}\right] $ $\tilde{\mathbf{R}}_i\leftarrow \left( \mathcal{R}_{\varepsilon_1}^W \left( a_{i,1}\right) ,\dots ,\mathcal{R}_{\varepsilon_1}^W \left( a_{i,i-1}\right) \right)$ \tcp*{Apply $RR$ that provides $\varepsilon_1$-LDP to the lower triangular part of adjacency matrix $\mathbf{A}$, where $\varepsilon_1 = \left( 1-\alpha \right) \varepsilon_0$.}
    
    \For{$i=2$ \KwTo $n$}{
        $\left[ \mathrm{v_i}\right] $ $\hat{r}_i \leftarrow d_i\sum_{j=1}^{i-1}{\frac{\left( \tilde{a}_{i,j}-p\right) \tilde{d}_j}{1-2p}}$\;
        $\left[ \mathrm{v_i}\right] $ Send $\hat{r}_i$ to the shuffler\;
    }
    $\left[ \mathrm{s}\right]$ Sample a random permutation $\pi$ over $\{ 1,2,\dots,n\}$\;
    $\left[ \mathrm{s}\right]$ Send $\hat{r}_{\pi\left( 1\right)},\dots,\hat{r}_{\pi\left( n\right)}$ to the data collector\;

    $\left[ \mathrm{d}\right] $ $X_2\leftarrow \sum_{i=2}^n{\hat{r}_i}$\;
    
    $\left[ \mathrm{d}\right] $ $Y_2\leftarrow \left( \frac{1}{2}\sum_{i=1}^n{\tilde{d}_{i}^{2}}-\frac{n+2}{\varepsilon _{0}^{2}} \right) ^2-\frac{5n+4}{\varepsilon _{0}^{4}}$\;
    
    $\left[ \mathrm{d}\right] $ $\hat{q}_{ru}\left( G\right)\leftarrow X_2/M+Y_2/M^2$\;
    
    $\left[ \mathrm{d}\right] $ \textbf{return} $\hat{q}_{ru}\left( G\right)$
\end{algorithm}

Algorithm 2 (denoted by $\mathbf{Shuffle}_{ru}$) shows the two-round DP algorithm for the assortativity factor with shuffling. $\mathbf{Shuffle}_{ru}$ takes as input a graph $G$ (represented as neighbor lists $\mathbf{a}_1,\mathbf{a}_2,\dots,\mathbf{a}_n \in \left\{ 0,1\right\} ^n$), a privacy budget $\varepsilon$, a failure probability $\delta$ and a parameter for privacy budget allocation $\alpha$.

When running $\mathbf{Shuffle}_{ru}$, the first step is to call the function $\text{LocalPrivacyBudget}$ to compute a local privacy budget $\varepsilon_0$ from parameters $n,\varepsilon,\delta$ (line 1). Specifically, this function produces $\varepsilon_0$ such that $\varepsilon$ is a closed-form or numerical upper bound in the shuffle model with $n$ users. Given $\varepsilon_0$, the closed-form bound can be computed by Eq.\eqref{eq:shuffle}, while the numerical bound can be obtained using the open source code in \cite{feldman2022hiding}\footnote{https://github.com/apple/ml-shuffling-amplification}. 
Thus, we can easily compute $\varepsilon_0$ from $\varepsilon$ by computing a lookup table for the privacy budget pairs $(\varepsilon_0, \varepsilon)$ in advance.

In the first round, each user $v_i$ calculates his/her degree $d_i$ (line 3), then computes the noisy degree $\tilde{d}_i = d_i + \text{Lap}\left( \frac{1}{\alpha \varepsilon_0} \right)$ (line 4). Finally, $v_i$ sends $\tilde{d}_i$ to the data collector.

In the second round, the data collector returns the noisy degree $\tilde{d}_1, \dots, \tilde{d}_n$ to each user. Each user $v_i$ then obfuscates the bits with smaller user IDs in his/her neighbor list $\mathbf{a}_i$ using the random response mechanism $\mathcal{R}_{\varepsilon_1}^W$, and calculates $\hat{r}_i$ based on both $\tilde{d}_1, \dots, \tilde{d}_n$ and the obfuscated bits in $\mathbf{a}_i$ (lines 8 to 10). Next, $\hat{r}_i$ is sent to the Shuffler which randomly permutes $\hat{r}_1, \dots, \hat{r}_n$ and sends the permuted data $\hat{r}_{\pi(1)}, \dots, \hat{r}_{\pi(n)}$ to the data collector.

Finally, the data collector computes the estimate $\hat{q}_{ru}\left( G\right)$ of $q_{ru}\left( G\right)$ from $\tilde{d}_1, \dots, \tilde{d}_n$ and $\hat{r}_{\pi(1)}, \dots, \hat{r}_{\pi(n)}$. Note that the estimate \( \hat{q}_{ru} \) only involves summing \( \hat{r}_{\pi(1)}, \ldots, \hat{r}_{\pi(n)} \) while it is not necessary to know the permutation $\pi$. Thus, the expression for $\hat{q}_{ru}\left( G\right)$ is as follows
\begin{equation}
    \hat{q}_{ru}\left( G \right) =\frac{X_2}{M}-\frac{Y_2}{M^2},
\end{equation}
where
\begin{subequations}
\label{eq:whole}
\begin{eqnarray}
X_2=\sum_{i=2}^n{\hat{r}_i}=\sum_{i=2}^n{d_i\sum_{j=1}^{i-1}{\frac{\left( \tilde{a}_{i,j}-p\right) \tilde{d}_j}{1-2p}}},\label{subeq:4-2-1-1}
\end{eqnarray}
\begin{equation}
Y_2=\left( \frac{1}{2}\sum_{i=1}^n{\tilde{d}_{i}^{2}}-\frac{n+2}{\varepsilon _{0}^{2}} \right) ^2-\frac{5n+4}{\varepsilon _{0}^{4}}.\label{subeq:4-2-1-2}
\end{equation}
\end{subequations}

As above, we need to make a detailed analysis of $\mathbf{Shuffle_{ru}}$. First, let us state that $\mathbf{Shuffle_{ru}}$ provides the following privacy guarantee.
\begin{theorem}
    $\mathbf{Shuffle_{ru}}$ provides $\left( \varepsilon,\delta \right)$-edge DP.
    \label{theorem:7}
\end{theorem}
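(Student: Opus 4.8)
The plan is to decompose the algorithm into its two rounds, account for the privacy cost of each, and combine them by sequential composition. In the first round, each user $v_i$ releases only $\tilde d_i = d_i + \mathrm{Lap}(1/(\alpha\varepsilon_0))$. Since flipping one edge in the global graph changes exactly two users' degrees by $1$ each, and each user reports independently, I would argue (exactly as in the second half of the proof of Theorem \ref{theorem:3}) that the degree-release step satisfies $2\alpha\varepsilon_0$-edge DDP. Actually, the cleaner route here is to observe that what we are really invoking is LDP per user: the map $d_i \mapsto \tilde d_i$ has sensitivity $1$, so it is $\alpha\varepsilon_0$-edge LDP for each user, and we will feed this into the shuffle bound together with the second round.

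The crux is the second round. Each user $v_i$ outputs $\hat r_i = d_i\sum_{j<i}\frac{(\tilde a_{i,j}-p)\tilde d_j}{1-2p}$, which is a deterministic post-processing of (a) the randomized-response bits $\tilde a_{i,1},\dots,\tilde a_{i,i-1}$ produced locally by $\mathcal{R}^W_{\varepsilon_1}$ with $\varepsilon_1 = (1-\alpha)\varepsilon_0$, and (b) the public noisy-degree vector $\tilde d_1,\dots,\tilde d_n$ received from the collector. Conditioned on the first-round output, the only fresh randomness and the only fresh dependence on the private neighbor list is through the RR bits, so the local randomizer $\mathbf{a}_i \mapsto \hat r_i$ is $(1-\alpha)\varepsilon_0$-edge LDP by the LDP guarantee of Warner's RR and post-processing immunity. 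Now I would apply Lemma \ref{lemma:shuffle} (privacy amplification by shuffling of Feldman et al.): since the $\hat r_i$ are passed through the shuffler before reaching the collector, and each is generated by an $\varepsilon_0'$-LDP randomizer with $\varepsilon_0' \le \varepsilon_0$ (one can take the looser $\varepsilon_0$ as the per-message local budget, absorbing the split), the shuffled sequence $\hat r_{\pi(1)},\dots,\hat r_{\pi(n)}$ is $(\varepsilon_S,\delta)$-DP for the $\varepsilon_S$ given by Eq.\eqref{eq:shuffle} (or the tighter numerical bound), provided the side condition $\varepsilon_0 \le \log\!\big(n/(16\log(2/\delta))\big)$ holds. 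By the design of $\text{LocalPrivacyBudget}$ (line 1), $\varepsilon_0$ is chosen precisely so that this amplified value equals the target $\varepsilon$.

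Finally I would assemble the pieces. The first round contributes $\alpha\varepsilon_0$ to the edge-level budget (via its per-user LDP guarantee, which composes trivially across the two users whose degrees move), the second round contributes the amplified $(\varepsilon_S,\delta)$ after shuffling, and by sequential composition of DP mechanisms \cite{dwork2014algorithmic} the whole interaction is $(\alpha\varepsilon_0 + \varepsilon_S,\ \delta)$-edge DP. Since $\text{LocalPrivacyBudget}$ is calibrated so that $\alpha\varepsilon_0$ plus the shuffle-amplified cost of the remaining $(1-\alpha)\varepsilon_0$ local budget totals $\varepsilon$, we conclude $\mathbf{Shuffle_{ru}}$ satisfies $(\varepsilon,\delta)$-edge DP; post-processing immunity then extends this to the released estimate $\hat q_{ru}(G)$. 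The main obstacle I anticipate is bookkeeping the budget split cleanly: one must be careful that the public degrees revealed in round one do not "double count" against round two (they do not, because round two's randomizer is measured conditionally on those degrees and only RR noise is private there), and that the side condition of Lemma \ref{lemma:shuffle} is actually satisfied by the $\varepsilon_0$ that $\text{LocalPrivacyBudget}$ returns — this is where the argument is most likely to need an explicit hypothesis or a remark pinning down the regime of $n,\varepsilon,\delta$.
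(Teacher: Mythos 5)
Your decomposition departs from the paper's, and as executed it does not establish the stated guarantee. The paper's proof treats each user's \emph{entire} two-round interaction as one local randomizer: $\tilde d_i$ costs $\alpha\varepsilon_0$, the RR bits cost $(1-\alpha)\varepsilon_0$, and by sequential composition the per-user report $\hat r_i$ is $\varepsilon_0$-edge LDP; the Feldman et al.\ bound is then applied to the \emph{full} $\varepsilon_0$, so the final privacy is exactly $\varepsilon=g(n,\varepsilon_0,\delta)$ --- which is precisely how $\text{LocalPrivacyBudget}$ inverts the bound. You instead compose an un-amplified $\alpha\varepsilon_0$ (round one, degrees sent in the clear) with the amplified cost $\varepsilon_S$ of only the $(1-\alpha)\varepsilon_0$ round-two budget, and then assert that $\text{LocalPrivacyBudget}$ is calibrated so that $\alpha\varepsilon_0+\varepsilon_S=\varepsilon$. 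That is not the calibration in Algorithm 2 or in the paper: $\varepsilon_0$ is chosen so that amplifying the whole $\varepsilon_0$ gives $\varepsilon$, and in the utility analysis $\varepsilon_0=\log n+O(1)$ while $\varepsilon$ is a small constant, so your total $\alpha\varepsilon_0+\varepsilon_S\approx\alpha\log n$ vastly exceeds the target $\varepsilon$. With your accounting the theorem, for the algorithm as specified, simply does not follow; you would have to either redefine the budget-allocation function (changing the algorithm) or adopt the paper's route of amplifying the combined per-user budget. (To be fair, your instinct that the unshuffled $\tilde d_i$'s should not enjoy amplification is a legitimate concern that the paper's own proof passes over quickly, since $Y_2$ is computed directly from them; but that is a criticism of the paper, not a proof of its theorem.)

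A second concrete flaw: you claim the round-two map $\mathbf{a}_i\mapsto\hat r_i$ is $(1-\alpha)\varepsilon_0$-edge LDP "by post-processing of the RR bits and the public noisy degrees." But line 9 sets $\hat r_i=d_i\sum_{j<i}\frac{(\tilde a_{i,j}-p)\tilde d_j}{1-2p}$ with the \emph{true} degree $d_i$ as a multiplier, so $\hat r_i$ depends on the private neighbor list through more than the RR bits and post-processing immunity does not apply (e.g.\ $d_i=0$ forces $\hat r_i=0$ while $d_i=1$ gives a continuous, almost surely nonzero output, so the likelihood ratio of the conditional randomizer is unbounded). The paper's proof instead charges the degree dependence to the $\alpha\varepsilon_0$ Laplace budget and composes it with the RR budget at the user level, invoking the independence of $d_i$ from the noisy sum, before shuffling the resulting $\varepsilon_0$-LDP reports. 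Any repaired version of your argument has to account for that $d_i$ factor explicitly rather than absorbing it into post-processing.
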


\begin{proof}
    Since $\tilde{d}_i=d_i+\text{Lap}\left( \frac{1}{\alpha \varepsilon_0} \right)$, $\tilde{d}_1,\dots ,\tilde{d}_n$ are protected under $\alpha \varepsilon_0$-edge LDP. The lower triangular part of the adjacency matrix $\mathbf{A}$ is obfuscated using RR, so $\tilde{\mathbf{R}}_1,\dots,\tilde{\mathbf{R}}_n$ are protected under $\varepsilon_1$-edge LDP, where $\varepsilon_1=\left( 1-\alpha \right) \varepsilon_0$. Note that $d_i$ is independent of $\sum_{j=1}^{i-1}{\frac{(\tilde{a}_{i,j}-p)\tilde{d}_j}{1-2p}}$, by the sequential composition and post-processing immunity of DP, $\hat{r}_1,\dots ,\hat{r}_n$ is protected under $\varepsilon_0$-edge LDP. After shuffling, the shuffled sequence $\hat{r}_{\pi \left( 1 \right)},\hat{r}_{\pi \left( 2 \right)},\dots,\hat{r}_{\pi \left( n\right)}$ achieves privacy amplification under $\left( \varepsilon,\delta\right)$-edge DP protection, where $\varepsilon=g\left( n,\varepsilon_0,\delta \right)$ (see Eq.(\ref{eq:shuffle}) for the expression). By the immunity to post-processing, $\hat{q}_{ru}\left( G \right)$ is still protected under $\left( \varepsilon,\delta\right)$-edge LDP. Therefore, $\mathbf{Shuffle}_{ru}$ provides $\left( \varepsilon,\delta\right)$-edge LDP.
\end{proof}

It is easy to see that $\mathbf{Shuffle_{ru}}$ achieves the amplification of privacy in comparison with $\mathbf{Local}_{ru}$ while a failure probability $\delta$ emerges.

\begin{theorem}
    The estimate $\hat{q}_{ru}\left( G\right)$ produced by $\mathbf{Shuffle_{ru}}$ satisfies $\mathbb{E}\left( \hat{q}_{ru}\left( G\right)\right)=q_{ru}\left( G\right)$.
    \label{theorem:8}
\end{theorem}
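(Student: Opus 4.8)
The plan is to verify unbiasedness term by term against Eq.~\eqref{eq:3-2-2}, mirroring the argument used for Theorem~\ref{theorem:4}. First I would rewrite $q_{ru}(G)$ so that it lines up with the estimator: since the edge sum ranges over precisely the lower-triangular entries with $a_{i,j}=1$ and $j<i$, one has $\sum_{e_{ij}\in E} d_i d_j=\sum_{i=2}^n\sum_{j=1}^{i-1} a_{i,j}d_i d_j$, and since each $v_i$ is incident to $d_i$ edges, $\sum_{e_{ij}\in E}\tfrac12(d_i+d_j)=\tfrac12\sum_{i=1}^n d_i^2$. Hence it suffices to prove the two identities $\mathbb{E}(X_2)=\sum_{i=2}^n\sum_{j=1}^{i-1} a_{i,j}d_i d_j$ and $\mathbb{E}(Y_2)=\big(\tfrac12\sum_{i=1}^n d_i^2\big)^2$; linearity of expectation applied to $\hat q_{ru}(G)=X_2/M-Y_2/M^2$ then gives $\mathbb{E}(\hat q_{ru}(G))=q_{ru}(G)$.

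For $X_2=\sum_{i=2}^n d_i\sum_{j=1}^{i-1}\frac{(\tilde a_{i,j}-p)\tilde d_j}{1-2p}$, I would fix a pair $j<i$ and note that the randomized-response bit $\tilde a_{i,j}$ (generated by $v_i$ in the second round) and the degree noise inside $\tilde d_j$ (generated by $v_j$ in the first round) are independent, while $d_i$ is a deterministic function of the input graph. Using $\mathbb{E}(\tilde a_{i,j})=(1-2p)a_{i,j}+p$ gives $\mathbb{E}\!\big(\tfrac{\tilde a_{i,j}-p}{1-2p}\big)=a_{i,j}$ (the denominator $1-2p=\tfrac{e^{\varepsilon_1}-1}{e^{\varepsilon_1}+1}>0$), and $\mathbb{E}(\tilde d_j)=d_j$ since the Laplace noise is mean-zero; therefore $\mathbb{E}(\hat r_i)=d_i\sum_{j=1}^{i-1}a_{i,j}d_j$, and summing over $i$ yields the first identity. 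The shuffling step is irrelevant here because it only permutes $\hat r_1,\dots,\hat r_n$, leaving $\sum_i \hat r_i=X_2$ unchanged. For $Y_2=\big(\tfrac12\sum_i\tilde d_i^2-c_1\big)^2-c_2$ with the constants $c_1,c_2$ read off from line~14, I would set $T=\tfrac12\sum_i\tilde d_i^2$, write $\tilde d_i=d_i+L_i$ with i.i.d.\ mean-zero Laplace $L_i$ of scale $b$ (so $\mathbb{E}L_i^2=2b^2$, $\mathbb{E}L_i^4=24b^4$), and compute $\mathbb{E}(\tilde d_i^2)=d_i^2+2b^2$ and $\mathrm{Var}(\tilde d_i^2)=8b^2 d_i^2+20b^4$; independence across $i$ then gives $\mathbb{E}(T)=\tfrac12\sum_i d_i^2+nb^2$ and $\mathrm{Var}(T)=2b^2\sum_i d_i^2+5nb^4$. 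Plugging into $\mathbb{E}[(T-c_1)^2]=\mathrm{Var}(T)+(\mathbb{E}(T)-c_1)^2$, the choice $c_1=(n+2)b^2$ cancels the term linear in $\sum_i d_i^2$ and leaves $\mathbb{E}[(T-c_1)^2]=\big(\tfrac12\sum_i d_i^2\big)^2+(5n+4)b^4$, so $c_2=(5n+4)b^4$ makes $\mathbb{E}(Y_2)=\big(\tfrac12\sum_i d_i^2\big)^2$. This is the same debiasing identity already established for $Y_1$ in $\mathbf{Local_{ru}}$ (Theorem~\ref{theorem:4}) — $Y_2$ is obtained from $Y_1$ by renaming the degree-noise budget — so this half can simply be cited rather than re-derived.

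The main obstacle is not any single calculation but making the independence bookkeeping airtight in the $X_2$ step: one must carefully justify that $\tilde a_{i,j}$ and $\tilde d_j$ are drawn from independent randomness (different users, disjoint rounds, fresh coins) so that the expectation of the product factors, and one must use that the ``outer'' factor $d_i$ in line~8 is the \emph{true} degree rather than a noisy copy — it is this choice, together with the strict inequality $j<i$ (so no $\tilde d_i\tilde d_i$ diagonal term ever appears), that makes every cross-term collapse to $a_{i,j}d_i d_j$ with no residual bias. Everything else — the moment arithmetic for $Y_2$ and the observation that the permutation $\pi$ leaves the sum invariant — is routine.
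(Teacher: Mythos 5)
Your proof is correct and follows essentially the same route as the paper's: you compute $\mathbb{E}(X_2)$ by factoring each term over the independent RR bit and Laplace-noised degree (yielding $a_{i,j}d_id_j$, with the shuffle irrelevant since only the permutation-invariant sum is used), and you reduce $\mathbb{E}(Y_2)$ to the $Y_1$ debiasing identity already established for Theorem~\ref{theorem:4}, concluding by linearity of expectation. The one added value of your explicit general-scale computation is that it makes plain the debiasing constants must be $(n+2)b^{2}$ and $(5n+4)b^{4}$ with $b=1/(\alpha\varepsilon_0)$ the \emph{actual} noise scale of $\tilde d_i$, which is how the constants written with $\varepsilon_0$ in line 14 of Algorithm~\ref{algo:2} (and in $Y_2$) must be read for exact unbiasedness, a point the paper's ``similar form as $Y_1$'' remark glosses over.
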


    

This suggests that $\mathbf{Shuffle_{ru}}$ outputs an unbiased estimation of the assortativity factor. In addition, we have the following statement.

\begin{theorem}
    When $\varepsilon$, $\delta$ are constants, $\alpha \in (0,1)$, $\varepsilon_0 = \log \left( n\right) + O\left( 1\right)$, the estimate $\hat{q}_{ru}\left( G\right)$ produced by $\mathbf{Shuffle_{ru}}$ provides the following utility guarantee:
    \begin{align}
        \mathrm{MSE}\left( \hat{q}_{ru}\left( G \right) \right) &=O\left( \frac{n^{1+\alpha}d_{\max}^{4}}{M^2}+\frac{n^3d_{\max}^{2}}{\left( \log n \right) ^2M^2}+\frac{n^3d_{\max}^{6}}{\left( \log n \right) ^2M^4} \right) . 
    \end{align}
    \label{theorem:9}
\end{theorem}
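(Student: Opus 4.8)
The plan is to decompose the MSE of $\hat{q}_{ru}(G) = X_2/M - Y_2/M^2$ via the bias–variance identity. Since Theorem \ref{theorem:8} already establishes unbiasedness, $\mathrm{MSE}(\hat{q}_{ru}(G)) = \mathbb{V}(\hat{q}_{ru}(G))$, and since $M$ is a known constant, it suffices to control $\mathbb{V}(X_2)/M^2 + \mathbb{V}(Y_2)/M^4 + 2\,|\mathrm{Cov}(X_2,Y_2)|/M^3$. I would bound each of the three pieces separately and then verify that the cross term is dominated (up to constants) by the geometric mean of the first two, so no new asymptotic order appears. The key structural fact I would exploit is that the randomness enters through two independent sources: the Laplace noise injected into the degrees (scale $1/(\alpha\varepsilon_0)$) and the randomized-response bits $\tilde a_{i,j}$ (with $\varepsilon_1 = (1-\alpha)\varepsilon_0$, so $p = 1/(e^{\varepsilon_1}+1)$ and $1-2p = \tanh(\varepsilon_1/2)$); conditioning on the $\tilde d_j$'s first and using the debiasing factor $1/(1-2p)$ makes the inner sums tractable.

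First I would handle $\mathbb{V}(Y_2)$. Here $Y_2$ is exactly the same debiased functional of $\frac12\sum_i \tilde d_i^2$ as in $\mathbf{Local_{ru}}$ with $\varepsilon_2$ replaced by $\varepsilon_0$; the analysis from Theorem \ref{theorem:5} carries over verbatim, yielding $\mathbb{V}(Y_2)/M^4 = O(n\,(\tfrac{1}{\varepsilon_0})^2 \cdot (\text{degree}^2\text{-terms})^2 + \dots)/M^4$. Plugging $\varepsilon_0 = \log n + O(1)$ (so $1/\varepsilon_0^2 = \Theta(1/(\log n)^2)$) produces the $\frac{n^3 d_{\max}^6}{(\log n)^2 M^4}$ term. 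Second, for $\mathbb{V}(X_2)$ with $X_2 = \sum_{i\ge 2} d_i \sum_{j<i} \frac{(\tilde a_{i,j}-p)\tilde d_j}{1-2p}$, I would condition on the degree noise: $\mathbb{E}[X_2 \mid \tilde d] = \sum_{i\ge2} d_i \sum_{j<i} a_{i,j}\tilde d_j$, and the conditional variance from the RR bits is $\frac{p(1-p)}{(1-2p)^2}\sum_{i\ge2} d_i^2 \sum_{j<i} \tilde d_j^2$. Since $\varepsilon_1$ is a constant (because $\varepsilon_0 = \Theta(\log n)$ would blow up $e^{\varepsilon_1}$, but $\varepsilon_1 = (1-\alpha)\varepsilon_0$ is in fact $\Theta(\log n)$ — so $\frac{p(1-p)}{(1-2p)^2} = \Theta(e^{-\varepsilon_1}) = \Theta(n^{-(1-\alpha)})$ up to constants), this term contributes $O(n^{-(1-\alpha)} \cdot n \cdot d_{\max}^2 \cdot n \cdot d_{\max}^2)/M^2 = O(\frac{n^{1+\alpha}d_{\max}^4}{M^2})$, matching the first summand. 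The remaining part of $\mathbb{V}(X_2)$ comes from the variance over the degree noise of $\mathbb{E}[X_2\mid\tilde d]$; expanding $\tilde d_j = d_j + \mathrm{Lap}(1/(\alpha\varepsilon_0))$ and using independence of the Laplace variables gives a term of order $n\cdot(\tfrac{1}{\varepsilon_0})^2 \cdot d_{\max}^2 \cdot (\text{number of edge-pairs sharing an endpoint})$, which after sparsity bookkeeping and $1/\varepsilon_0^2 = \Theta(1/(\log n)^2)$ yields the $\frac{n^3 d_{\max}^2}{(\log n)^2 M^2}$ term.

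The main obstacle I anticipate is the careful accounting in $\mathbb{V}(X_2)$ for the cross-correlations among the terms $d_i \tilde a_{i,j} \tilde d_j$: distinct summands share a degree variable $\tilde d_j$ (when two edges $e_{ij}, e_{i'j}$ meet at $v_j$) or an RR bit is never shared (each pair $(i,j)$ has its own bit), so the covariance structure is driven entirely by shared $\tilde d_j$'s and the true adjacency pattern. Bounding $\sum_j (\sum_{i>j} a_{ij} d_i)^2 \le d_{\max}^2 \sum_j d_j^2 \le d_{\max}^4 n$ and similar crude-but-sufficient inequalities is where the $d_{\max}$ powers and the extra factor of $n$ in the middle term come from; one must be careful not to double-count or to lose the $(\log n)^{-2}$ savings by mishandling which randomness source a given term's variance belongs to. Finally, I would bound $|\mathrm{Cov}(X_2,Y_2)|$ by Cauchy–Schwarz, $|\mathrm{Cov}(X_2,Y_2)| \le \sqrt{\mathbb{V}(X_2)\mathbb{V}(Y_2)}$, and check term-by-term that $\sqrt{\mathbb{V}(X_2)\mathbb{V}(Y_2)}/M^3$ is $O$ of the sum of the two established orders, which holds by the AM–GM inequality. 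Collecting the three dominant contributions gives exactly the claimed bound.
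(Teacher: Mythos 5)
Your proposal is correct and follows essentially the same route as the paper: unbiasedness reduces MSE to variance, $\mathbb{V}(Y_2)$ is handled by reusing the Theorem~\ref{theorem:5} analysis with $\varepsilon_0=\Theta(\log n)$, and $\mathbb{V}(X_2)$ is split into the randomized-response contribution (with $p(1-p)/(1-2p)^2=\Theta(n^{-(1-\alpha)})$, giving $n^{1+\alpha}d_{\max}^4/M^2$) and the Laplace-degree-noise contribution driven by shared $\tilde d_j$'s (giving the $(\log n)^{-2}$ terms). The only cosmetic differences are that you combine the two pieces via Cauchy--Schwarz and AM--GM on the covariance where the paper invokes its $\mathbb{V}(x_1+x_2)\le 4\max\{\mathbb{V}(x_1),\mathbb{V}(x_2)\}$ lemma, and you organize $\mathbb{V}(X_2)$ by the law of total variance conditioning on $\tilde d$ instead of the paper's direct covariance expansion; both yield the same bound (and note your parenthetical first calling $\varepsilon_1$ a constant is a slip, though you immediately correct it to $\varepsilon_1=\Theta(\log n)$).
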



As before, a similar analysis implies that if the quantity $d_{\max}= O\left( n^{1/2}\right)$, then $\mathrm{MSE}\left( \hat{q}_{ru}\left( G \right) \right)= O\left( n^{1+\alpha}+\frac{n ^2}{(\log n)^2}\right)$. Therefore, the estimation yielded by $\mathbf{Shuffle_{ru}}$ has better steadiness than that of $\mathbf{Local_{ru}}$.

\begin{theorem}
    The time complexity of $\mathbf{Shuffle_{ru}}$ is $O\left( n^2\right)$, and the space complexity is $O\left( n^2\right)$.
    \label{theorem:10}
\end{theorem}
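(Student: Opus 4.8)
The plan is to bound the running time and memory footprint of $\mathbf{Shuffle_{ru}}$ by walking through Algorithm 2 line by line and summing the costs incurred by the users, the shuffler, and the data collector. I would treat the precomputation of the lookup table that implements $\text{LocalPrivacyBudget}$ (line 1) as a one-time cost independent of the input graph $G$, so it does not affect the asymptotics in $n$.

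For the time complexity, I would first handle the first round (lines 2--6): each user $v_i$ computes $d_i=\sum_{j=1}^n a_{i,j}$ in $O(n)$ time and adds one Laplace sample in $O(1)$, so this round costs $O(n)$ per user and $O(n^2)$ in total. The second round is the dominant part. Returning the sequence $\tilde d_1,\dots,\tilde d_n$ to each of the $n$ users (line 7) transmits $O(n)$ values to $O(n)$ users, i.e.\ $O(n^2)$. Applying Warner's RR to the lower-triangular entries $a_{i,1},\dots,a_{i,i-1}$ (line 8) costs $O(i-1)$ for user $v_i$, and $\sum_{i=1}^n (i-1)=\Theta(n^2)$. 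Computing $\hat r_i=d_i\sum_{j=1}^{i-1}\frac{(\tilde a_{i,j}-p)\tilde d_j}{1-2p}$ (line 10) is again a sum of $i-1$ terms, hence $O(i-1)$ per user and $O(n^2)$ overall. The shuffler samples a uniform permutation of $n$ elements in $O(n)$ (e.g.\ Fisher--Yates) and forwards $n$ values in $O(n)$. Finally the data collector forms $X_2=\sum_{i=2}^n\hat r_i$ in $O(n)$, evaluates $Y_2$ from $\sum_{i=1}^n\tilde d_i^2$ in $O(n)$, and combines them in $O(1)$. Adding these contributions, the total time is $O(n^2)$.

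For the space complexity, I would observe that the graph is stored as the neighbor lists $\mathbf a_1,\dots,\mathbf a_n\in\{0,1\}^n$, i.e.\ the adjacency matrix $\mathbf A$, which requires $\Theta(n^2)$ space; all other quantities --- the noisy degrees $\tilde d_1,\dots,\tilde d_n$, the obfuscated triangular bits $\tilde{\mathbf R}_1,\dots,\tilde{\mathbf R}_n$ ($O(n^2)$ in aggregate), the values $\hat r_1,\dots,\hat r_n$, the permutation $\pi$, and the scalars $X_2,Y_2,\hat q_{ru}$ --- fit within $O(n^2)$. Hence the space complexity is $O(n^2)$.

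There is no genuine obstacle here; the only point requiring a little care is the second round, where the collector broadcasts the entire degree sequence to every user. A naive reading might suggest this could exceed $O(n^2)$, but since each user receives only $n$ numbers and there are $n$ users, the broadcast is exactly $O(n^2)$, matching --- and not exceeding --- the cost already forced by storing and obfuscating $\mathbf A$. It is also worth noting that the triangular iteration $\sum_{i=1}^n(i-1)=n(n-1)/2$ is $\Theta(n^2)$, the same order as operating on the full matrix, so restricting RR to the lower triangle saves only a constant factor and does not change the bound.
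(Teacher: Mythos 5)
Your proposal is correct and follows essentially the same route as the paper, which proves this by the same dominant-cost accounting it used for $\mathbf{Local_{ru}}$: obfuscating the lower-triangular part of $\mathbf{A}$ (and the related per-user sums) dominates the time at $O(n^2)$, and storing $\mathbf{A}$ dominates the space at $O(n^2)$. Your line-by-line tally, including the broadcast of the noisy degree sequence, is just a more explicit version of that argument.
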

  This result may be proved using a similar analysis as used in proof of theorem \ref{theorem:6}, we thus omit it here.  

In the next subsection, we will focus on the other setting that is somewhat different from that discussed in the two subsections above. As a consequence, the associated algorithm is established and analyzed in detail.

\subsection{Computing Assortativity Factor from Extended Local Views}
In some social networks, individuals have not only knowledge of their own connections but also a broader subgraph within their local neighborhood. Such a subgraph is called an Extended Local View (ELV) \cite{sun2019analyzing}. For example, on Facebook, each user can see all the connections of his/her friends by default. Similarly, in real-life social interactions, we often learn about the relationships among our friends during social events. In these scenarios, individuals can provide information about their neighbors' connections (i.e. 2-hop ELVs) to the data collector.

To accurately compute the assortativity factor while preserving privacy in one such scenario, we propose a DDP algorithm under the assumption that each user knows all the connections of his/her friends. In this algorithm, we apply the Laplace mechanism to perturb each user's degree $d_i$ and the sum $T_i=\sum_{j=1}^n{a_{i,j}d_j}$ of his/her friends' degrees. However, adding an edge can increase $\{T_i,\dots,T_n\}$ by up to $4n-6$. That is to say, the global sensitivity of $\{T_i,\dots,T_n\}$ is $4n-6$ (according to definition \ref{def:GS under DDP}). When $n$ is large, we need to add a large amount of noise to $T_i$ to ensure privacy, which significantly reduces the utility of the algorithm's output.

To address this issue, we adopt a probabilistic upper bound on the local sensitivity in place of the global sensitivity based on Lemma \ref{lemma:TailBound}. Notice that the local sensitivity of $\{T_1,\dots,T_n\}$ under DDP satisfies
\begin{equation}
    LS_{DDP}(T)= \underset{v_i,v_j\in G\land i\ne j}{\max}2\left( d_i+d_j \right).
    \label{eq:LS_T}
\end{equation}
Specifically, when an edge $(v_i,v_j)$ is added or removed in $G$, (i) $T_i$ changes by $d_j$, (ii) $T_j$ changes by $d_i$, and (iii) $\Sigma_{l\neq i,j}T_l$ changes by $d_i+d_j$.

It is easy to see that the probabilistic upper bound of $LS_{DDP}(T)$ can be derived by calculating the probabilistic upper bounds of users' degrees. Specifically, we first perturb the degree of each user using the Laplace mechanism. Then, we use Lemma \ref{lemma:TailBound} to calculate the probabilistic upper bounds of all the noisy degrees. Finally, by substituting the two largest degree upper bounds into Eq.\ref{eq:LS_T}, we can obtain the probabilistic upper bound of $LS_{DDP}(T)$.

\IncMargin{0.2em}
\begin{algorithm}
    \caption{$\mathbf{Decentral_{ru}}$. $\left[ \mathrm{v_i} \right] $ and $\left[ \mathrm{d} \right]$ represent that the process is run by $v_i$ and the data collector, respectively.}\label{algo:3}
    \SetAlgoLined
    \KwData{ELVs of all users ${G_1, \dots, G_n}$, privacy budget $\varepsilon_1,\varepsilon_2 \in \mathbb{R}_{\ge 0}$, failure probability $\delta \in [0,1]$}
    \KwResult{Private estimate $\hat{q}_{ru}\left( G\right)$ of $q_{ru}\left( G\right)$}
    \BlankLine


    $\left[ \mathrm{d}\right] $ $\delta_1 \leftarrow \delta/2$\;
    
    \For{$i=1$ \KwTo $n$}{
        $\left[ \mathrm{v_i}\right] $ $d_i \leftarrow \sum_{j=1}^n{a_{i,j}}$\;
        $\left[ \mathrm{v_i}\right] $ $\tilde{d}_i \leftarrow d_i+\mathrm{Lap}\left( \frac{2}{\varepsilon_1}\right)$\;
        $\left[ \mathrm{v_i}\right] $ $d_i^\ast \leftarrow \tilde{d}_i+\frac{2}{\varepsilon_1}\log \left( \frac{1}{2\delta_1}\right)$\;
        $\left[ \mathrm{v_i}\right] $ Send $\tilde{d}_i$ and $d_i^\ast$ to the data collector\;
    }
    
    $\left[ \mathrm{d}\right] $ Sort $\left\{ d_i^\ast \right\}$ into $\left\{ d_{[1]}^\ast, d_{[2]}^\ast, \dots, d_{[n]}^\ast \right\}$ in descending order\;
    $\left[ \mathrm{d}\right] $ $\Delta \leftarrow 2(d_{[1]}^\ast +d_{[2]}^\ast )$\;
    

    

    \For{$i=1$ \KwTo $n$}{   
        $\tilde{d}_i \leftarrow d_i+\mathrm{Lap}\left( \frac{1}{\varepsilon_0}\right)$\;
        $\left[ \mathrm{v_i}\right] $ $\tilde{T}_i \leftarrow T_i+\mathrm{Lap}\left( \frac{\Delta}{\varepsilon_2}\right)$\;
        $\left[ \mathrm{v_i}\right] $ Send $\tilde{T}_i$ to the data collector\;
    }


    $\left[ \mathrm{d}\right] $ $X_3\leftarrow \frac{1}{2}\sum_{i=1}^n{\tilde{d}_i \tilde{T}_i}$\;
    

    $\left[ \mathrm{d}\right] $ $Y_3\leftarrow \left( \frac{1}{2}\sum_{i=1}^n{\tilde{d}_{i}^{2}}-\frac{4\left( n+2 \right)}{\varepsilon _{1}^{2}} \right) ^2-\frac{16\left( 5n+4 \right)}{\varepsilon _{1}^{4}}$\;
    
    $\left[ \mathrm{d}\right] $ $\hat{q}_{ru}\left( G\right)\leftarrow X_3/M+Y_3/M^2$\;
    
    $\left[ \mathrm{d}\right] $ \textbf{return} $\hat{q}_{ru}\left( G\right)$
\end{algorithm}


By far, we are ready to propose the third algorithm. Algorithm 3 shows our algorithm for computing the assortativity factor from extended local views. We denote this algorithm by $\mathbf{Decentral_{ru}}$. It takes several inputs: ELVs of all users ${G_1, G_2, \dots, G_n}$, privacy budgets $\varepsilon_1, \varepsilon_2$, and a failure probability $\delta$.

The algorithm begins by splitting the privacy parameter $\delta$, which is handled by the data collector. The data collector then broadcasts the parameters to all users, i.e., nodes in the social network. Lines 2–7 are executed by each client, where each user computes his/her noisy degree $\tilde{d}_i$ and a probabilistic upper bound $d_i^\prime$ of the true degree $d_i$, and reports these values to the data collector. Based on the received data $d_1^\ast, d_2^\ast, \dots, d_n^\ast$, the data collector determines the sensitivity required for perturbing $T_i$, and sends it back to each user. Then, each user applies the Laplace mechanism to compute the noisy value $\tilde{T}_i$ and reports it to the data collector. Finally, the data collector estimates $q_{ru}\left( G\right)$ as follows
\begin{equation}
    \hat{q}_{ru}\left( G \right) =\frac{X_3}{M}-\frac{Y_3}{M^2},
\end{equation}
where
\begin{subequations}
\label{eq:whole}
\begin{eqnarray}
X_3=\frac{1}{2} \sum_{i=1}^n{\tilde{d}_i \tilde{T}_i},\label{subeq:4-3-1-1}
\end{eqnarray}
\begin{equation}
Y_3=\left( \frac{1}{2}\sum_{i=1}^n{\tilde{d}_{i}^{2}}-\frac{4\left( n+2 \right)}{\varepsilon _{1}^{2}} \right) ^2-\frac{16\left( 5n+4 \right)}{\varepsilon _{1}^{4}}.
\label{subeq:4-3-1-2}
\end{equation}
\end{subequations}

Following the similar analysis as above, we prove that $\mathbf{Decentral_{ru}}$ has the following properties. 
\begin{theorem}
    $\mathbf{Decentral_{ru}}$ provides $\left( \varepsilon,\delta \right)$-edge DDP, where $\varepsilon=\varepsilon_1+\varepsilon_2$ and $\delta =2\delta_1$.
    \label{theorem:11}
\end{theorem}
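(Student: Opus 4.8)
The plan is to split $\mathbf{Decentral_{ru}}$ into two sequential phases and argue privacy of each, then glue them with a composition that tolerates a \emph{data-dependent} noise scale. Phase~1 releases $\tilde{d}_1,\dots,\tilde{d}_n$ together with $d_1^\ast,\dots,d_n^\ast$ (hence the publicly computed $\Delta$); Phase~2 releases $\tilde{T}_1,\dots,\tilde{T}_n$ using that $\Delta$. I would show Phase~1 is (pure) $\varepsilon_1$-edge DDP, that Phase~2 is (pure) $\varepsilon_2$-edge DDP \emph{conditioned} on the event $\mathcal{E}=\{\Delta\ge LS_{DDP}(T)\}$, and that $\Pr[\neg\mathcal{E}]\le 2\delta_1$; sequential composition plus a bad-event accounting then yields $(\varepsilon_1+\varepsilon_2,\,2\delta_1)$-edge DDP. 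Since $X_3,Y_3,\hat{q}_{ru}(G)$ are deterministic functions of the released $(\tilde{d}_i,\tilde{T}_i)_i$ (the $\tilde d_i$ being the Phase~1 perturbations), they are free by post-processing immunity, and $\delta_1=\delta/2$ turns the bound into $(\varepsilon,\delta)$-edge DDP with $\varepsilon=\varepsilon_1+\varepsilon_2$.

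For Phase~1, first bound the DDP-sensitivity of the degree vector: $d_i$ is determined by the ELV $G_i$, and flipping one edge $(v_a,v_b)$ of the global graph changes exactly $d_a$ and $d_b$, each by one, so $\sum_i|d_i(G_i)-d_i(G_i')|=2$. Adding $\mathrm{Lap}(2/\varepsilon_1)$ independently to each $d_i$ then makes the product distribution of $(\tilde{d}_1,\dots,\tilde{d}_n)$ satisfy $\varepsilon_1$-edge DDP, by exactly the factorized-likelihood-ratio argument used in the proof of Theorem~\ref{theorem:3}: only the factors for $v_a$ and $v_b$ differ, each at most $e^{\varepsilon_1/2}$, for a product of $e^{\varepsilon_1}$. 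Since $d_i^\ast=\tilde{d}_i+\tfrac{2}{\varepsilon_1}\log\!\big(\tfrac{1}{2\delta_1}\big)$ is a deterministic shift of $\tilde{d}_i$, releasing it (and hence $\Delta=2(d_{[1]}^\ast+d_{[2]}^\ast)$) is post-processing and adds no cost.

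For Phase~2, the crux is that $\Delta$ dominates $LS_{DDP}(T)$ with high probability. By Eq.\eqref{eq:LS_T}, $LS_{DDP}(T)=2(d_{[1]}+d_{[2]})$, where $d_{[1]}\ge d_{[2]}$ are the two largest true degrees, attained at nodes $v_a,v_b$. Lemma~\ref{lemma:TailBound} with $\alpha=2/\varepsilon_1$ gives $\Pr[d_a^\ast\ge d_a]\ge 1-\delta_1$ and $\Pr[d_b^\ast\ge d_b]\ge 1-\delta_1$; a union bound over these two events gives $\Pr[\mathcal{E}]\ge 1-2\delta_1$, because on their intersection $\Delta=2(d_{[1]}^\ast+d_{[2]}^\ast)\ge 2(d_a^\ast+d_b^\ast)\ge 2(d_a+d_b)=LS_{DDP}(T)$ (the two largest $d_i^\ast$ dominate the sum of any two coordinates). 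On $\mathcal{E}$, for any neighbor $G'$ of $G$ we have $\sum_i|T_i(G_i)-T_i(G_i')|\le LS_{DDP}(T)\le\Delta$, so adding $\mathrm{Lap}(\Delta/\varepsilon_2)$ coordinate-wise makes $(\tilde{T}_1,\dots,\tilde{T}_n)$ satisfy $\varepsilon_2$-edge DDP conditionally on the Phase~1 transcript.

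The combination step is where I expect the only real subtlety. Write the output as $(R_1,R_2)$, with $R_1$ the Phase~1 transcript (which determines $\Delta$) and $R_2=(\tilde{T}_i)_i$, and let $A_G$ be the set of Phase~1 transcripts with $\Delta<LS_{DDP}(T)$ for the true graph $G$. Splitting $\Pr[(R_1,R_2)(G)\in S]$ according to $R_1\in A_G$ or not, the first part is at most $\Pr[R_1(G)\in A_G]\le 2\delta_1$; for the second part I condition on each good $r_1$, apply Phase~2's conditional $\varepsilon_2$-DDP, drop the restriction to good $r_1$ (the integrand being nonnegative), then replace $\Pr[R_1(G)=r_1]$ by $e^{\varepsilon_1}\Pr[R_1(G')=r_1]$ via Phase~1's $\varepsilon_1$-DDP and re-integrate, obtaining $e^{\varepsilon_1+\varepsilon_2}\Pr[(R_1,R_2)(G')\in S]$. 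The hard part is precisely the interaction between the data-dependence of $A_G$ and the DDP definition: one must keep the failure event referenced to the distribution actually being run ($G$), and must use that on $\mathcal{E}$ the released scale $\Delta$ dominates the per-neighbor change $\sum_i|T_i(G_i)-T_i(G_i')|$ \emph{uniformly over all neighbors} $G'$ — which is exactly what Eq.\eqref{eq:LS_T} supplies. A minor loose end to tidy is the exact additive constant in the local sensitivity of $\{T_i\}$ (the per-edge change is $2(d_i+d_j)$ up to an $O(1)$ term depending on whether the differing edge is inserted or deleted), which can be absorbed into the slack of the $d_i^\ast$'s.
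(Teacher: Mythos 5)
Your proposal is correct and follows essentially the same route as the paper's own proof: $\varepsilon_1$-edge DDP for the noisy degrees (DDP sensitivity $2$, scale $2/\varepsilon_1$), Lemma \ref{lemma:TailBound} plus a union bound to show $\Delta \ge LS_{DDP}(T)$ except with probability $2\delta_1$, and then $\varepsilon_2$-edge DDP for the $\tilde{T}_i$ combined by composition and post-processing. If anything, your write-up is tighter where the paper is terse: your ``two largest noisy bounds dominate any two fixed coordinates'' argument replaces the paper's informal $\xi_i$ decomposition, and your explicit bad-event/composition accounting for the data-dependent scale $\Delta$ is merely asserted, not argued, in the paper.
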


\begin{proof}
    Obviously, for any $i\in \left\{ 1,\dots,n\right\}$, $\tilde{d}_i=d_i+\text{Lap}\left( \frac{2}{\varepsilon_1}\right)$ is protected with $\varepsilon_1$-edge DDP. Next, we show that $\Delta \ge LS_{DDP}(T_i)$ holds with probability at least $1 - \delta$.  This is equivalent to proving that $(d_{[1]}^\ast \ge d_{(1)}) \land (d_{[2]}^\ast \ge d_{(2)})$ holds with probability at least $1 - \delta$, where $d_{(i)}$ denotes the $i$-th largest true degree.  

    Let $\xi_i \in [0,1]$ be the total probability that another user's degree other than $d_{(i)}$ becomes $d_{[i]}^\ast$ and meanwhile $d_{[i]}^\ast \ge d_{(i)}$.  It is easy to deduce that for any $i \in \{1, 2, \dots, n\}$, we have
    \begin{equation}
        \begin{aligned}
            \mathrm{Pr}[d_{[i]}^\ast \ge d_{(i)}] &= \mathrm{Pr}[d_{(i)}^\ast \ge d_{(i)}]+\xi_i \\
            &\ge \mathrm{Pr}[d_{(i)}^\ast \ge d_{(i)}] \\
            &= 1-\delta_1.
        \end{aligned}
    \end{equation}
    Thus,
    \begin{equation}
        \begin{aligned}
            &\text{Pr}\left[ \left( d_{\left[ 1 \right]}^{\ast}\ge d_{\left( 1 \right)} \right) \land \left( d_{\left[ 2 \right]}^{\ast}\ge d_{\left( 2 \right)} \right) \right] \\
            &=1-\text{Pr}\left[ \left( d_{\left[ 1 \right]}^{\ast}<d_{\left( 1 \right)} \right) \lor \left( d_{\left[ 2 \right]}^{\ast}<d_{\left( 2 \right)} \right) \right] \\
            &\ge 1-\left[ \text{Pr}\left( d_{\left[ 1 \right]}^{\ast}<d_{\left( 1 \right)} \right) +\text{Pr}\left( d_{\left[ 2 \right]}^{\ast}<d_{\left( 2 \right)} \right) \right] \\
            &=\text{Pr}\left( d_{\left[ 1 \right]}^{\ast}\ge d_{\left( 1 \right)} \right) +\text{Pr}\left( d_{\left[ 2 \right]}^{\ast}\ge d_{\left( 2 \right)} \right) -1 \\
            &\ge 1-2\delta _1=1-\delta.
        \end{aligned}
    \end{equation}
    
    Therefore, $\tilde{T}_i=T_i+\text{Lap}\left( \frac{\Delta}{\varepsilon_2}\right)$ is protected under $\varepsilon_2$-edge DDP with probability $1-\delta$. Overall, $\mathbf{Decentral_{ru}}$ provides $(\varepsilon,\delta)$-edge DDP, where $\varepsilon=\varepsilon_1+\varepsilon_2$ and $\delta=2\delta_1$.
    
\end{proof}

\begin{theorem}
    The estimate $\hat{q}_{ru}\left( G\right)$ produced by $\mathbf{Decentral_{ru}}$ satisfies $\mathbb{E}\left( \hat{q}_{ru}\left( G\right)\right)=q_{ru}\left( G\right)$.
    \label{theorem:12}
\end{theorem}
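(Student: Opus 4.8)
The plan is to reduce the claim to two expectation identities for the building blocks $X_3$ and $Y_3$, and then combine them by linearity of expectation. Recall first the elementary degree-sum identities $\sum_{e_{ij}\in E} d_i d_j = \tfrac12\sum_{i=1}^n d_i T_i$ (each edge is counted twice when $T_i=\sum_j a_{ij}d_j$ is summed against $d_i$) and $\sum_{e_{ij}\in E}\tfrac12(d_i+d_j) = \tfrac12\sum_{i=1}^n d_i^2$, so that from Eq.\eqref{eq:3-2-2} we may write
\[
q_{ru}(G) \;=\; \frac{1}{M}\cdot\frac12\sum_{i=1}^n d_iT_i \;-\; \Bigl(\frac{1}{2M}\sum_{i=1}^n d_i^2\Bigr)^{2}.
\]
Since $\hat q_{ru}(G)=X_3/M-Y_3/M^2$, it therefore suffices to prove $\mathbb{E}(X_3)=\tfrac12\sum_i d_iT_i$ and $\mathbb{E}(Y_3)=\bigl(\tfrac12\sum_i d_i^2\bigr)^2$; the theorem then follows at once.

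For $X_3=\tfrac12\sum_i\tilde d_i\tilde T_i$ I would condition on all of the degree perturbations, and hence on the upper bounds $d_1^\ast,\dots,d_n^\ast$ and on the scale $\Delta=2(d_{[1]}^\ast+d_{[2]}^\ast)$. Given this conditioning, $T_i$ is deterministic and $\tilde T_i=T_i+\mathrm{Lap}(\Delta/\varepsilon_2)$ is a fresh, conditionally mean-zero perturbation, so $\mathbb{E}(\tilde d_i\tilde T_i\mid\{\tilde d_k\})=\tilde d_i T_i$; taking the outer expectation and using that $\tilde d_i$ is unbiased for $d_i$ gives $\mathbb{E}(\tilde d_i\tilde T_i)=d_iT_i$ and hence $\mathbb{E}(X_3)=\tfrac12\sum_i d_iT_i=\sum_{e_{ij}\in E}d_id_j$. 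The point that needs care here is that $\Delta$ is data-dependent and genuinely correlated with the degree estimates (it is assembled from the same noisy degrees); the conditioning step handles this cleanly precisely because the Laplace term added to $T_i$ is centered for every realization of $\Delta$.

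For $Y_3=\bigl(\tfrac12\sum_i\tilde d_i^2-\tfrac{4(n+2)}{\varepsilon_1^2}\bigr)^2-\tfrac{16(5n+4)}{\varepsilon_1^4}$, write $S=\tfrac12\sum_i\tilde d_i^2$ and $A=\tfrac12\sum_i d_i^2$, with $\tilde d_i=d_i+\mathrm{Lap}(b)$ and $b=2/\varepsilon_1$. Using the Laplace moments for scale $b$ (variance $2b^2$, fourth moment $24b^4$, odd moments zero) and independence across $i$, I would compute $\mathbb{E}(S)=A+nb^2$ and $\mathrm{Var}(S)=4Ab^2+5nb^4$, hence $\mathbb{E}(S^2)=A^2+2Anb^2+4Ab^2+n^2b^4+5nb^4$. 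Then $\mathbb{E}\bigl((S-c)^2-e\bigr)=\mathbb{E}(S^2)-2c\,\mathbb{E}(S)+c^2-e$ with $c=(n+2)b^2=\tfrac{4(n+2)}{\varepsilon_1^2}$ and $e=(5n+4)b^4=\tfrac{16(5n+4)}{\varepsilon_1^4}$, and a short calculation shows that the coefficient of $A$ vanishes and the remaining additive constants cancel, leaving $\mathbb{E}(Y_3)=A^2=\bigl(\tfrac12\sum_i d_i^2\bigr)^2$. This is the same cancellation used for $Y_1$ in $\mathbf{Local_{ru}}$, only with $1/\varepsilon_2$ replaced by $2/\varepsilon_1$, which is exactly why the correction constants acquire the extra factors $4$ and $16$.

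I expect the second-moment bookkeeping for $Y_3$ to be the main source of routine labor — in particular confirming that the hand-chosen constants $c$ and $e$ are precisely those that annihilate both the $A$-linear term and the leftover constant — while the genuinely nontrivial conceptual step is the conditioning argument for $X_3$ that neutralizes the data-dependent noise scale $\Delta$. With $\mathbb{E}(X_3)$ and $\mathbb{E}(Y_3)$ in hand, the conclusion $\mathbb{E}(\hat q_{ru}(G))=\mathbb{E}(X_3)/M-\mathbb{E}(Y_3)/M^2=q_{ru}(G)$ is immediate.
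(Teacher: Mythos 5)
Your proposal is correct and follows essentially the same route as the paper's proof: establish $\mathbb{E}(X_3)=\tfrac12\sum_i d_iT_i=\sum_{e_{ij}\in E}d_id_j$ and $\mathbb{E}(Y_3)=\bigl(\tfrac12\sum_i d_i^2\bigr)^2$ separately and combine by linearity, with the $Y_3$ cancellation being exactly the $Y_1$ computation with $1/\varepsilon_2$ replaced by $2/\varepsilon_1$. Your explicit conditioning on the first-round noisy degrees to handle the data-dependent noise scale $\Delta$ is in fact more careful than the paper's own argument, which treats the scale of $\tilde{T}_i$ as if it were fixed.
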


\begin{theorem}
    When $\varepsilon_1$, $\varepsilon_2$ are constants, the estimate $\hat{q}_{ru}\left( G\right)$ produced by $\mathbf{Decentral_{ru}}$ provides the following utility guarantee:
    \begin{align}
        \mathrm{MSE}\left( \hat{q}_{ru}\left( G \right) \right) &= O\left( \frac{n^3d_{\max}^{6}}{M^4} \right) . 
    \end{align}
    \label{theorem:13}
\end{theorem}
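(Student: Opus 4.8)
The plan is to leverage the unbiasedness already established in Theorem~\ref{theorem:12}: since $\mathbb{E}(\hat q_{ru}(G))=q_{ru}(G)$, we have $\mathrm{MSE}(\hat q_{ru}(G))=\mathbb{V}(\hat q_{ru}(G))$. Because $M$ is a known deterministic quantity, $\hat q_{ru}(G)=X_3/M-Y_3/M^2$ is just a fixed linear combination of the two random quantities $X_3,Y_3$, so no ratio‑of‑expectations approximation (Lemma~\ref{lemma:expectation of the ratio}) is needed here; applying $\mathbb{V}(A-B)\le 2\mathbb{V}(A)+2\mathbb{V}(B)$ reduces everything to the two bounds $\mathbb{V}(X_3)=O(nd_{\max}^4)$ and $\mathbb{V}(Y_3)=O(n^3d_{\max}^6)$. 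Granting these, $\mathrm{MSE}(\hat q_{ru}(G))\le \frac{2}{M^2}\mathbb{V}(X_3)+\frac{2}{M^4}\mathbb{V}(Y_3)=O\!\left(\frac{nd_{\max}^4}{M^2}+\frac{n^3d_{\max}^6}{M^4}\right)$, and since $2M=\sum_i d_i\le nd_{\max}$ forces $M\le nd_{\max}$, the first term is dominated by the second, giving the claimed $O(n^3d_{\max}^6/M^4)$. Throughout I would write $\tilde d_i=d_i+L_i$ with $L_i\sim\mathrm{Lap}(2/\varepsilon_1)$ and $\tilde T_i=T_i+N_i$, where, conditionally on the realized scale $\Delta=2(d^\ast_{[1]}+d^\ast_{[2]})$, $N_i\sim\mathrm{Lap}(\Delta/\varepsilon_2)$; all Laplace moments $\mathbb{E}|L_i|^k=(2/\varepsilon_1)^k\,k!$ are $O(1)$ since $\varepsilon_1,\varepsilon_2$ are constants.

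For $\mathbb{V}(X_3)$ with $X_3=\frac12\sum_i\tilde d_i\tilde T_i$, the key subtlety is that the noise scale $\Delta$ of the $\tilde T_i$ is itself random and depends on the first‑round degree noises; I would therefore condition on $\mathbf{L}=(L_1,\dots,L_n)$, on which $\Delta$ and every $\tilde d_i$ are measurable, and use the law of total variance. Given $\mathbf{L}$ the $N_i$ are independent and mean‑zero, so $\mathbb{V}(X_3\mid\mathbf{L})=\frac{1}{2\varepsilon_2^2}\sum_i\tilde d_i^2\Delta^2$; taking expectations and bounding, via Cauchy--Schwarz, $\mathbb{E}[\tilde d_i^2\Delta^2]\le\sqrt{\mathbb{E}[\tilde d_i^4]\,\mathbb{E}[\Delta^4]}$ with $\mathbb{E}[\tilde d_i^4]=O(d_{\max}^4)$ and $\mathbb{E}[\Delta^4]=O(d_{\max}^4)$ — the latter because $\Delta=O(d_{\max}+\max_j|L_j|)$ and the fourth moment of the maximum of $n$ constant‑scale Laplace variables is $O((\log n)^4)$, absorbed into $d_{\max}^4$ under the mild and standard assumption $d_{\max}=\Omega(\log n)$ — yields $\mathbb{E}[\mathbb{V}(X_3\mid\mathbf{L})]=O(nd_{\max}^4)$. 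For the outer term, $\mathbb{E}[X_3\mid\mathbf{L}]=\frac12\sum_i d_iT_i+\frac12\sum_i L_iT_i$ has a deterministic first summand, so $\mathbb{V}(\mathbb{E}[X_3\mid\mathbf{L}])=\frac14\sum_iT_i^2\,\mathbb{V}(L_i)=O(\sum_iT_i^2)$, and $T_i\le d_id_{\max}\le d_{\max}^2$ gives $\sum_iT_i^2=O(nd_{\max}^4)$. Hence $\mathbb{V}(X_3)=O(nd_{\max}^4)$.

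For $\mathbb{V}(Y_3)$, with $S=\frac12\sum_i\tilde d_i^2$ we have $Y_3=(S-c_1)^2-c_2$ for constants $c_1=\frac{4(n+2)}{\varepsilon_1^2}$, $c_2=\frac{16(5n+4)}{\varepsilon_1^4}$, so $c_2$ is irrelevant to the variance. Expanding $\tilde d_i^2=d_i^2+2d_iL_i+L_i^2$ and splitting off the mean, I would write $S-c_1=B+Z$ with $B=\frac12\sum_id_i^2-\frac{8}{\varepsilon_1^2}$ deterministic and $Z=\sum_i\big(d_iL_i+\frac12(L_i^2-\mathbb{E}L_i^2)\big)=\sum_iW_i$ a sum of $n$ independent zero‑mean terms; then $\mathbb{V}(Y_3)=\mathbb{V}((B+Z)^2)=\mathbb{V}(2BZ+Z^2)\le 8B^2\,\mathbb{V}(Z)+2\,\mathbb{E}[Z^4]$. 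Using the crude degree bound $\sum_id_i^2\le nd_{\max}^2$ (so $B^2=O(n^2d_{\max}^4)$), $\mathbb{V}(Z)=\sum_i\mathbb{V}(W_i)=O(\sum_id_i^2+n)=O(nd_{\max}^2)$ (the $d_iL_i$--$(L_i^2-\mathbb{E}L_i^2)$ cross term vanishing by the symmetry $\mathbb{E}[L_i^3]=0$), and the standard fourth‑moment identity $\mathbb{E}[Z^4]=\sum_i\mathbb{E}[W_i^4]+3\sum_{i\neq j}\mathbb{E}[W_i^2]\mathbb{E}[W_j^2]=O(nd_{\max}^4)+O((nd_{\max}^2)^2)=O(n^2d_{\max}^4)$, the dominant contribution is $B^2\,\mathbb{V}(Z)=O(n^3d_{\max}^6)$, so $\mathbb{V}(Y_3)=O(n^3d_{\max}^6)$. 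Substituting both variance bounds into the display of the first paragraph completes the proof.

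The main obstacle I expect is the data‑dependent noise scale $\Delta$: since it is built from the same first‑round Laplace perturbations that also enter $X_3$ (and, through $\tilde d_i^2$, $Y_3$), one cannot treat $N_i$ as unconditionally independent of the $L_j$, which is precisely why the conditioning on $\mathbf{L}$ and an extreme‑value fourth‑moment bound on $n$ Laplace variables are forced on us (and why the $\Omega(\log n)$ lower bound on $d_{\max}$ is implicitly required for the stated clean bound). The second, purely mechanical, hurdle is that $\mathbb{V}(Y_3)$ is the variance of a degree‑four polynomial in the $L_i$, so the bookkeeping climbs to eighth Laplace moments inside $\mathbb{E}[Z^4]$; these are all $O(1)$, but the odd‑moment cancellations must be tracked with care to keep $\mathbb{E}[Z^4]$ at order $n^2d_{\max}^4$ and thus the whole estimate at $n^3d_{\max}^6$.
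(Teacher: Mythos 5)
Your proposal is correct and follows the same overall skeleton as the paper's argument: invoke unbiasedness (Theorem \ref{theorem:12}) so that MSE equals variance, split $\hat q_{ru}(G)$ into the $X_3/M$ and $Y_3/M^2$ parts (your $2\mathbb{V}(A)+2\mathbb{V}(B)$ bound versus the paper's $4\max$ lemma is immaterial), bound each variance, and absorb the $X_3$ contribution using $2M=\sum_i d_i\le nd_{\max}$ — a domination step the paper performs only implicitly. Where you genuinely depart is in the treatment of $\mathbb{V}(X_3)$: the paper writes $\mathbb{V}(X_3)=\tfrac14\sum_i\mathbb{V}(\tilde d_i\tilde T_i)$ and factorizes $\mathbb{E}(\tilde d_i^2\tilde T_i^2)=\mathbb{E}(\tilde d_i^2)\mathbb{E}(\tilde T_i^2)$, i.e.\ it treats the realized noise scale $\Delta=2(d^\ast_{[1]}+d^\ast_{[2]})$ as deterministic and the per-$i$ terms as independent, then carries $(d^\ast_{[1]}+d^\ast_{[2]})^2$ through and silently drops that term in the final bound; you instead condition on the first-round noises, use the law of total variance, Cauchy--Schwarz, and an extreme-value moment bound for $\max_j|L_j|$, which correctly accounts for the correlations the shared, data-dependent $\Delta$ induces. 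This added rigor is the source of your extra $(\log n)$-type term and hence the auxiliary assumption $d_{\max}=\Omega(\log n)$ (or a residual $nd_{\max}^2(\log n)^2/M^2$ term), which is not in the theorem statement — the paper avoids it only by not modeling the randomness of $\Delta$ at all, so your version is, if anything, the more defensible one. For $\mathbb{V}(Y_3)$ you give a self-contained argument via the decomposition $S-c_1=B+Z$ with independent zero-mean $W_i$ and the fourth-moment identity for independent sums, reaching the same $O(n^3d_{\max}^6)$ bound that the paper obtains by pointing back to the lengthy $Q$-moment computation in the proof of Theorem \ref{theorem:5}; your route is shorter and makes the dominant term $B^2\,\mathbb{V}(Z)$ transparent.
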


If quantity $\hat{d}_{\max}= \Theta (n^{1/6})$, we find that $\mathrm{MSE}\left( \hat{q}_{ru}\left( G \right) \right)= O\left( 1\right)$. In this setting, the consequence produced by $\mathbf{Decentral_{ru}}$ is overwhelmingly better than that by $\mathbf{Local_{ru}}$. This implies that a broader view is beneficial to a steadier estimation given $\hat{d}_{\max}= \Theta (n^{1/6})$. Furthermore, if $\hat{d}_{\max}\leq O (n^{1/2})$, we have also $\mathrm{MSE}\left( \hat{q}_{ru}\left( G \right) \right)\leq O\left( n^2\right)$.

\begin{theorem}
    The time complexity of $\mathbf{Decentral_{ru}}$ is $O\left( n^2\right)$, and the space complexity is $O\left( n^2\right)$.
    \label{theorem:14}
\end{theorem}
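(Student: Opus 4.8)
The plan is to bound, line by line, the running time and the memory footprint of Algorithm~\ref{algo:3} (that is, $\mathbf{Decentral_{ru}}$), in the same spirit as the proof of Theorem~\ref{theorem:6}. The only genuinely new ingredients relative to $\mathbf{Local_{ru}}$ are the sorting of the probabilistic degree upper bounds (line~8) and the aggregation $T_i=\sum_j a_{i,j}d_j$ over each user's $2$-hop extended local view, so the argument reduces to checking that neither of these raises the cost above the quadratic regime.

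For the time analysis I would argue as follows. Lines~1 and~9 are $O(1)$. In the loop of lines~2--7 each user reads its length-$n$ neighbor list to compute $d_i=\sum_j a_{i,j}$ and then does $O(1)$ extra work (one Laplace draw, one shift by $\frac{2}{\varepsilon_1}\log\frac{1}{2\delta_1}$, one transmission), for a total of $O(n^2)$; sorting the $n$ values $d_1^\ast,\dots,d_n^\ast$ in line~8 costs $O(n\log n)$, which is absorbed into $O(n^2)$. In the loop of lines~10--14 each user forms $T_i=\sum_j a_{i,j}d_j$ from the neighbor degrees visible in its extended local view and then draws $\mathrm{Lap}(\Delta/\varepsilon_2)$ and transmits; I would bound the per-user cost of this aggregation by $O(n)$ for a single dot product over a length-$n$ row once the neighbor degrees have been gathered (equivalently, by $O(d_i d_{\max})=O(n^2)$ in the crudest accounting), so the contribution of this loop is also at most $O(n^2)$. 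Finally the data collector evaluates $X_3=\frac12\sum_i\tilde d_i\tilde T_i$ and $Y_3$ (which needs only $\sum_i\tilde d_i^2$) in $O(n)$ time and combines them in $O(1)$. Summing these contributions yields a time complexity of $O(n^2)$.

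For the space analysis, the adjacency matrix $\mathbf{A}$ --- equivalently, the collection of neighbor lists realizing the extended local views --- dominates and occupies $O(n^2)$; all auxiliary arrays ($\{\tilde d_i\}$, $\{d_i^\ast\}$, its sorted copy, $\{\tilde T_i\}$) use $O(n)$, and the scalars $\delta_1,\Delta,X_3,Y_3,\hat{q}_{ru}(G)$ use $O(1)$, so the space complexity is $O(n^2)$ as well. The one step I expect to require care, rather than being a genuine obstacle, is precisely the $T_i$ aggregation: one must be explicit that, although a $2$-hop view can in principle span the whole graph, the work to extract a neighbor's degree and accumulate it is linear in $n$ per user, so the per-party cost --- and, when the views are realized on top of the shared $n\times n$ adjacency structure, the overall cost --- stays within $O(n^2)$; everything else is the routine line-by-line bookkeeping already carried out for Theorem~\ref{theorem:6}.
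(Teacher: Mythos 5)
Your proposal is correct and follows essentially the same route as the paper, which simply notes that Theorem~\ref{theorem:14} is proved as in Theorem~\ref{theorem:6}: the quadratic cost of handling the $n\times n$ adjacency structure (computing each $d_i$ and each $T_i=\sum_j a_{i,j}d_j$ as length-$n$ row operations) dominates the time, and storing $\mathbf{A}$ dominates the space, with sorting and the collector's sums absorbed into $O(n^2)$. The only nit is your parenthetical ``crudest accounting'' of $O(d_i d_{\max})$ per user, which if summed over users would overshoot $O(n^2)$; your primary bound of $O(n)$ per user (degrees already computed) is the one that carries the argument and is sound.
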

As above, this may also be proved in a similar manner as used in the proof of theorem \ref{theorem:6}.

By far, we finish the development of algorithms $\mathbf{Local_{ru}}$, $\mathbf{Shuffle_{ru}}$ and $\mathbf{Decentral_{ru}}$, and give detailed theoretical analysis. In the next section, we are going to conduct experimental evaluations on synthetic datasets and real-world datasets to further clarify the performance of the proposed algorithms.

\section{Experiment Evaluation}

In this section, we will conduct experiments on synthetic datasets and real-world datasets to evaluate the performance of our algorithms $\mathbf{Local_{ru}}$, $\mathbf{Shuffle_{ru}}$ and $\mathbf{Decentral_{ru}}$ proposed in Section 4.

\subsection{Experimental Set-up}
We perform experiments on six datasets whose parameters are summarized in Table \ref{tab:datasets}. The first three are synthetic datasets with power-law degree distribution based on the BA (Barabási-Albert) graph model \cite{posfai2016network}, where $m$ represents the number of edges of a newly added node. Here we use the NetworkX library \cite{hagberg2008exploring} to generate BA synthetic graphs. The other three are real-world datasets from the Stanford Network Analysis Project (SNAP).

\begin{table}[h]
    \centering
    \caption{datasets}
    \label{tab:datasets}
    \renewcommand\arraystretch{1.2}
    \begin{tabular}{|c|c|c|c|c|c|}
        \hline
        & $n$ & $M$ & $d_{max}$ & $d_{avg}$ & $r_u$ \\
        \hline
        BA($m=10$) & 10000 & 99900 & 464 & 19.98 & -95.22 \\
        \hline
        BA($m=50$) & 10000 & 497500 & 1237 & 99.5 & 49.03 \\
        \hline
        BA($m=100$) & 10000 & 990000 & 1640 & 198 & 736.81 \\
        \hline
        Facebook & 4039 & 88234 & 1045 & 43.69 & 870.36 \\
        \hline
        Deezer & 28281 & 92752 & 172 & 6.56 & 29.19 \\
        \hline
        GitHub & 37700 & 289003 & 9458 & 15.33 & -162127.53 \\
        \hline
    \end{tabular}
\end{table}

\begin{itemize}
    \item \textbf{Facebook.} The Facebook online social network dataset (denoted by Facebook) \cite{leskovec2012learning} contains friend lists from survey participants using the Facebook application. It provides a social graph $G=\left( V,E\right)$ with 4,039 nodes (Facebook users) and 88,234 undirected edges, where each edge $\left( v_i,v_j\right) \in E$ represents an online friendship between $v_i$ and $v_j$.
    \item \textbf{Deezer.} This dataset (denoted by Deezer) \cite{rozemberczki2020characteristic} includes a graph $G$ with 28,281 nodes and 92,752 undirected edges, where nodes represent Deezer users from European countries and edges indicate mutual follower relationships.
    \item \textbf{GitHub.} This dataset (denoted by GitHub) \cite{rozemberczki2021multi} is a social network of GitHub developers with 37,700 nodes and 289,003 edges. Nodes are developers who have starred at least 10 repositories, and edges are the mutual follow relationships between them.
\end{itemize}

For each algorithm, we evaluate the relative error (as described in subsection 3.5) of the estimate as the value of $\varepsilon$ varies. In our experiments, we set the range of $\varepsilon$ to $(0, 2]$, which is acceptable in many practical scenarios \cite{li2017differential}. To ensure reliable results, we run each algorithm 20 times and use the average relative error of these runs as the final experimental result. Additionally, given that the assortativity factor $r_u$ reflects the sign of assortativity, it is useful to report the extent to which the privacy estimate of $r_u$ reflects the true sign of assortativity, in addition to the relative error. Therefore, we also assess the accuracy of the sign of the estimate as $\varepsilon$ varies. We run each algorithm 100 times and report the proportion of correct signs in the outputs as the sign accuracy of each algorithm. Our source code is available on GitHub\footnote{https://github.com/OYJZBYCG/Differential-Privacy-for-Network-Assortativity.git}.

\subsection{Experimental Results}

\subsubsection{Relation between the Relative Error and $\varepsilon$}
We first evaluate the relationship between the relative error and the privacy parameter $\varepsilon$ in edge DP. In this experiment, for $\mathbf{Local_{ru}}$, we divide the total privacy budget $\varepsilon$ into $\varepsilon_1 = 0.6\varepsilon$ for perturbing the adjacency matrix $\mathbf{A}$ and $\varepsilon_2 = 0.4\varepsilon$ for perturbing the users' degrees $d_1,\dots,d_n$. For $\mathbf{Shuffle_{ru}}$, we set the privacy allocation parameter $\alpha$ to 0.4 and the failure probability $\delta$ to $10^{-8}$ ($\ll n$) and use the numerical upper bound in \cite{feldman2022hiding} to calculate the local privacy budget $\varepsilon_0$ in the shuffling model. For $\mathbf{Decentral_{ru}}$, $\delta$ is also set to $10^{-8}$. Privacy budget $\varepsilon$ is divided into $\varepsilon_1 = 0.4\varepsilon$ and $\varepsilon_2 = 0.6\varepsilon$ for perturbation of the user's degree $\{d_1,d_2,\dots,d_n\}$ and number of friends of the user's friends $\{T_1,T _2,\dots,T_n\}$, respectively.


\begin{figure}[thbp!]
    \centering
    \begin{subfigure}[b]{0.5\textwidth}
        \includegraphics[width=\textwidth]{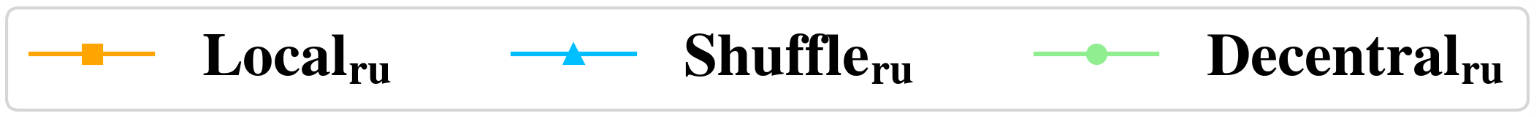}
    \end{subfigure}

    \vspace{1em}
    
    \begin{subfigure}[b]{0.235\textwidth}
        \includegraphics[width=\textwidth]{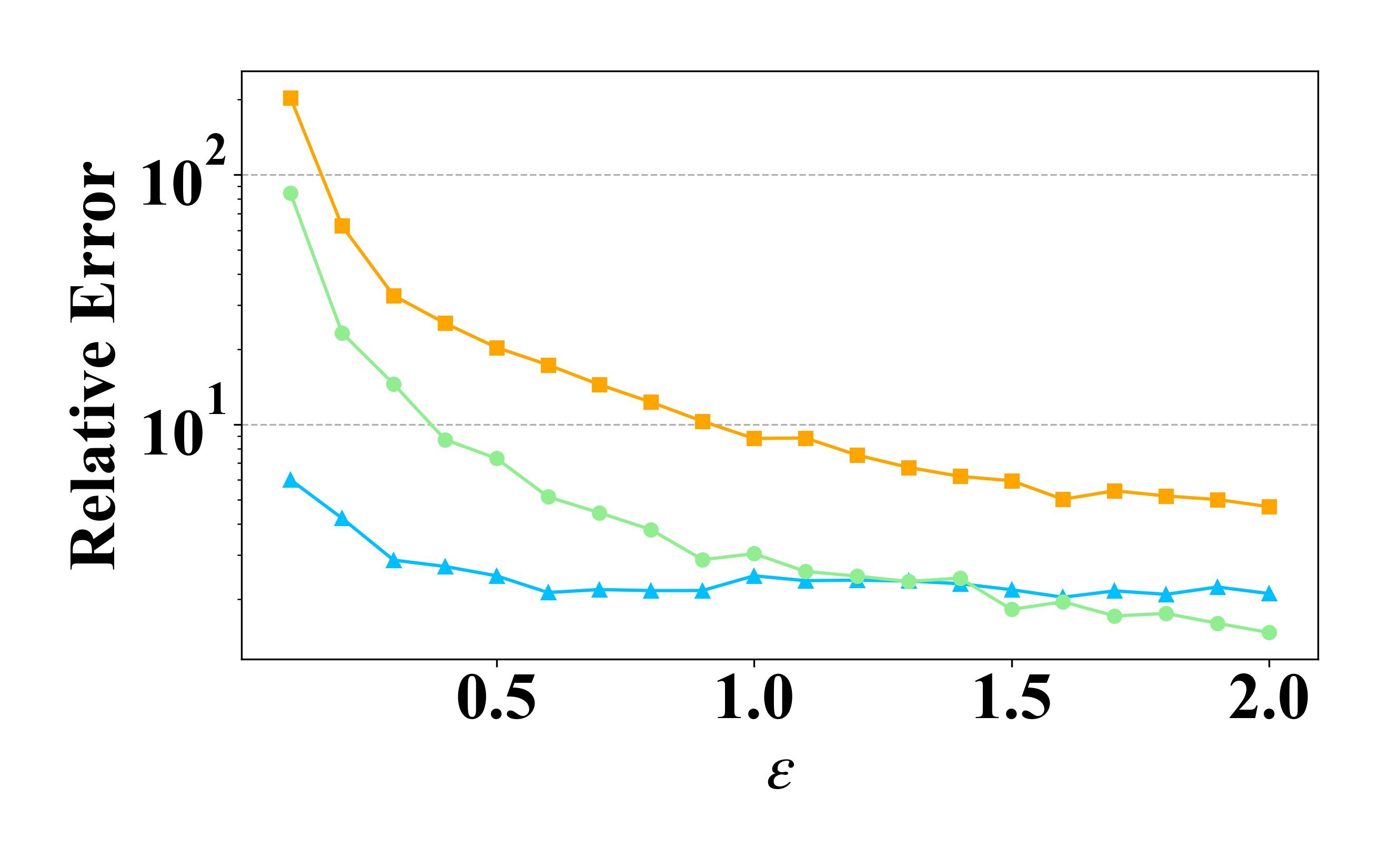}
        \caption{BAGraph-m10}
    \end{subfigure}
    \hfill
    \begin{subfigure}[b]{0.235\textwidth}
        \includegraphics[width=\textwidth]{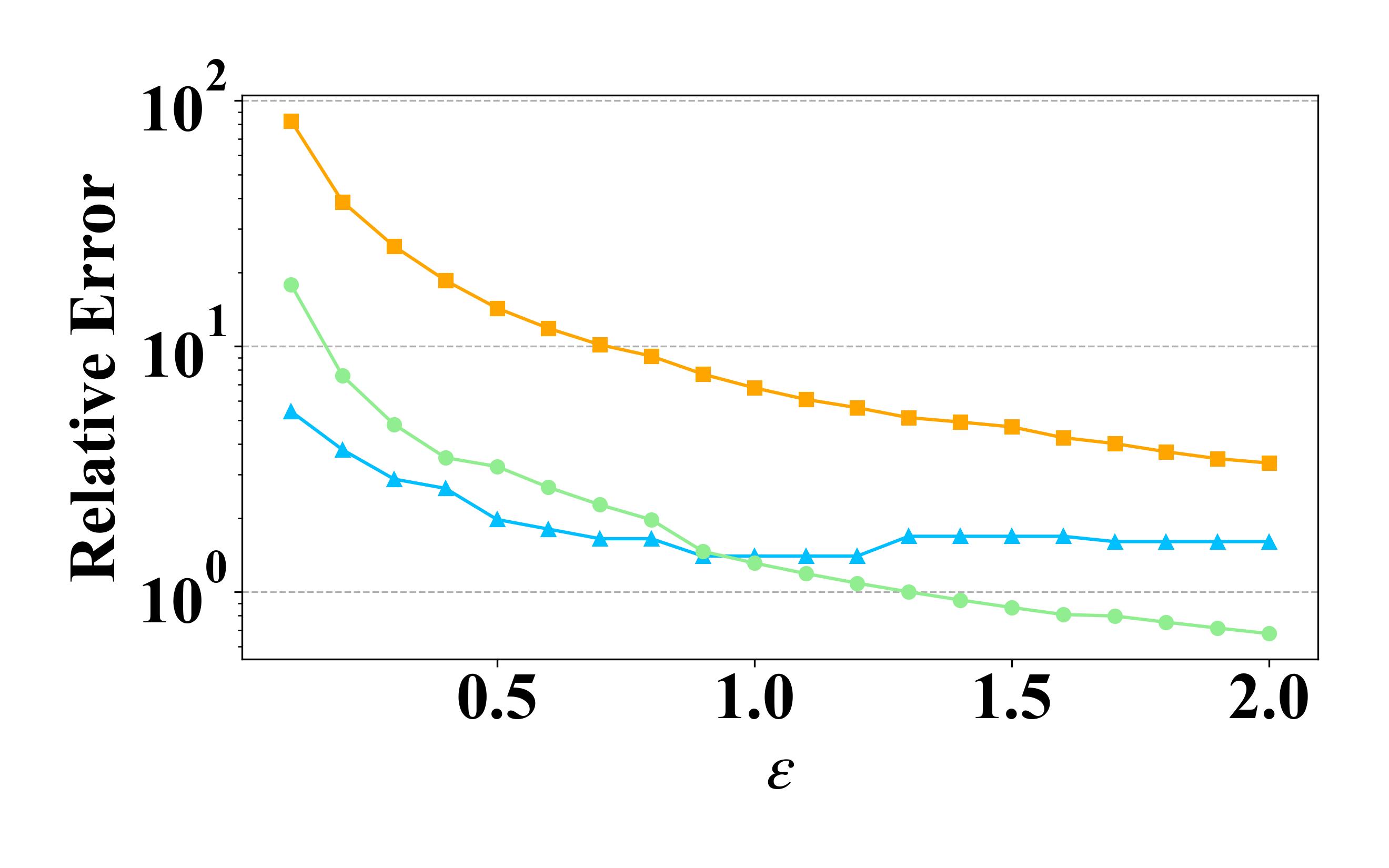}
        \caption{BAGraph-m50}
    \end{subfigure}
    
    \vspace{1em} 
    
    \begin{subfigure}[b]{0.235\textwidth}
        \includegraphics[width=\textwidth]{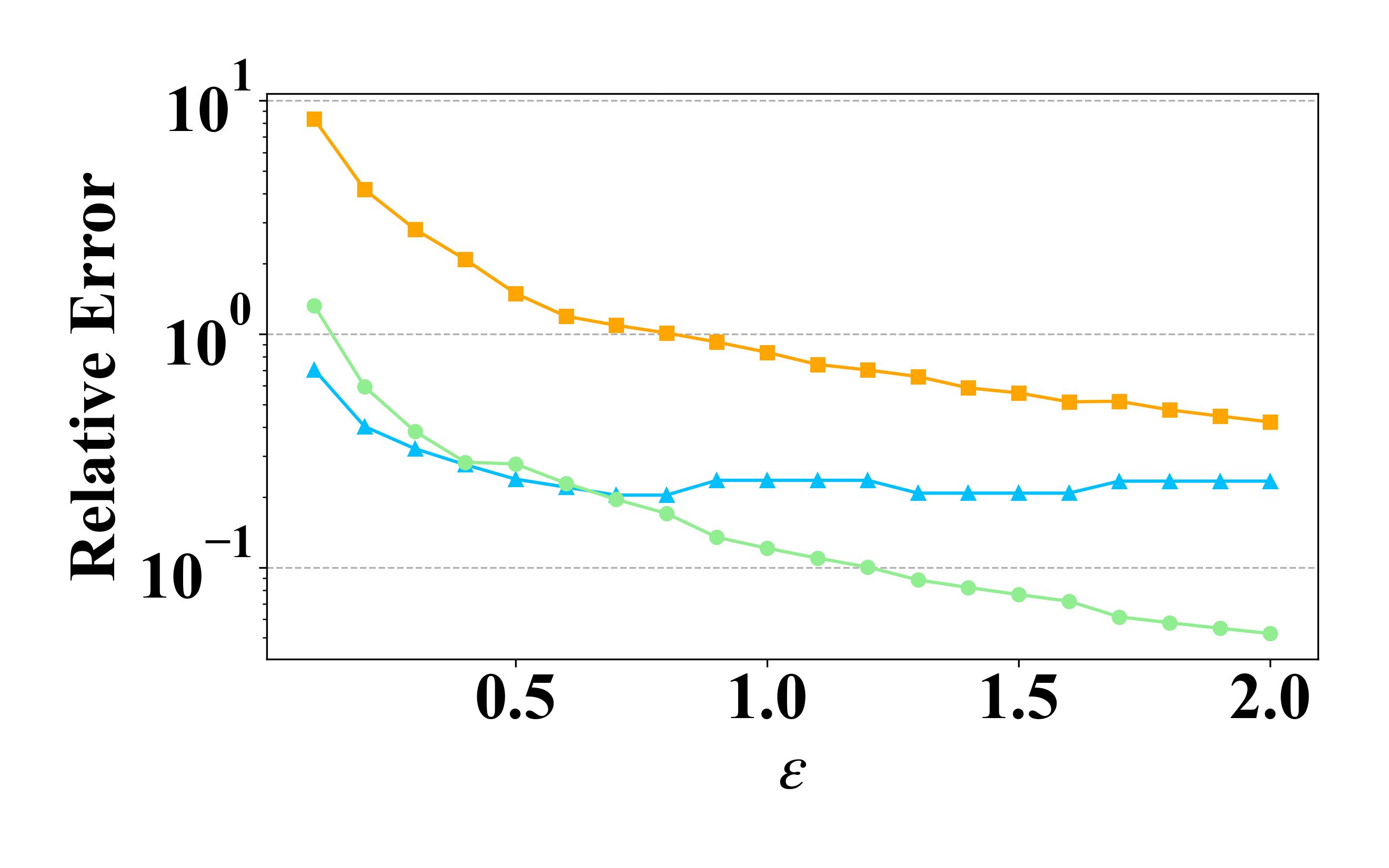}
        \caption{BAGraph-m100}
    \end{subfigure}
    \hfill
    \begin{subfigure}[b]{0.235\textwidth}
        \includegraphics[width=\textwidth]{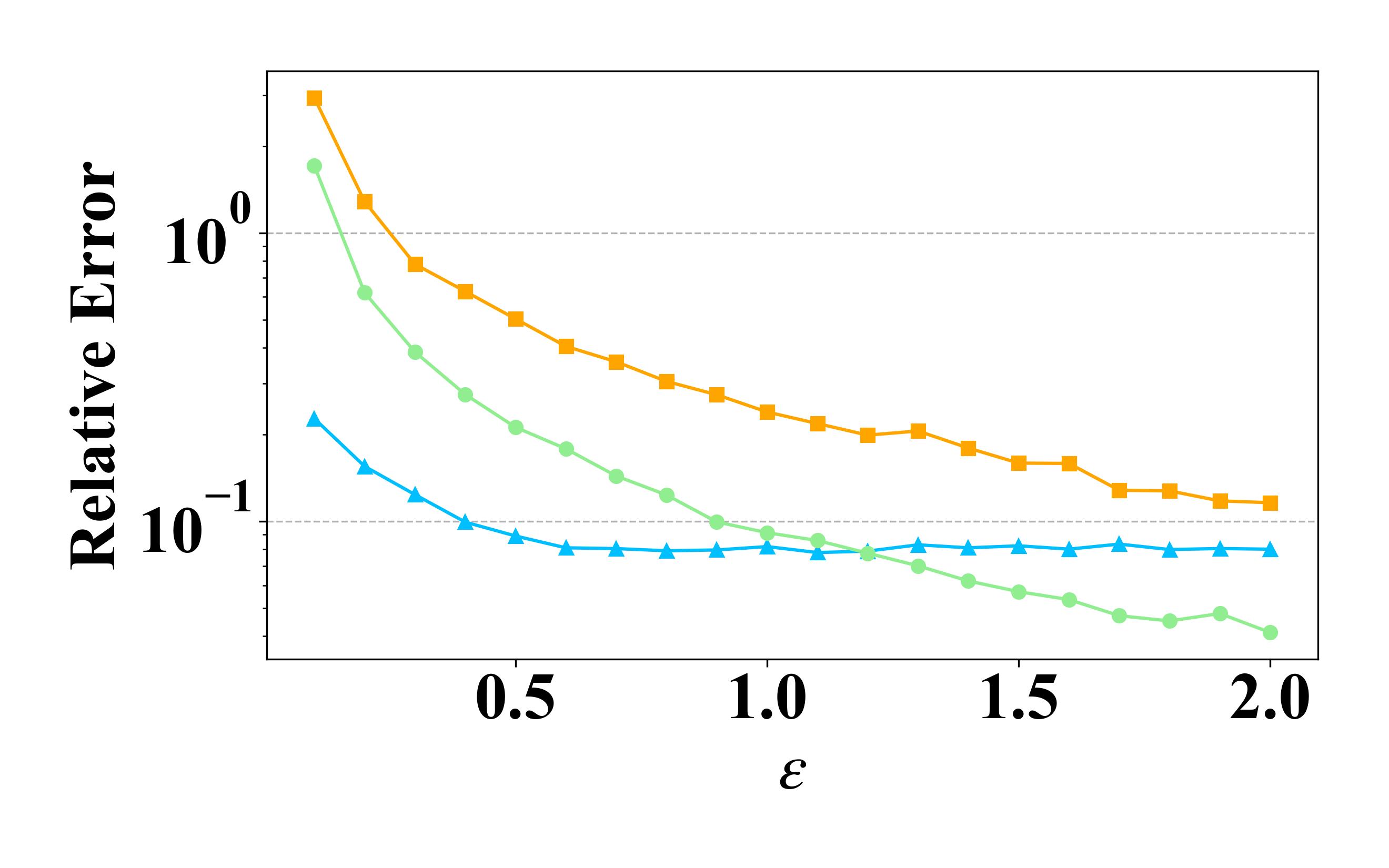}
        \caption{Facebook}
    \end{subfigure}

    \vspace{1em} 
    
    \begin{subfigure}[b]{0.235\textwidth}
        \includegraphics[width=\textwidth]{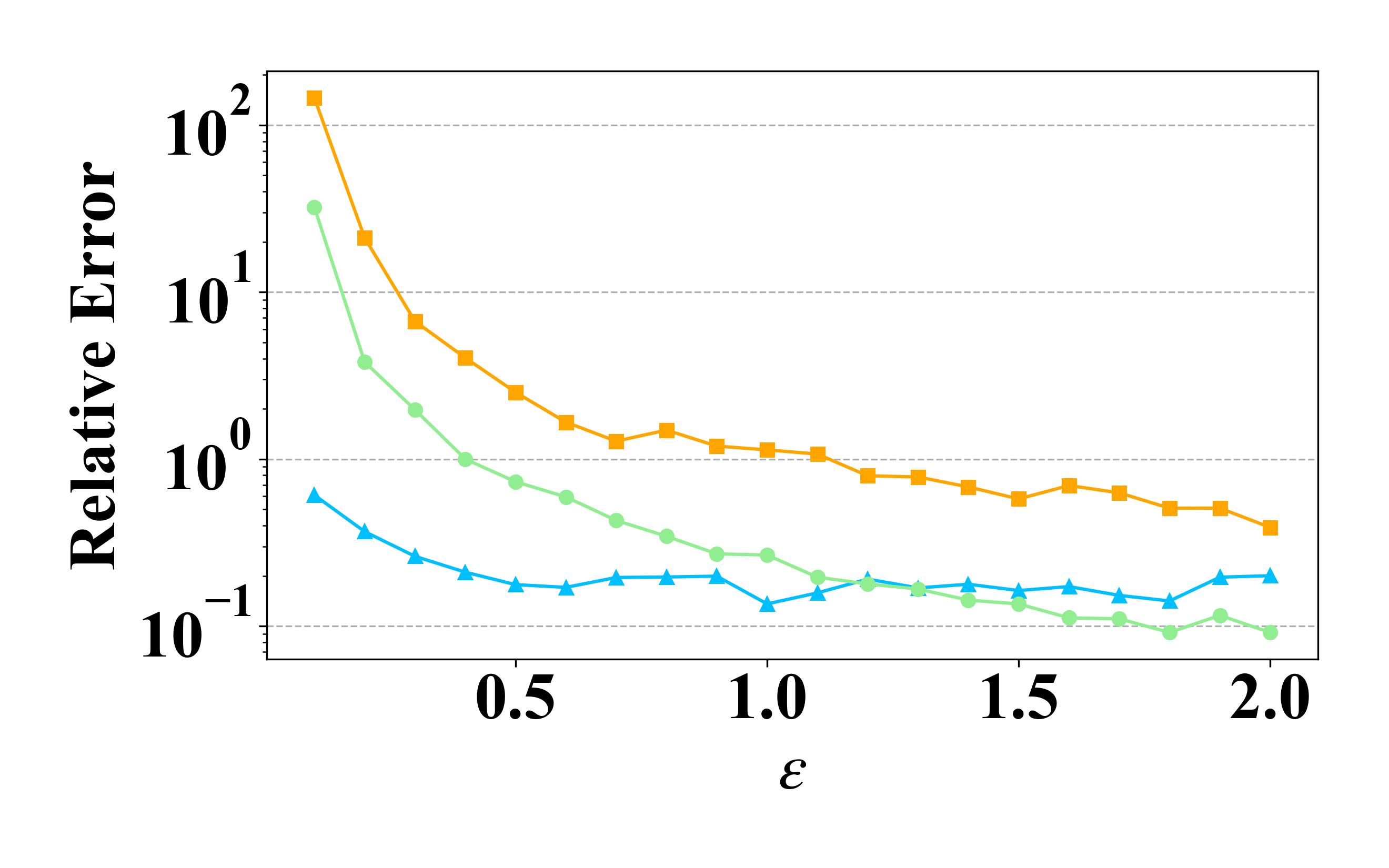}
        \caption{Deezer}
    \end{subfigure}
    \hfill
    \begin{subfigure}[b]{0.235\textwidth}
        \includegraphics[width=\textwidth]{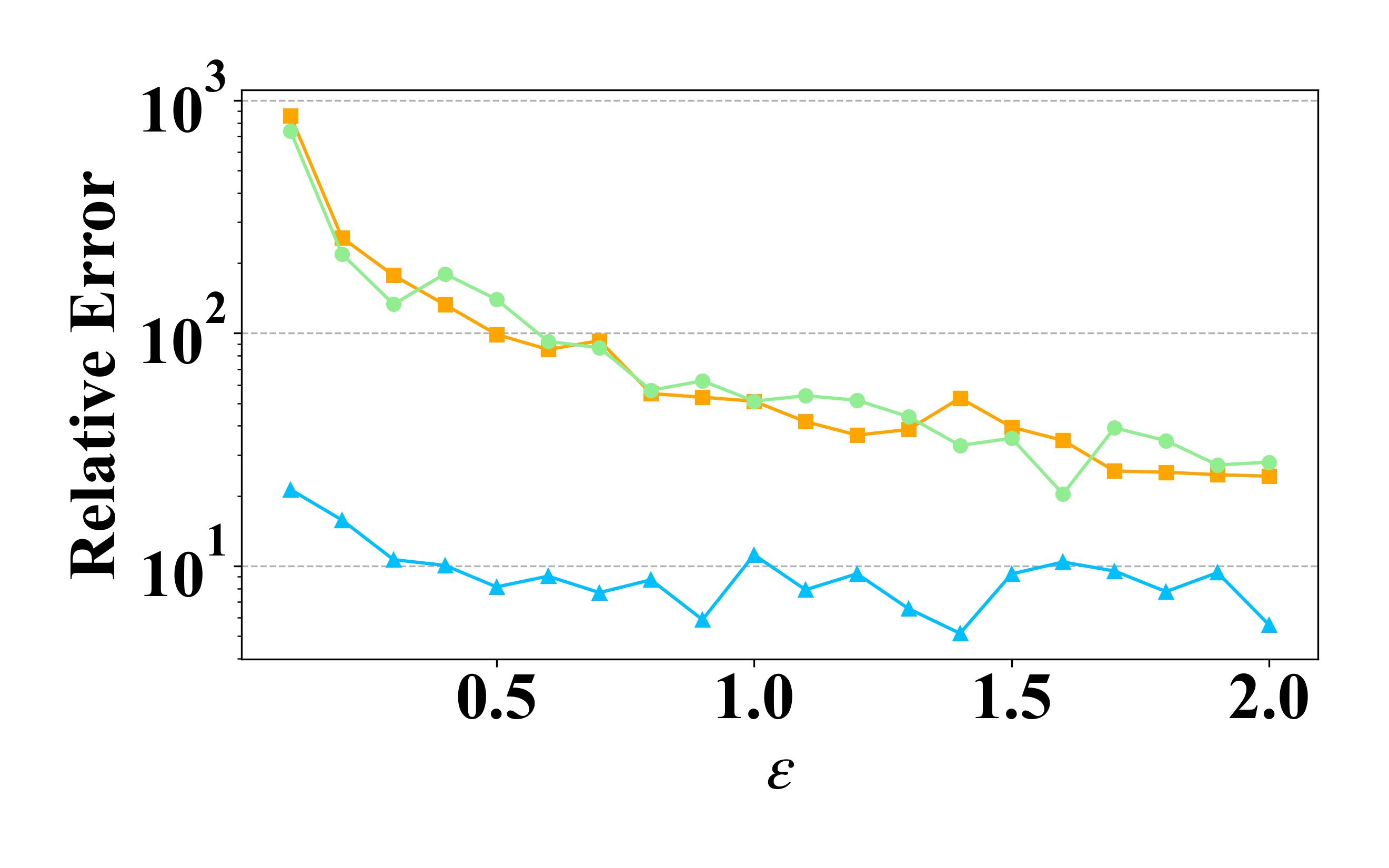}
        \caption{GitHub}
    \end{subfigure}

    \caption{Relation between the relative error and $\varepsilon$ in edge DP.}
    \label{fig1}
\end{figure}

All the experimental results are shown in Fig.\ref{fig1}. It is apparent to see that for $\varepsilon$ in question, $\mathbf{Decentral_{ru}}$ has a better performance than $\mathbf{Local_{ru}}$. This means that a broader view is conducive to obtaining better estimates. At the same time, given the low level of privacy budget $\varepsilon$, $\mathbf{Shuffle_{ru}}$ is always optimal in comparison with the other two algorithms as this algorithm has a strictly smaller relative error than them. This is in line with the theoretical analysis in the preceding section. This is mainly because its ingredient, Shuffle Model, owns its merit, i.e., privacy amplification effect. As encountered in the literature \cite{dwork2006differential,hay2009accurate, day2016publishing, ma2024ptt}, all the previous DP-based schemes prefer an assumption of low privacy budget. In a word, $\mathbf{Shuffle_{ru}}$ is the best choice among these proposed algorithms.  

\subsubsection{Relation between the Sign Accuracy and $\varepsilon$} 
Next, We evaluate the relationship between the sign accuracy and the privacy parameter $\varepsilon$ in edge DP. Fig.\ref{fig2} shows the results of the experiment. It is clear that the accuracy of $\mathbf{Decentral_{ru}}$ is higher than that of $\mathbf{Local_{ru}}$, which reinforces the validity of using a broader view to improve our estimation accuracy. In addition, $\mathbf{Shuffle_{ru}}$ is significantly better than the other two, and the accuracy is almost always 100\%. This suggests that it is highly effective to take advantage of the shuffle model to improve the estimation accuracy under DP.

\begin{figure}[thbp!]
    \centering
    \begin{subfigure}[b]{0.5\textwidth}
        \includegraphics[width=\textwidth]{RE+Sign.jpg}
    \end{subfigure}

    \vspace{1em}
    
    \begin{subfigure}[b]{0.235\textwidth}
        \includegraphics[width=\textwidth]{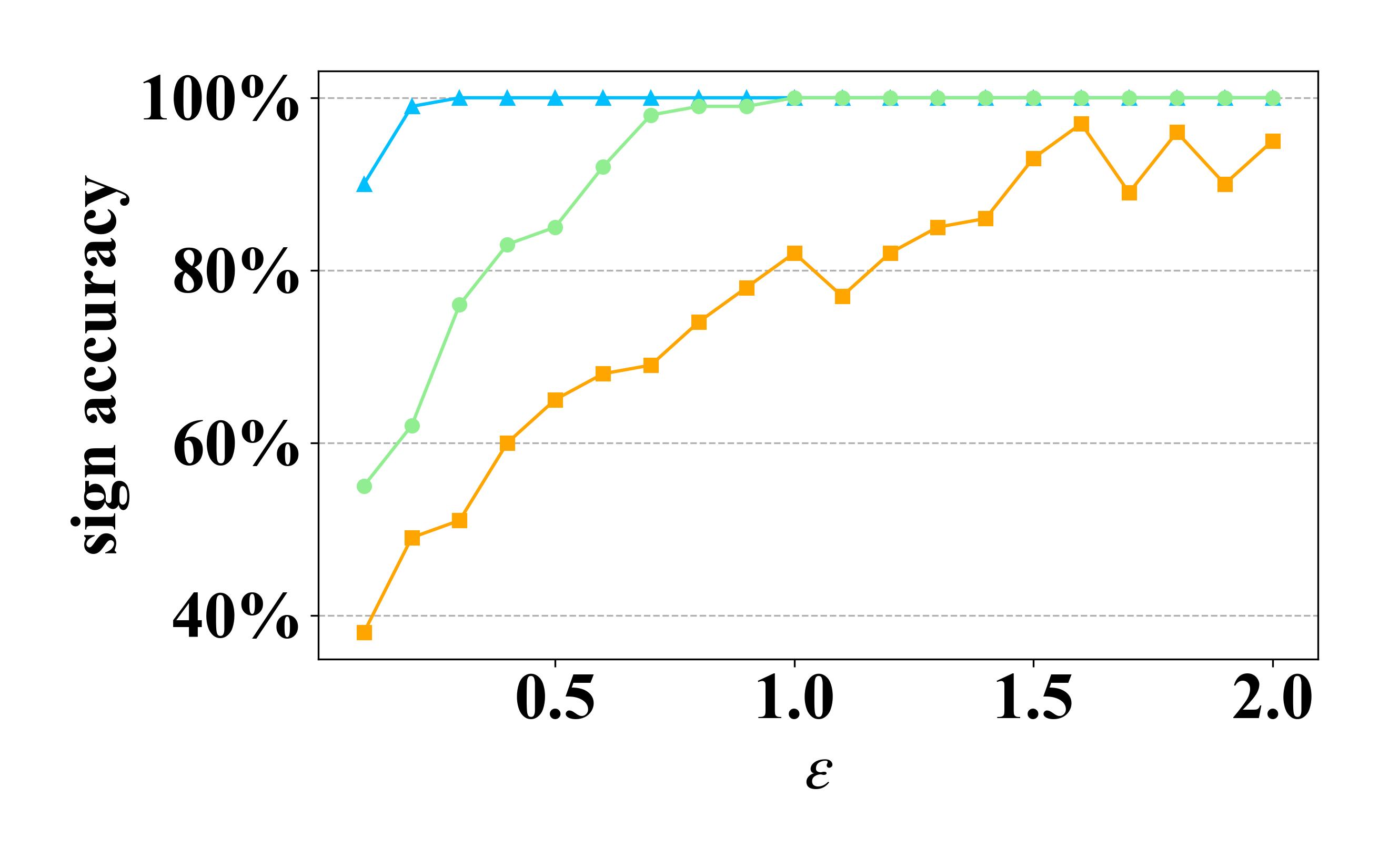}
        \caption{BAGraph-m10}
    \end{subfigure}
    \hfill
    \begin{subfigure}[b]{0.235\textwidth}
        \includegraphics[width=\textwidth]{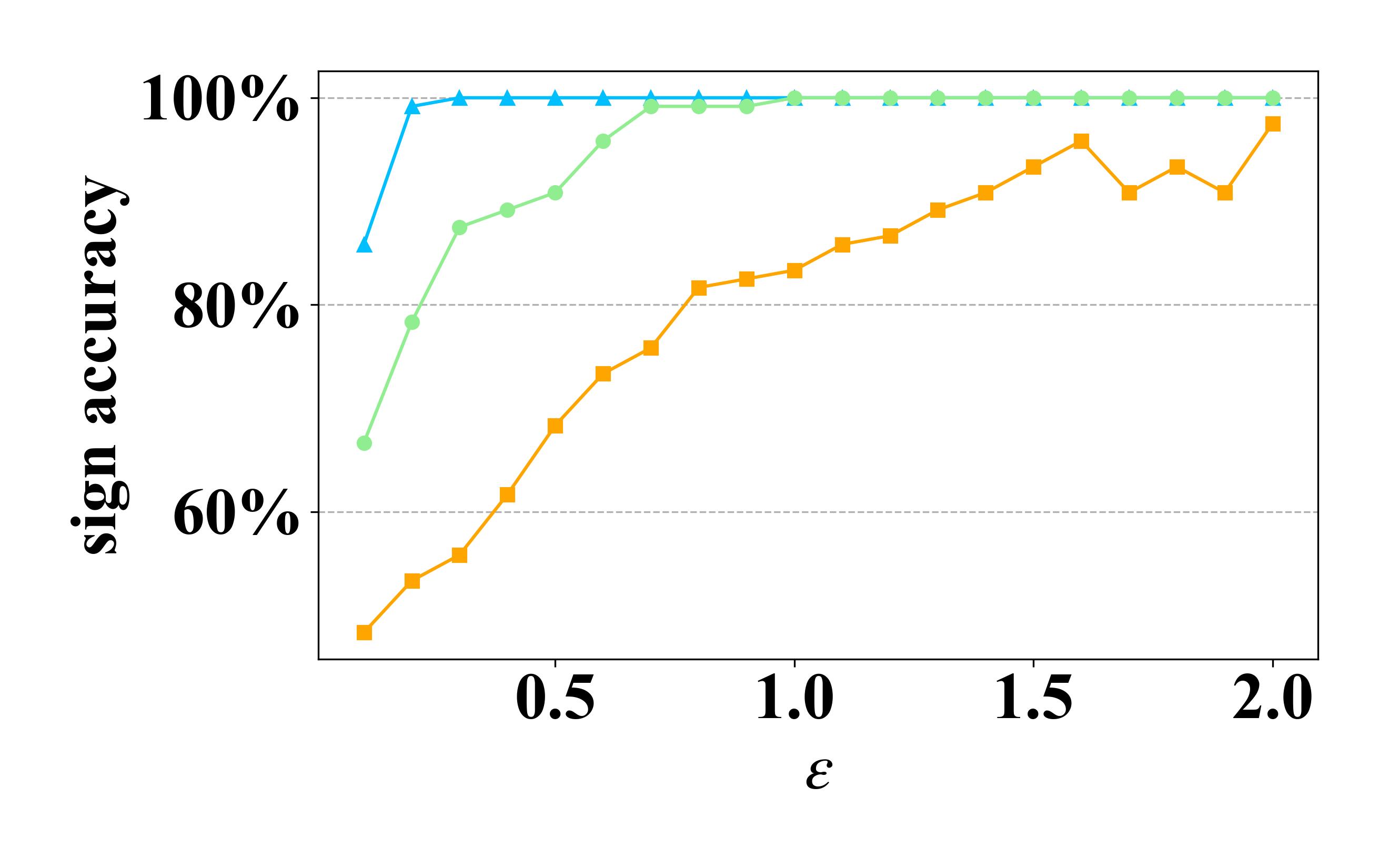}
        \caption{BAGraph-m50}
    \end{subfigure}
    
    \vspace{1em} 
    
    \begin{subfigure}[b]{0.235\textwidth}
        \includegraphics[width=\textwidth]{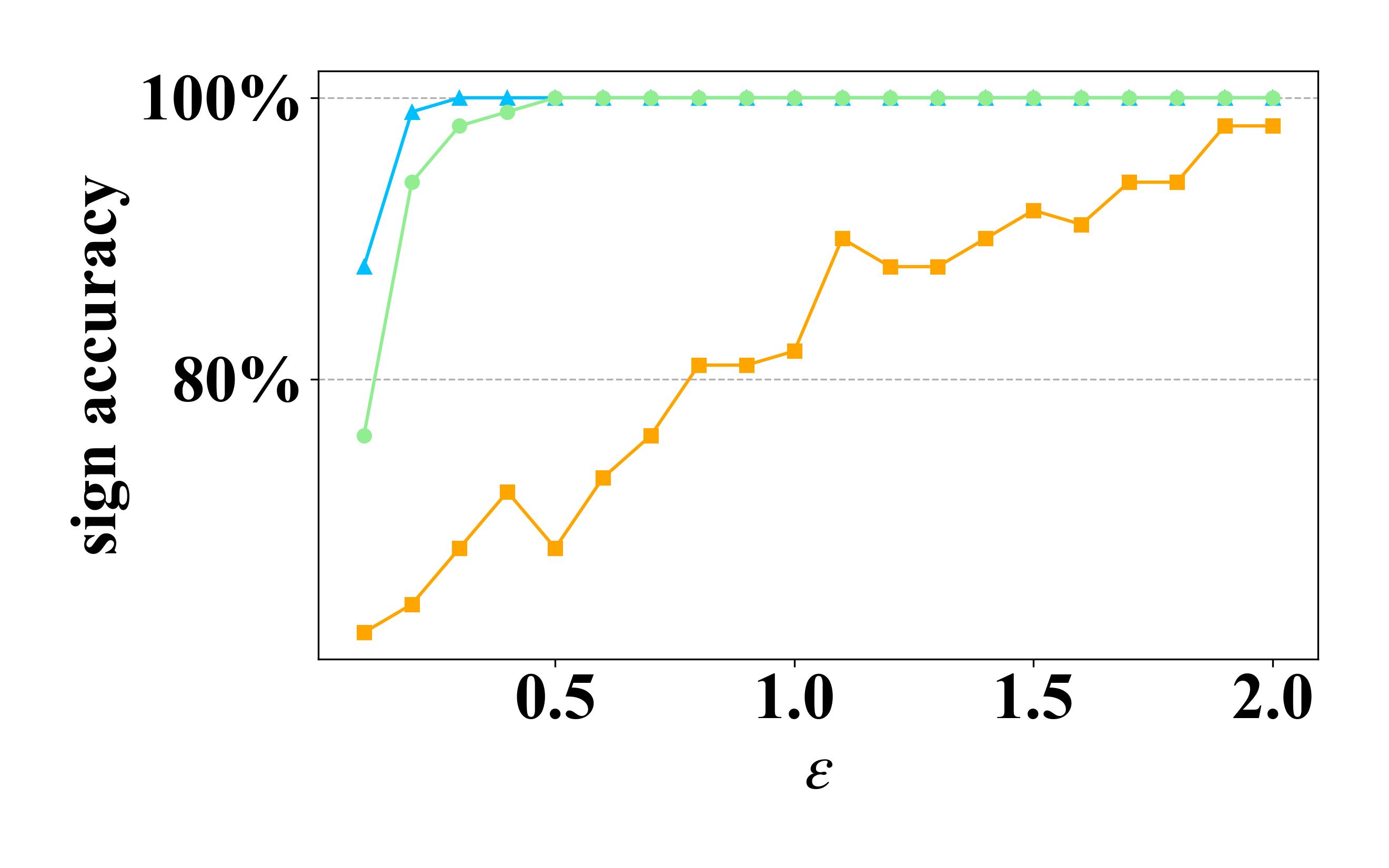}
        \caption{BAGraph-m100}
    \end{subfigure}
    \hfill
    \begin{subfigure}[b]{0.235\textwidth}
        \includegraphics[width=\textwidth]{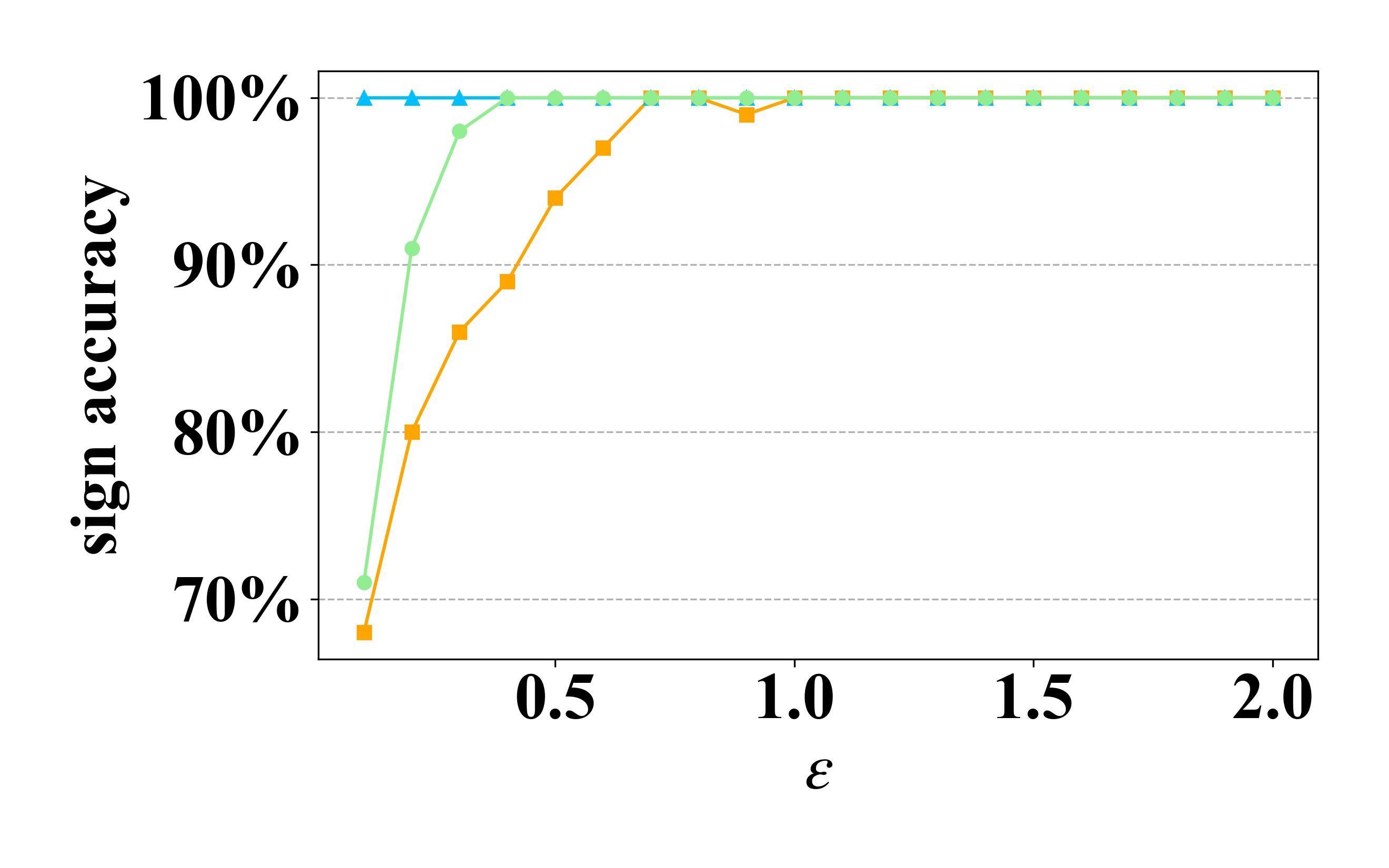}
        \caption{Facebook}
    \end{subfigure}

    \vspace{1em} 
    
    \begin{subfigure}[b]{0.235\textwidth}
        \includegraphics[width=\textwidth]{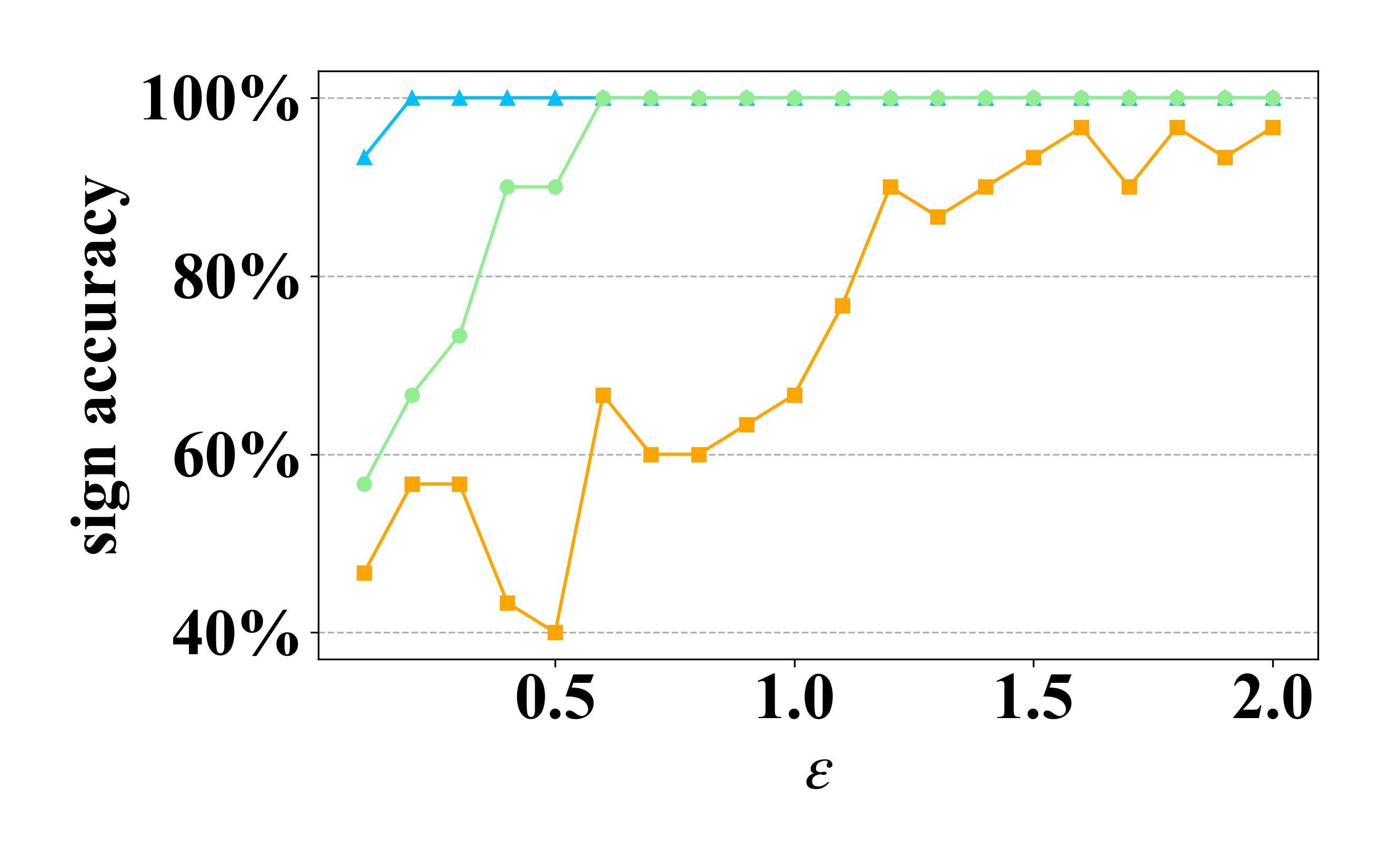}
        \caption{Deezer}
    \end{subfigure}
    \hfill
    \begin{subfigure}[b]{0.235\textwidth}
        \includegraphics[width=\textwidth]{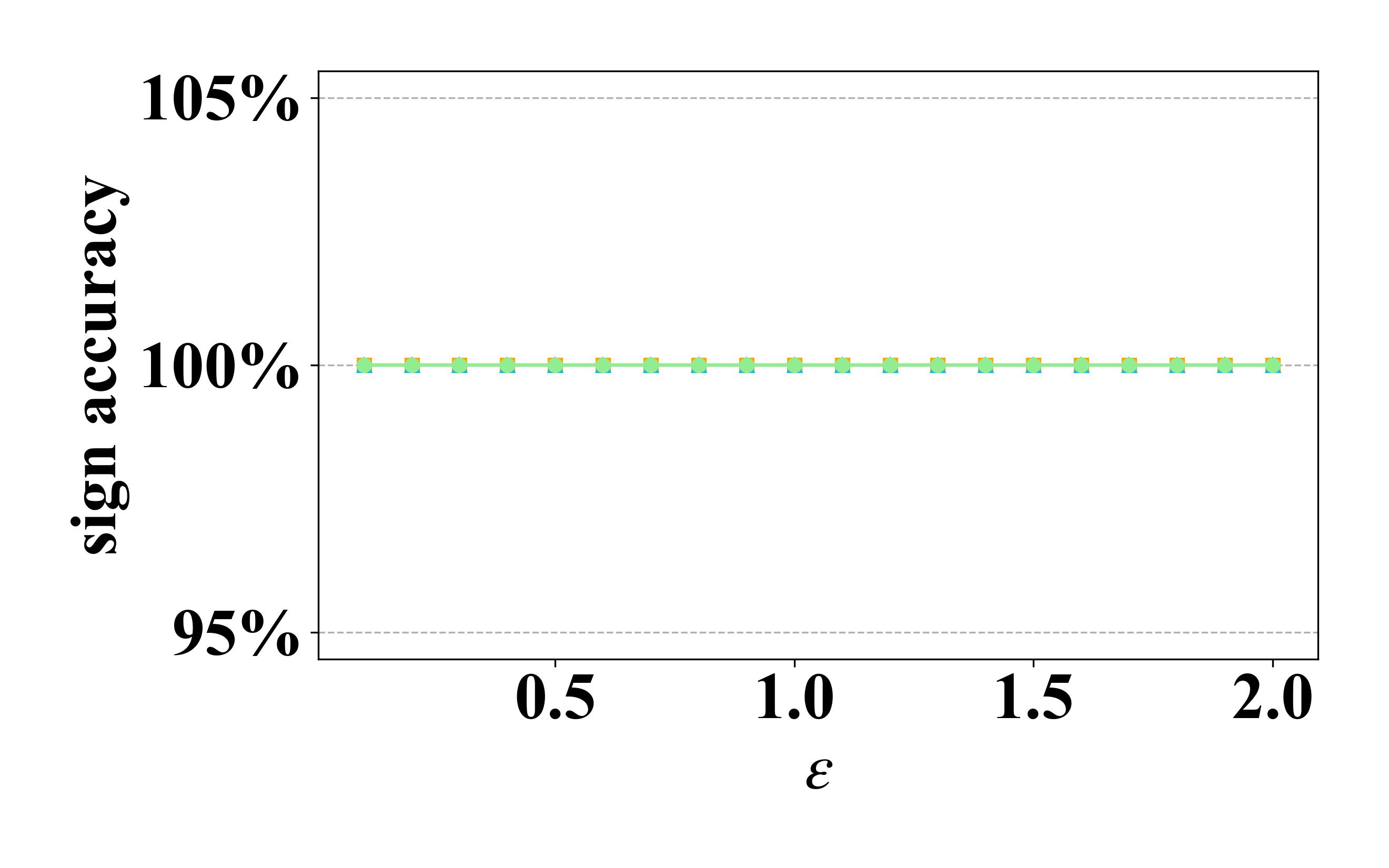}
        \caption{GitHub}
    \end{subfigure}

    \caption{Relation between the sign accuracy and $\varepsilon$ in edge DP.}
    \label{fig2}
\end{figure}

\subsubsection{Numerical bound vs. closed-form bound in $\mathbf{Shuffle_{ru}}$}
As mentioned in subsection 3.4, Feldman et al. provided two types of upper bounds on the privacy level in the shuffle model: a closed-form upper bound (see Eq.(\ref{eq:shuffle})) and a numerical upper bound (computed by the algorithm in \cite{feldman2022hiding}). Below, we compare the performance of $\mathbf{Shuffle_{ru}}$ using these two bounds. Here, we also set $\delta = 10^{-8}$.

\begin{figure}[thbp!]
    \centering
    \begin{subfigure}[b]{0.5\textwidth}
        \includegraphics[width=\textwidth]{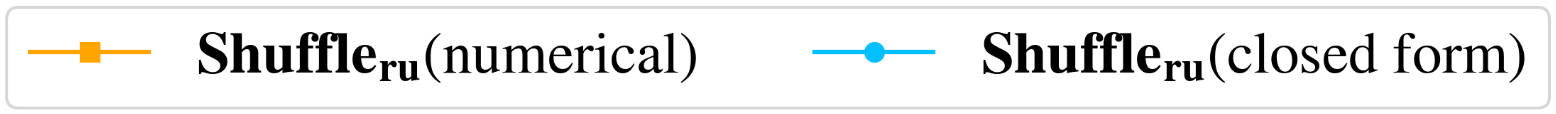}
    \end{subfigure}

    \vspace{1em}
    
    \begin{subfigure}[b]{0.235\textwidth}
        \includegraphics[width=\textwidth]{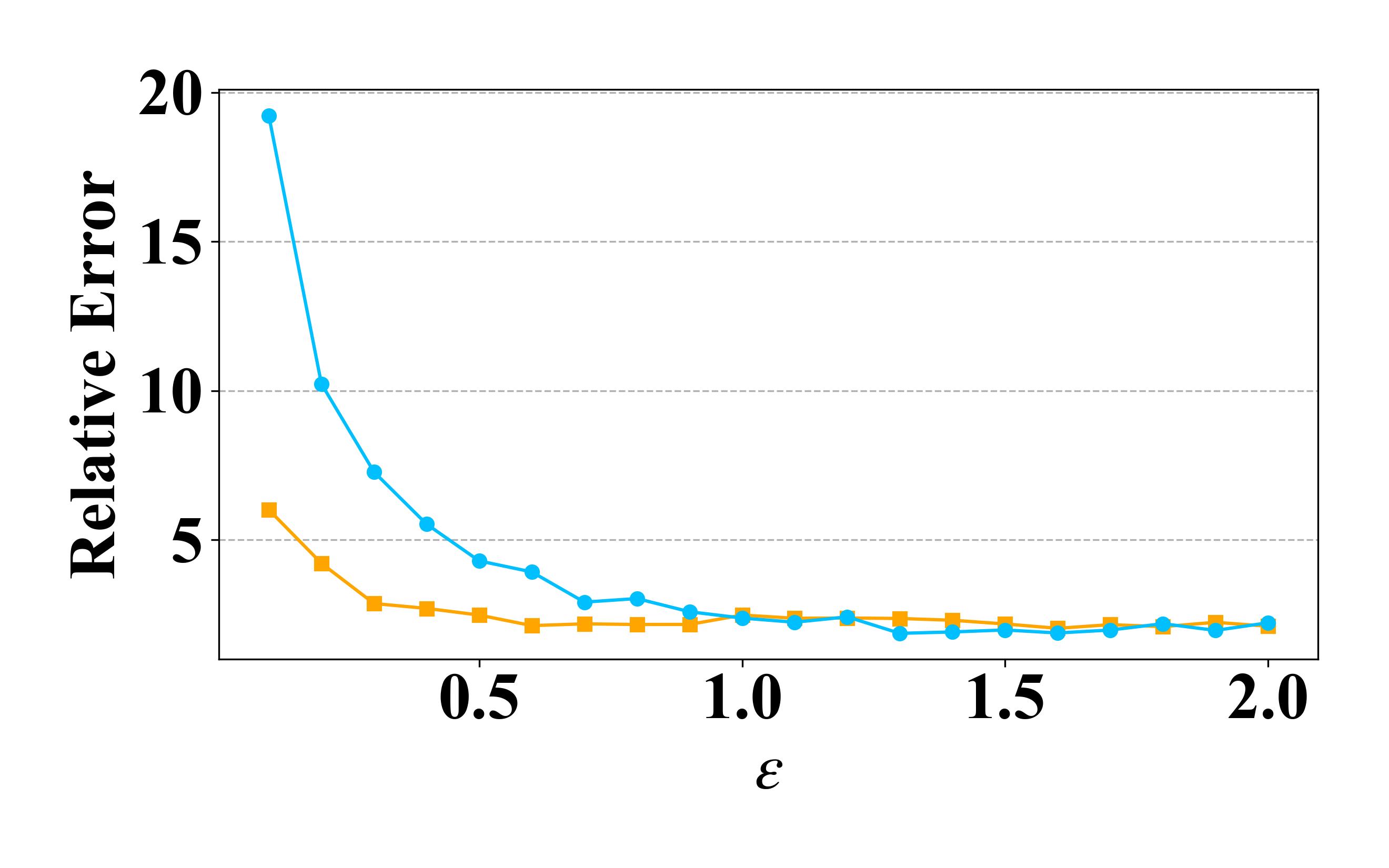}
        \caption{BAGraph-m10}
    \end{subfigure}
    \hfill
    \begin{subfigure}[b]{0.235\textwidth}
        \includegraphics[width=\textwidth]{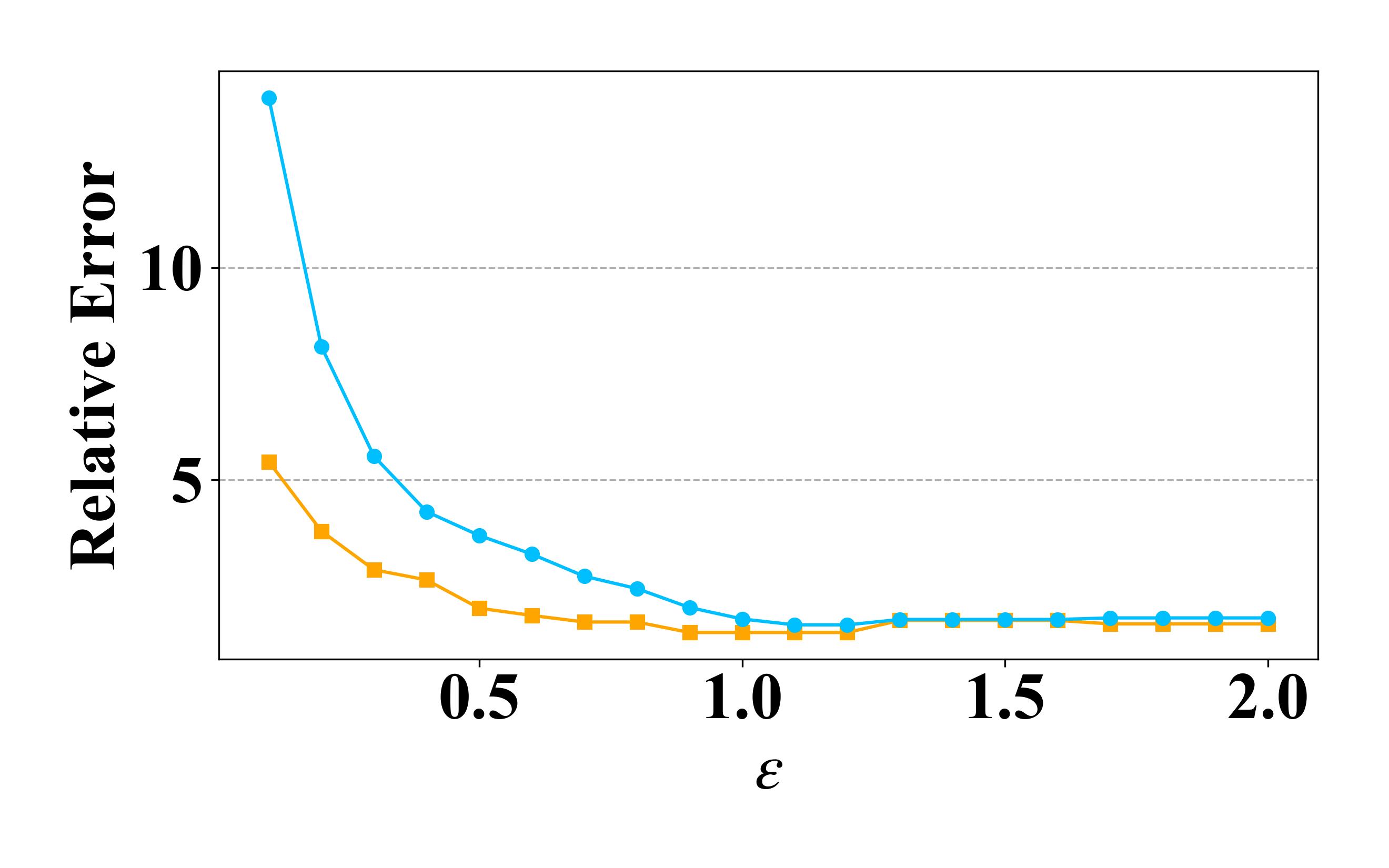}
        \caption{BAGraph-m50}
    \end{subfigure}
    
    \vspace{1em} 
    
    \begin{subfigure}[b]{0.235\textwidth}
        \includegraphics[width=\textwidth]{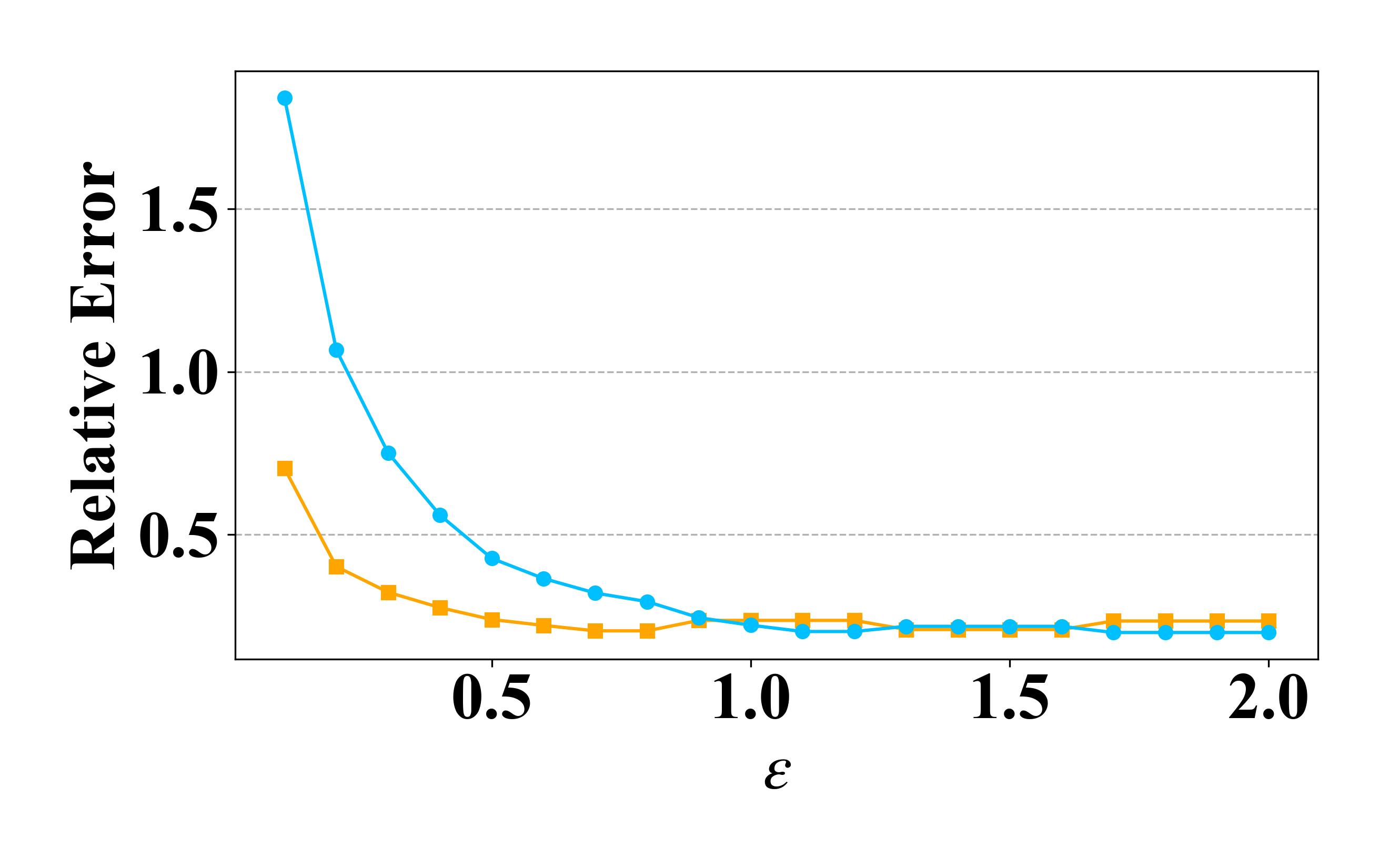}
        \caption{BAGraph-m100}
    \end{subfigure}
    \hfill
    \begin{subfigure}[b]{0.235\textwidth}
        \includegraphics[width=\textwidth]{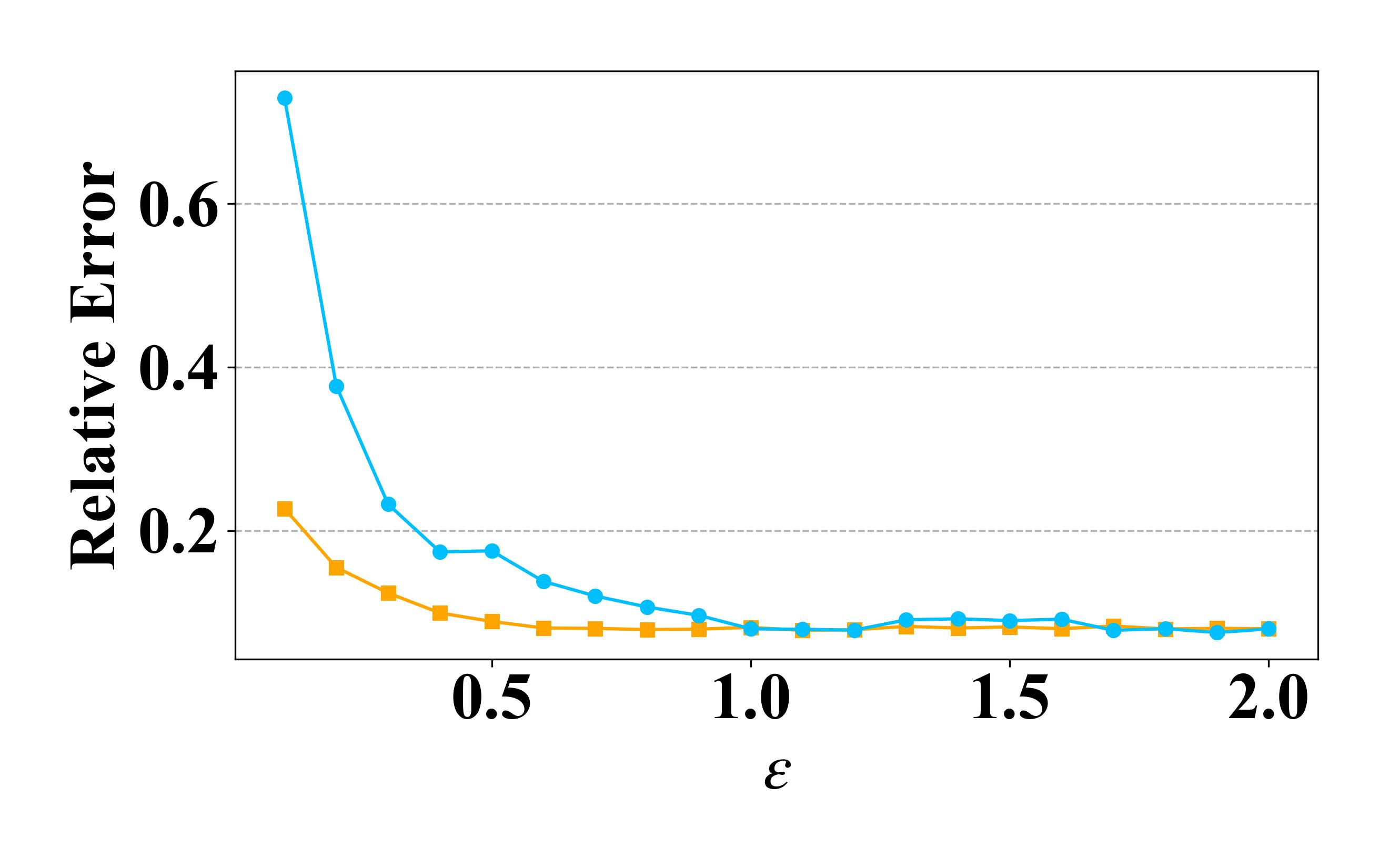}
        \caption{Facebook}
    \end{subfigure}

    \vspace{1em} 
    
    \begin{subfigure}[b]{0.235\textwidth}
        \includegraphics[width=\textwidth]{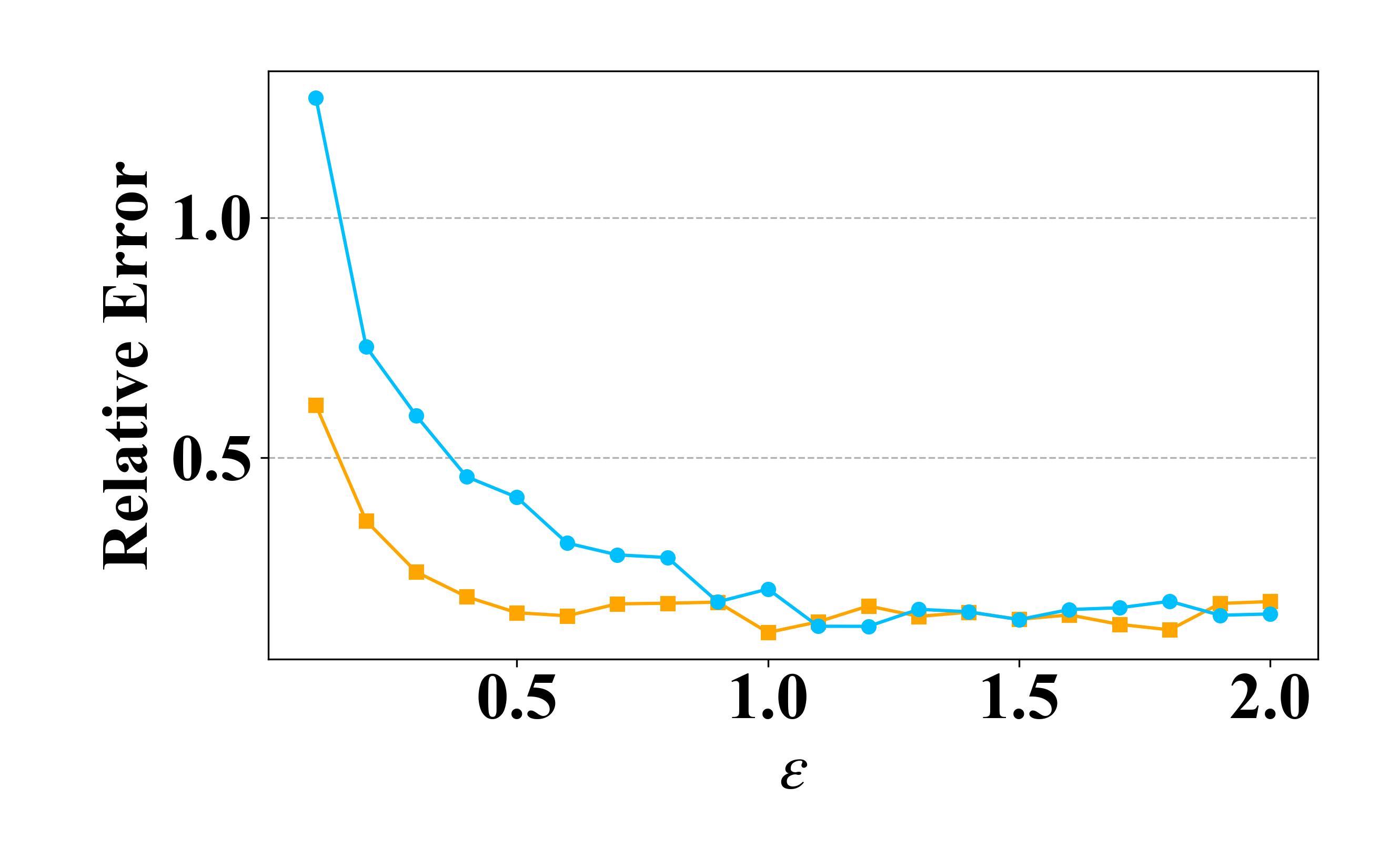}
        \caption{Deezer}
    \end{subfigure}
    \hfill
    \begin{subfigure}[b]{0.235\textwidth}
        \includegraphics[width=\textwidth]{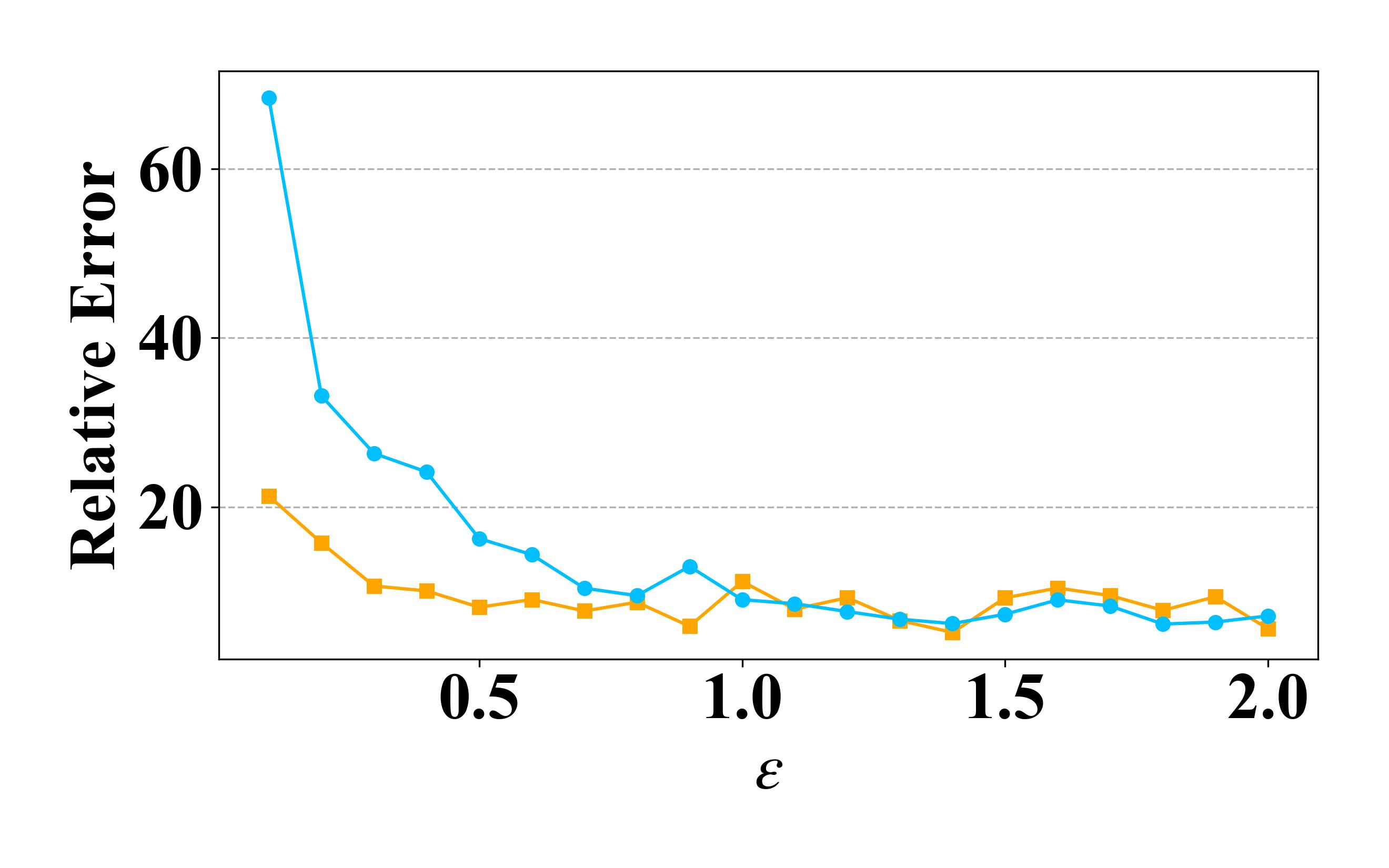}
        \caption{GitHub}
    \end{subfigure}

    \caption{Numerical bound vs. closed-form bound in $\mathbf{DeShuffle_{ru}}$.}
    \label{fig4}
\end{figure}

Figure \ref{fig4} shows that when $\varepsilon$ is small, $\mathbf{Shuffle_{ru}}$ equipped with the numerical upper bound achieves a lower relative error, and when $\varepsilon \ge 1$, the relative errors of both methods are nearly the same. This is because, in both cases, the corresponding local privacy budget $\varepsilon_0$ approaches its maximum value of $\log \left( \frac{n}{16\log \left( 2/\delta \right)} \right) $ when $\varepsilon \ge 1$. In summary, $\mathbf{Shuffle_{ru}}$ with the numerical upper bound generally outperforms the version with the closed-form upper bound.

\section{Discussion}

In this work, we study the estimation of network assortativity under DP for the first time and propose three algorithms $\mathbf{Local_{ru}}$, $\mathbf{Shuffle_{ru}}$ and $\mathbf{Decentral_{ru}}$. As mentioned above, the techniques built in the development and analysis of algorithms have a wide range of applications. For example, these algorithms can be adapted to analyze the degree-degree correlation distribution of networks. Although we focus only on simple networks, the proposed algorithms can be slightly modified to achieve the same task on other types of networks including weighted networks, directed networks. Hence, we would like to see more applications based on our work in the near future.     

On the other hand, our work is just the tip of the iceberg. There is still room for further improvement. For instance, we make use of some relatively rough methods when determining bounds for some quantities including MSE, time and space complexity. Therefore, it is of great interest to capture new bounds using advanced techniques, which is left as our next move. What's more, it is a notable problem of how to derive the tight upper bound of MSE for DP-algorithm addressing the concern we are discussing in this work. In addition, it is also interesting to propose some new schemes for the purpose of accurately estimating network assortativity under LDP.      

It is well known that there is a long history of network structure analysis \cite{newman2018networks,newman2003structure}. At the same time, it has received more attention to analyze structure of network in the requirement of privacy protection, such as subgraph counting \cite{imola2021locally}, community detection \cite{fu2024community} and graph synthesis \cite{qin2017generating}. This work studies another important structural property of network under DP. With the increased awareness of privacy protection, there is growing emphasis on data privacy disclosure issues. Network, as a class of fundamental yet significant structure depicting data, will continue to gain more attention in the field of privacy-preserving data analysis. In the future, therefore, we explore other structural properties of network from the viewpoint of privacy protection.

\section{Conclusion}

In summary, we are the first to consider the problem of how to accurately estimate network assortativity under the demand of privacy protection. Specifically, we propose three DP-based algorithms, i.e., $\mathbf{Local_{ru}}$, $\mathbf{Shuffle_{ru}}$ and $\mathbf{Decentral_{ru}}$, to address this concern in two distinct scenarios. With rigorous theoretical analysis, we show that the proposed algorithms yield an unbiased estimation for network assortativity. Also, we make use of MSE to measure the steadiness of algorithms. At the same time, we determine time and space complexity of three algorithms. Additionally, we conduct extensive computer simulations, which demonstrate that experimental evaluations are in line with theoretical analysis. At last, we declare some potential applications of the light shed in the development of our algorithms and point out our next move.  

\begin{acks}
The research was supported by the National Natural Science Foundation of China No. 62403381, the Fundamental Research Funds for the Central Universities No. G2023KY05105 and the Key Research and Development Plan of Shaanxi Province No. 2024GX-YBXM-021.
\end{acks}


\bibliographystyle{ACM-Reference-Format}
\bibliography{sample}

\section*{Supplementary Material}

Below are more details that are omitted in the main text due to the space limitation. 


\subsection*{Proof of Theorem 4.2}

    \textbf{Theorem 4.2} \emph{The estimate $\hat{q}_{ru}\left( G\right)$ produced by $\mathbf{Local_{ru}}$ is unbiased, i.e., $\mathbb{E}\left( \hat{q}_{ru}\left( G\right)\right)=q_{ru}\left( G\right)$.}

\setcounter{equation}{0}                                                                                                        \renewcommand\theequation{A.\arabic{equation}}   

Before beginning with the detailed proof, we need to introduce a lemma as below. 

\begin{lemma}[\cite{kozubowski2010multitude}]
    Given a random variable $X \sim \text{Lap}\left( x,b\right)$, then
    \begin{equation}
        \mathbb{E}\left( X^r\right)=\sum_{k=0}^r{\left\{ \frac{1}{2} \left[ 1+\left(-1\right)^k\right] \frac{r!}{\left( r-k\right) !}b^k x^{r-k} \right\}}.
    \end{equation}
    \label{lemma:15}
\end{lemma}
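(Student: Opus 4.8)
The plan is to reduce the statement to the standard centered-Laplace moments via the location shift $X = x + Z$, where $Z \sim \mathrm{Lap}(0,b)$ has the symmetric density $\frac{1}{2b}e^{-|z|/b}$, and then expand by the binomial theorem. The parameterization $\mathrm{Lap}(x,b)$ here means location $x$ and scale $b$, so this decomposition is exact and $Z$ is genuinely centered at $0$.

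First I would apply the binomial theorem together with linearity of expectation,
\[
\mathbb{E}(X^r) = \mathbb{E}\bigl((x+Z)^r\bigr) = \sum_{k=0}^r \binom{r}{k}\, x^{r-k}\, \mathbb{E}(Z^k),
\]
which reduces the claim to computing the raw moments $\mathbb{E}(Z^k)$ of the centered Laplace variable. All these moments exist, since the Laplace density has exponential tails, so no convergence issue arises.

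Next I would compute $\mathbb{E}(Z^k)$. Because the density of $Z$ is symmetric about the origin, every odd moment vanishes: $\mathbb{E}(Z^k)=0$ for odd $k$. For even $k$, using symmetry and the substitution $u = z/b$,
\[
\mathbb{E}(Z^k) = 2\int_0^\infty z^k \frac{1}{2b} e^{-z/b}\, dz = \frac{1}{b}\int_0^\infty z^k e^{-z/b}\, dz = \frac{1}{b}\, b^{k+1}\,\Gamma(k+1) = b^k\, k!.
\]
Both parities can then be merged into the single expression $\mathbb{E}(Z^k) = \frac{1}{2}\bigl[1+(-1)^k\bigr] b^k\, k!$, since the factor $\frac{1}{2}[1+(-1)^k]$ equals $1$ when $k$ is even and $0$ when $k$ is odd.

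Finally I would substitute this back and simplify the binomial coefficient via $\binom{r}{k}\, k! = \frac{r!}{(r-k)!}$, obtaining
\[
\mathbb{E}(X^r) = \sum_{k=0}^r \binom{r}{k}\, x^{r-k}\cdot \frac{1}{2}\bigl[1+(-1)^k\bigr] b^k\, k! = \sum_{k=0}^r \frac{1}{2}\bigl[1+(-1)^k\bigr] \frac{r!}{(r-k)!}\, b^k\, x^{r-k},
\]
which is exactly the stated formula. I do not expect a genuine obstacle here, as the argument is elementary; the only points requiring care are purely bookkeeping: correctly fixing the location/scale convention so that $X = x + Z$ with scale $b$, justifying the vanishing of odd central moments by the symmetry of the density, and packaging the parity into the factor $\frac{1}{2}[1+(-1)^k]$ so that the resulting sum matches the claimed closed form term by term.
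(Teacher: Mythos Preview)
Your proof is correct. The paper does not actually prove this lemma; it simply quotes the moment formula from the cited reference and uses it as a black box in the proofs of Theorems~4.2 and~4.3. Your derivation via the location shift $X=x+Z$, the binomial expansion, and the standard computation $\mathbb{E}(Z^k)=\tfrac{1}{2}[1+(-1)^k]\,k!\,b^k$ for a centered Laplace variable is the natural elementary argument and fills in what the paper omits.
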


Now, let us give the proof of Theorem 4.2.
\begin{proof}
    It is clear to see that $\tilde{a}_{i,j}$ is in fact a Bernoulli random
    variable and $\tilde{d}_i \sim \text{Lap}\left( d_i,\frac{1}{\varepsilon_2} \right)$. Then, we have
\begin{subequations}
\label{eq:whole}
\begin{eqnarray}
\mathbb{E}\left( \tilde{a}_{i,j}\right)=a_{i,j}\left( 1-p \right) +\left( 1-a_{i,j} \right)p,\label{subeq:A-1-1}
\end{eqnarray}
\begin{equation}
\mathbb{E}\left( \tilde{d}_i^2\right)=d_i^2+\frac{2}{\varepsilon_2^2}.\label{subeq:A-1-2} \quad \text{(by Lemma \ref{lemma:15})}
\end{equation}
\begin{equation}
\mathbb{E}\left( \tilde{d}_{i}^{4} \right) =d_{i}^{4}+\frac{12}{\varepsilon_2 ^2}d_{i}^{2}+\frac{24}{\varepsilon_2 ^4} \quad \text{(by Lemma \ref{lemma:15})}
\end{equation}
\end{subequations}

    Next, we move to the proof of unbiasedness of $\hat{q}_{ru}\left( G\right)$. First, we obtain

    \begin{equation}
        \begin{aligned}
        \mathbb{E}\left( X_1 \right) &=\sum_{i=1}^n{\sum_{j=1}^{i-1}{\frac{\left( E\left( \tilde{a}_{i,j} \right) -p \right) E\left( \tilde{d}_i \right) E\left( \tilde{d}_j \right)}{1-2p}}} \\
        &=\sum_{i=1}^n{\sum_{j=1}^{i-1}{\frac{\left[ a_{i,j}\left( 1-p \right) +\left( 1-a_{i,j} \right)p -p \right] d_id_j}{1-2p}}} \\
        &=\sum_{i=1}^n{\sum_{j=1}^{i-1}{a_{i,j}d_id_j}}\\
        &=\sum_{\left( v_i,v_j \right) \in E}{d_id_j}.
        \end{aligned}
    \end{equation}
Similarly, it is easy to check 
\begin{equation}
    \begin{aligned}
        \mathbb{E}\left( Y_1 \right) &=\mathbb{E}\left[ \left( \frac{1}{2}\sum_{i=1}^n{\tilde{d}_{i}^{2}}-\frac{n+2}{\varepsilon _{2}^{2}} \right) ^2-\frac{5n+4}{\varepsilon _{2}^{4}} \right] \\
        &=\mathbb{V}\left( \frac{1}{2}\sum_{i=1}^n{\tilde{d}_{i}^{2}}-\frac{n+2}{\varepsilon _{2}^{2}} \right) -\frac{5n+4}{\varepsilon _{2}^{4}} \\
        &\quad +\left[ \mathbb{E}\left( \frac{1}{2}\sum_{i=1}^n{\tilde{d}_{i}^{2}}-\frac{n+2}{\varepsilon _{2}^{2}} \right) \right] ^2\\
        &=\frac{1}{4}\sum_{i=1}^n{\left[ \mathbb{E}\left( \tilde{d}_{i}^{4} \right) -\left[ \mathbb{E}\left( \tilde{d}_{i}^{2} \right) \right] ^2 \right]}-\frac{5n+4}{\varepsilon _{2}^{4}} \\
        &\quad +\left[ \frac{1}{2}\sum_{i=1}^n{\mathbb{E}\left( \tilde{d}_{i}^{2} \right)}-\frac{n+2}{\varepsilon _{2}^{2}} \right] ^2\\
        &=\frac{1}{4}\sum_{i=1}^n{\left( \frac{8}{\varepsilon _{2}^{2}}d_{i}^{2}+\frac{20}{\varepsilon _{2}^{4}} \right)}-\frac{5n+4}{\varepsilon _{2}^{4}} \\
        &\quad +\left[ \frac{1}{2}\sum_{i=1}^n{\left( \frac{2}{\varepsilon _{2}^{2}}+d_{i}^{2} \right)}-\frac{n+2}{\varepsilon _{2}^{2}} \right] ^2\\
        &=\frac{2}{\varepsilon _{2}^{2}}\sum_{i=1}^n{d_{i}^{2}}+\frac{5n}{\varepsilon _{2}^{4}}+\left[ \frac{1}{2}\sum_{i=1}^n{d_{i}^{2}}-\frac{2}{\varepsilon _{2}^{2}} \right] ^2-\frac{5n+4}{\varepsilon _{2}^{4}} \\
        &=\left[ \sum_{\left( v_i,v_j \right) \in E}{\frac{1}{2}\left( d_i+d_j \right)} \right] ^2.
    \end{aligned}
    \label{eq:proof_Y1}
    \end{equation}
To sum up, we verify that 
    \begin{equation}
        \mathbb{E}\left[ \hat{q}_{ru}\left( G \right) \right] =\frac{\mathbb{E}\left( X_1 \right)}{M}-\frac{\mathbb{E}\left( Y_1 \right)}{M^2}=q_{ru}\left( G \right).
    \end{equation}
\end{proof}

\subsection*{Proof of Theorem 4.3}

\setcounter{equation}{0}                                                                                                        \renewcommand\theequation{B.\arabic{equation}} 

    \textbf{Theorem 4.3}  \emph{When $\varepsilon_1$, $\varepsilon_2$ are constants, the estimate $\hat{q}_{ru}\left( G\right)$ produced by $\mathbf{Local_{ru}}$ provides the following utility guarantee:}
    \begin{align}
        \mathrm{MSE}\left( \frac{n^3d_{\max}^{2}+n^2d_{\max}^{4}}{M^2}+\frac{n^3d_{\max}^{6}}{M^4} \right).
    \end{align}

Let us introduce a lemma to succeed in verifying Theorem 4.3. 
\begin{lemma}[\cite{imola2022differentially}]
    Let $x_1,x_2$ be two random variables, then $\mathbb{V}\left( x_1+x_2\right) \le 4\mathrm{max}\left\{ \mathbb{V}\left( x_1\right), \mathbb{V}\left( x_2 \right) \right\}$.
    \label{lemma:16}
\end{lemma}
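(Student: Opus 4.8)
The plan is to prove the inequality by reducing it to the elementary pointwise bound $(a+b)^2 \le 2(a^2+b^2)$, which is merely a restatement of $(a-b)^2 \ge 0$. First I would write the variance of the sum in its centered form. Setting $\mu_1 = \mathbb{E}(x_1)$ and $\mu_2 = \mathbb{E}(x_2)$, and assuming both variances are finite (otherwise the right-hand side is $+\infty$ and the claim is vacuous), I have $\mathbb{V}(x_1+x_2) = \mathbb{E}\big[\big((x_1-\mu_1)+(x_2-\mu_2)\big)^2\big]$, since $\mathbb{E}(x_1+x_2) = \mu_1+\mu_2$.

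Next I would apply the inequality $(a+b)^2 \le 2(a^2+b^2)$ with $a = x_1-\mu_1$ and $b = x_2-\mu_2$ inside the expectation. Because this holds for every realization, monotonicity of expectation gives $\mathbb{V}(x_1+x_2) \le 2\,\mathbb{E}[(x_1-\mu_1)^2] + 2\,\mathbb{E}[(x_2-\mu_2)^2] = 2\big(\mathbb{V}(x_1)+\mathbb{V}(x_2)\big)$. Finally, bounding each summand by the larger of the two, $\mathbb{V}(x_1)+\mathbb{V}(x_2) \le 2\max\{\mathbb{V}(x_1),\mathbb{V}(x_2)\}$, and chaining the two estimates yields $\mathbb{V}(x_1+x_2) \le 4\max\{\mathbb{V}(x_1),\mathbb{V}(x_2)\}$, as desired.

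There is essentially no hard step here; the only care needed is the finiteness caveat and the observation that the factor $2$ from the quadratic inequality multiplies with the factor $2$ from the max-bound to produce exactly $4$. An equivalent route, which I would mention as an alternative, is to expand $\mathbb{V}(x_1+x_2) = \mathbb{V}(x_1)+\mathbb{V}(x_2)+2\,\mathrm{Cov}(x_1,x_2)$ and invoke the Cauchy--Schwarz bound $|\mathrm{Cov}(x_1,x_2)| \le \sqrt{\mathbb{V}(x_1)\mathbb{V}(x_2)}$, which gives $\mathbb{V}(x_1+x_2) \le \big(\sqrt{\mathbb{V}(x_1)}+\sqrt{\mathbb{V}(x_2)}\big)^2 \le 4\max\{\mathbb{V}(x_1),\mathbb{V}(x_2)\}$. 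For completeness I would note that the constant $4$ is tight: taking $x_2 = x_1$ with $\mathbb{V}(x_1) > 0$ gives $\mathbb{V}(x_1+x_2) = 4\,\mathbb{V}(x_1) = 4\max\{\mathbb{V}(x_1),\mathbb{V}(x_2)\}$, so no smaller universal constant can replace $4$.
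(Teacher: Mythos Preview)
Your proof is correct. The paper does not actually prove this lemma; it merely states it with a citation to \cite{imola2022differentially}, so there is no in-paper argument to compare against. Both routes you give---the pointwise bound $(a+b)^2 \le 2(a^2+b^2)$ and the Cauchy--Schwarz expansion---are standard and valid, and your remark on the tightness of the constant $4$ is a nice addition.
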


From now on, we show the detailed proof of Theorem 4.3 as follows. 
\begin{proof}
    Due to $\mathbb{V}\left( a_{i,j}\right)=\mathbb{E}\left( a_{i,j}^2\right) -\left[ \mathbb{E}\left( a_{i,j}\right) \right] ^2=p(1-p)$, we can obtain
    \begin{equation}\label{eq:B-1}
    \begin{aligned}
        \mathbb{E}\left( \hat{a}_{i,j}^2\right)&=\mathbb{V}(\hat{a}_{i,j})+\left[\mathbb{E}(\hat{a}_{i,j})\right]^2 \\
        &=\frac{p(1-p)}{(1-2p)^2}+a_{i,j}^2 \\
        &=\frac{p(1-p)}{(1-2p)^2}+a_{i,j}.
    \end{aligned}
    \end{equation}
    From Theorem 4.2, it is clear to the eye that the estimate $\hat{q}_{ru}\left( G\right)$ produced by $\mathbf{Local_{ru}}$ is unbiased. By the bias-variance decomposition \cite{murphy2012machine}, the mean squared error (MSE) of $\hat{q}_{ru}\left( G\right)$ is equal to its variance. Let $P=\frac{X_1}{M}$ and $Q=-\frac{Y_1}{M^2}$, then
    \begin{equation}\label{eq:B-2}
    \begin{aligned}
        \text{MSE}\left( \hat{q}_{ru}\left( G \right) \right) &=\mathbb{V}\left( \hat{q}_{ru}\left( G \right) \right) =\mathbb{V}\left( P+Q \right) \\
        &\le 4\max \left\{ \mathbb{V}\left( P \right) ,\mathbb{V}\left( Q \right) \right\}. \\
        &\qquad \qquad \text{(by Lemma \ref{lemma:16})}
    \end{aligned}
    \end{equation}

Now, our task is to calculate $\mathbb{V}\left( P \right)$ and $\mathbb{V}\left( Q \right)$ separately. \\
    
    For $\mathbb{V}\left( P \right)$, since $\mathbb{V}\left( P \right)=M^{-2}\mathbb{V}\left( X_1 \right)$, we need to focus on calculation of $\mathbb{V}\left( X_1 \right)$. For ease of presentation, we define $\tilde{B}_{i,j}=\frac{\left( \tilde{a}_{i,j}-p \right) \tilde{d}_i\tilde{d}_j}{1-2p}$, then \\
    \begin{equation}\label{eq:B-3}
    \begin{aligned}
        &\mathbb{V}\left( X_1 \right) =\mathbb{V}\left( \sum_{i=2}^n{\sum_{j=1}^{i-1}{\tilde{B}_{i,j}}} \right) \\
        &=\sum_{i=2}^n{\mathbb{V}\left( \sum_{j=1}^{i-1}{\tilde{B}_{i,j}} \right)}  +\sum_{2\le k,l\le n,k\ne l}{\text{Cov}\left( \sum_{j=1}^{k-1}{\tilde{B}_{k,j}},\sum_{j=1}^{l-1}{\tilde{B}_{l,j}} \right)} \\
        &=\sum_{i=2}^n{\sum_{j=1}^{i-1}{\mathbb{V}\left( \tilde{B}_{i,j} \right)}} +\sum_{i=2}^n{\sum_{1\le k,l\le i-1,k\ne l}{\text{Cov}\left( \tilde{B}_{i,k},\tilde{B}_{i,l} \right)}} \\
        &\quad +\sum_{2\le k,l\le n,k\ne l}{\sum_{j=1}^{k-1}{\sum_{t=1}^{l-1}{\text{Cov}\left( \tilde{B}_{k,j},\tilde{B}_{l,t} \right)}}} \\
        &=P_1+P_2+P_3,
    \end{aligned}
    \end{equation}
    where $P_1=\sum_{i=2}^n{\sum_{j=1}^{i-1}{\mathbb{V}\left( \tilde{B}_{i,j} \right)}}$, $P_2=\sum_{i=2}^n{\sum_{1\le k,l\le i-1,k\ne l}{\text{Cov}\left( \tilde{B}_{i,k},\tilde{B}_{i,l} \right)}}$ and $P_3=\sum_{2\le k,l\le n,k\ne l}{\sum_{j=1}^{k-1}{\sum_{t=1}^{l-1}{\text{Cov}\left( \tilde{B}_{k,j},\tilde{B}_{l,t} \right)}}}$. \\

    Next, we calculate $P_1$, $P_2$ and $P_3$ respectively, and obtain 
\begin{subequations}
\label{eq:whole}
\begin{eqnarray}
 \begin{aligned}
        P_1&=\sum_{i=2}^n{\sum_{j=1}^{i-1}{\mathbb{V}\left( \tilde{B}_{i,j} \right)}}\\
        &=\sum_{i=2}^n{\sum_{j=1}^{i-1}{\left[ \mathbb{E}\left[ \left( \tilde{B}_{i,j} \right) ^2 \right] -\left[ \mathbb{E}\left( \tilde{B}_{i,j} \right) \right] ^2 \right]}} \\
        &=\sum_{i=2}^n{\sum_{j=1}^{i-1}{\left[ \left( \frac{p\left( 1-p \right)}{\left( 1-2p \right) ^2}+a_{i,j} \right) \left( \frac{2}{\varepsilon _{2}^{2}}+d_{i}^{2} \right) \left( \frac{2}{\varepsilon _{2}^{2}}+d_{j}^{2} \right) \right]}} \\
        &\quad -\sum_{i=2}^n{\sum_{j=1}^{i-1}}a_{i,j}d_{i}^{2}d_{j}^{2}\\
        &=\sum_{i=2}^n{\sum_{j=1}^{i-1}{\left[ \frac{p\left( 1-p \right)}{\left( 1-2p \right) ^2}\left( \frac{2}{\varepsilon _{2}^{2}}+d_{i}^{2} \right) \left( \frac{2}{\varepsilon _{2}^{2}}+d_{j}^{2} \right)  \right]}} \\
        &\quad+\sum_{i=2}^n{\sum_{j=1}^{i-1}}\frac{2}{\varepsilon _{2}^{2}}a_{i,j}\left( \frac{2}{\varepsilon _{2}^{2}}+d_{i}^{2}+d_{j}^{2} \right)\\
        &\le \sum_{i=2}^n{\sum_{j=1}^{i-1}{\left[ \frac{p\left( 1-p \right)}{\left( 1-2p \right) ^2}\left( \frac{2}{\varepsilon _{2}^{2}}+d_{\max}^{2} \right) \left( \frac{2}{\varepsilon _{2}^{2}}+d_{\max}^{2} \right)  \right]}} \\
        &\quad +\sum_{i=2}^n{\sum_{j=1}^{i-1}}\frac{2}{\varepsilon _{2}^{2}}\left( \frac{2}{\varepsilon _{2}^{2}}+2d_{\max}^{2} \right)\\
        &=O\left( n^2d_{\max}^{4} \right), 
    \end{aligned}\label{subeq:B-4-1}
\end{eqnarray}
\begin{equation}
  \begin{aligned}
        P_2&=\sum_{i=2}^n{\sum_{1\le k,l\le i-1,k\ne l}{\text{Cov}\left( \tilde{B}_{i,k},\tilde{B}_{i,l} \right)}} \\
        &=\sum_{i=2}^n{\sum_{1\le k,l\le i-1,k\ne l}{\left[ \mathbb{E}\left( \tilde{B}_{i,k}\tilde{B}_{i,l} \right) -\mathbb{E}\left( \tilde{B}_{i,k} \right) \mathbb{E}\left( \tilde{B}_{i,l} \right) \right]}} \\
        &=\sum_{i=2}^n{\sum_{1\le k,l\le i-1,k\ne l}{\left[ a_{i,k}a_{i,l}\left( \frac{2}{\varepsilon _{2}^{2}}+d_{i}^{2} \right) d_kd_l \right]}} \\
        &\quad -\sum_{i=2}^n{\sum_{1\le k,l\le i-1,k\ne l}}a_{i,k}a_{i,l}d_{i}^{2}d_kd_l\\
        &=\frac{2}{\varepsilon _{2}^{2}}\sum_{i=2}^n{\sum_{1\le k,l\le i-1,k\ne l}{a_{i,k}a_{i,l}d_kd_l}} \\
        &\le \frac{2}{\varepsilon _{2}^{2}}\sum_{i=2}^n{\sum_{1\le k,l\le i-1,k\ne l}{d_{\max}^{2}}} \\
        &=O\left( n^3d_{\max}^{2} \right) ,
    \end{aligned}\label{subeq:B-4-2} 
\end{equation}

\begin{equation}
 \begin{aligned}
        P_3&=\sum_{2\le k,l\le n,k\ne l}{\sum_{j=1}^{k-1}{\sum_{t=1}^{l-1}{\text{Cov}\left[ \tilde{B}_{k,j},\tilde{B}_{l.t} \right]}}} \\
        &=\sum_{2\le k,l\le n,k\ne l}{\sum_{j=1}^{k-1}{\sum_{t=1}^{l-1} \mathbb{E}\left( \tilde{B}_{k,j}\tilde{B}_{l.t} \right) }} \\
        &\quad -\sum_{2\le k,l\le n,k\ne l}{\sum_{j=1}^{k-1}{\sum_{t=1}^{l-1}}}\mathbb{E}\left( \tilde{B}_{k,j} \right) \mathbb{E}\left( \tilde{B}_{l.t} \right)\\
        &=\sum_{2\le k,l\le n,k\ne l}{\sum_{j=1}^{k-1}{\sum_{t=1}^{l-1}{ a_{k,j}a_{l,t}\mathbb{E}\left( \tilde{d}_k\tilde{d}_j\tilde{d}_l\tilde{d}_t \right)}}} \\
        &\quad -\sum_{2\le k,l\le n,k\ne l}{\sum_{j=1}^{k-1}{\sum_{t=1}^{l-1}}}a_{k,j}d_kd_ja_{l,t}d_ld_t\\
        &=2\sum_{2\le k<l\le n,k\ne l}{\sum_{j=1}^{k-1}{\left[ \frac{2}{\varepsilon _{2}^{2}}a_{k,j}a_{l,t}d_kd_l+\frac{2}{\varepsilon _{2}^{2}}a_{k,j}a_{l,t}d_jd_l \right]}} \\
        &=\frac{4}{\varepsilon _{2}^{2}}\sum_{2\le k<l\le n,k\ne l}{\sum_{j=1}^{k-1}{\left[ a_{k,j}a_{l,t}d_kd_l+a_{k,j}a_{l,t}d_jd_l \right]}} \\
        &\le \frac{4}{\varepsilon _{2}^{2}}\sum_{2\le k<l\le n,k\ne l}{\sum_{j=1}^{k-1}{2d_{\max}^{2}}} \\
        &=O\left( n^3d_{\max}^{2} \right). 
    \end{aligned}\label{subeq:B-4-3} 
\end{equation}
\end{subequations} \\

    Thus, \\
    \begin{equation}\label{eq:B-5}
        \mathbb{V}\left( X_1\right)=O\left( n^3d_{max}^3+n^2d_{max}^4\right).
    \end{equation} \\

This leads to the following expression \\ 

\begin{equation}\label{eq:B-6}
    \begin{aligned}
        \mathbb{V}\left( P\right)&= O\left( \frac{n^3d_{max}^3+n^2d_{max}^4}{M^2}\right).
    \end{aligned}
    \end{equation} \\
    
    Below we move on to the calculation of $\mathbb{V}\left( Q\right)$. By Lemma \ref{lemma:15}, we first derive 
    
\begin{subequations}
\label{eq:whole}
\begin{eqnarray}
\mathbb{E}\left( \tilde{d}_{i}^{6} \right) =d_{i}^{6}+\frac{30}{\varepsilon _{2}^{2}}d_{i}^{4}+\frac{360}{\varepsilon _{2}^{4}}d_{i}^{2}+\frac{720}{\varepsilon _{2}^{6}},\label{subeq:B-7-1}
\end{eqnarray}
\begin{equation}
\mathbb{E}\left( \tilde{d}_{i}^{8} \right) =d_{i}^{8}+\frac{56}{\varepsilon _{2}^{2}}d_{i}^{6}+\frac{1680}{\varepsilon _{2}^{4}}d_{i}^{4}+\frac{20160}{\varepsilon _{2}^{6}}d_{i}^{2}+\frac{40320}{\varepsilon _{2}^{8}}.\label{subeq:B-7-2}
\end{equation}
\end{subequations} \\

To make further progress, we write \\
    \begin{equation}\label{eq:B-8}
    \begin{aligned}
        \mathbb{V}\left( \tilde{d}_{i}^{2} \right)&=\mathbb{E}\left( \tilde{d}_{i}^{4} \right) -\left[ \mathbb{E}\left( \tilde{d}_{i}^{2} \right) \right] ^2 \\
        &=\left( d_{i}^{4}+\frac{12}{\varepsilon _{2}^{2}}d_{i}^{2}+\frac{24}{\varepsilon _{2}^{4}} \right) -\left( d_{i}^{2}+\frac{2}{\varepsilon _{2}^{2}} \right) ^2 \\
        &=\frac{8}{\varepsilon _{2}^{2}}d_{i}^{2}+\frac{20}{\varepsilon _{2}^{4}},
    \end{aligned}
    \end{equation} \\
    and \\
    \begin{equation}\label{eq:B-9}
    \begin{aligned}
        \mathbb{V}\left( \tilde{d}_{i}^{4} \right) &=\mathbb{E}\left( \tilde{d}_{i}^{8} \right) -\left[ \mathbb{E}\left( \tilde{d}_{i}^{4} \right) \right] ^2 \\
        &=\frac{32}{\varepsilon _{2}^{2}}d_{i}^{6}+\frac{1488}{\varepsilon _{2}^{4}}d_{i}^{4}+\frac{19584}{\varepsilon _{2}^{6}}d_{i}^{2}+\frac{39744}{\varepsilon _{2}^{8}}.
    \end{aligned}
    \end{equation}
    
Armed with the consequences above, let us derive quantity $\mathbb{V}\left( Q \right) $ as below 
\begin{equation}\label{eq:B-10}
    \begin{aligned}
        &\mathbb{V}\left( Q \right) =\mathbb{V}\left( -\frac{Y_1}{M^2} \right) =\frac{1}{M^4}\mathbb{V}\left( Y_1 \right) \\
        &=\frac{1}{M^4}\mathbb{V}\left[ \left( \frac{1}{2}\sum_{i=1}^n{\tilde{d}_{i}^{2}}-\frac{n+2}{\varepsilon _{2}^{2}} \right) ^2-\frac{5n+4}{\varepsilon _{2}^{4}} \right] \\
        &=\frac{1}{M^4}\mathbb{V}\left[ \frac{1}{4}\left( \sum_{i=1}^n{\tilde{d}_{i}^{2}} \right) ^2-\frac{n+2}{\varepsilon _{2}^{2}}\sum_{i=1}^n{\tilde{d}_{i}^{2}} \right] \\
        &=\frac{1}{M^4}\mathbb{V}\left[ \frac{1}{4}\left( \sum_{i=1}^n{\tilde{d}_{i}^{4}}+\sum_{i=1}^n{\sum_{j=1,j\ne i}^n{\tilde{d}_{i}^{2}\tilde{d}_{j}^{2}}} \right) -\frac{n+2}{\varepsilon _{2}^{2}}\sum_{i=1}^n{\tilde{d}_{i}^{2}} \right] \\
        &=\frac{1}{M^4}\mathbb{E}\left[ \left( \frac{1}{4}\sum_{i=1}^n{\tilde{d}_{i}^{4}}+\frac{1}{4}\sum_{i=1}^n{\sum_{j=1,j\ne i}^n{\tilde{d}_{i}^{2}\tilde{d}_{j}^{2}}}-\frac{n+2}{\varepsilon _{2}^{2}}\sum_{i=1}^n{\tilde{d}_{i}^{2}} \right) ^2 \right] \\
        &\quad -\frac{1}{M^4}\left[ \mathbb{E}\left( \frac{1}{4}\sum_{i=1}^n{\tilde{d}_{i}^{4}}+\frac{1}{4}\sum_{i=1}^n{\sum_{j=1,j\ne i}^n{\tilde{d}_{i}^{2}\tilde{d}_{j}^{2}}}-\frac{n+2}{\varepsilon _{2}^{2}}\sum_{i=1}^n{\tilde{d}_{i}^{2}} \right) \right] ^2 \\
        &=\frac{1}{M^4}Q_1-\frac{1}{M^4}Q_2 ,
    \end{aligned}
\end{equation}
    where 
    \begin{align*}
        &Q_1=\mathbb{E}\left[ \left( \frac{1}{4}\sum_{i=1}^n{\tilde{d}_{i}^{4}}+\frac{1}{4}\sum_{i=1}^n{\sum_{j=1,j\ne i}^n{\tilde{d}_{i}^{2}\tilde{d}_{j}^{2}}}-\frac{n+2}{\varepsilon _{2}^{2}}\sum_{i=1}^n{\tilde{d}_{i}^{2}} \right) ^2 \right] ,\\
        &Q_2=\left[ \mathbb{E}\left( \frac{1}{4}\sum_{i=1}^n{\tilde{d}_{i}^{4}}+\frac{1}{4}\sum_{i=1}^n{\sum_{j=1,j\ne i}^n{\tilde{d}_{i}^{2}\tilde{d}_{j}^{2}}}-\frac{n+2}{\varepsilon _{2}^{2}}\sum_{i=1}^n{\tilde{d}_{i}^{2}} \right) \right] ^2.
    \end{align*}

    Analogously, we need to calculate $Q_1$ and $Q_2$ separately. In essence, it is easy to derive 
    \begin{subequations}
    \label{eq:whole}
    \begin{eqnarray}
        \begin{aligned}
        Q_1&=\mathbb{E}\left[ \left( \frac{1}{4}\sum_{i=1}^n{\tilde{d}_{i}^{4}}+\frac{1}{4}\sum_{i=1}^n{\sum_{j=1,j\ne i}^n{\tilde{d}_{i}^{2}\tilde{d}_{j}^{2}}}-\frac{n+2}{\varepsilon _{2}^{2}}\sum_{i=1}^n{\tilde{d}_{i}^{2}} \right) ^2 \right]  \\
        &=\frac{1}{16}\mathbb{E}\left[ \left( \sum_{i=1}^n{\tilde{d}_{i}^{4}} \right) ^2\right] +\frac{1}{16}\mathbb{E}\left[ \left( \sum_{i=1}^n{\sum_{j=1,j\ne i}^n{\tilde{d}_{i}^{2}\tilde{d}_{j}^{2}}} \right) ^2\right] \\
        &\quad +\frac{\left( n+2 \right) ^2}{\varepsilon _{2}^{4}}\mathbb{E}\left[ \left( \sum_{i=1}^n{\tilde{d}_{i}^{2}} \right) ^2\right] \\
        &\quad +\frac{1}{8}\mathbb{E}\left[ \left( \sum_{i=1}^n{\tilde{d}_{i}^{4}} \right) \left( \sum_{i=1}^n{\sum_{j=1,j\ne i}^n{\tilde{d}_{i}^{2}\tilde{d}_{j}^{2}}} \right) \right] \\
        &\quad -\frac{n+2}{2\varepsilon _{2}^{2}}\mathbb{E}\left[ \left( \sum_{i=1}^n{\tilde{d}_{i}^{4}} \right) \left( \sum_{i=1}^n{\tilde{d}_{i}^{2}} \right) \right] \\
        &\quad -\frac{n+2}{2\varepsilon _{2}^{2}}\mathbb{E}\left[ \left( \sum_{i=1}^n{\sum_{j=1,j\ne i}^n{\tilde{d}_{i}^{2}\tilde{d}_{j}^{2}}} \right) \left( \sum_{i=1}^n{\tilde{d}_{i}^{2}} \right) \right] \\
        &=\frac{1}{16}\sum_{i=1}^n{\mathbb{E}\left( \tilde{d}_{i}^{8} \right)}+\frac{3}{16}\sum_{i=1}^n{\sum_{j=1,j\ne i}^n{\mathbb{E}\left( \tilde{d}_{i}^{4} \right) \mathbb{E}\left( \tilde{d}_{j}^{4} \right)}}\\
        &\quad +\frac{1}{4}\sum_{i=1}^n{\sum_{j=1,j\ne i}^n{\mathbb{E}\left( \tilde{d}_{i}^{6} \right) \mathbb{E}\left( \tilde{d}_{j}^{2} \right)}}\\
        &\quad +\frac{3}{8}\sum_{i=1}^n{\sum_{j=1,j\ne i}^n{\sum_{k=1,k\ne i,j}^n{\mathbb{E}\left( \tilde{d}_{i}^{4} \right) \mathbb{E}\left( \tilde{d}_{j}^{2} \right) \mathbb{E}\left( \tilde{d}_{k}^{2} \right)}}}\\
        &\quad +\frac{1}{16}\sum_{i=1}^n{\sum_{j=1,j\ne i}^n{\sum_{k=1,k\ne i,j}^k{\sum_{s=1,s\ne i,j,k}^n{\mathbb{E}\left( \tilde{d}_{i}^{2} \right) \mathbb{E}\left( \tilde{d}_{j}^{2} \right) \mathbb{E}\left( \tilde{d}_{k}^{2} \right) \mathbb{E}\left( \tilde{d}_{s}^{2} \right)}}}}\\
        &\quad +\frac{\left( n+2 \right) ^2}{\varepsilon _{2}^{4}}\sum_{i=1}^n{\mathbb{E}\left( \tilde{d}_{i}^{4} \right)}+\frac{\left( n+2 \right) ^2}{\varepsilon _{2}^{4}}\sum_{i=1}^n{\sum_{j=1,j\ne i}^n{\mathbb{E}\left( \tilde{d}_{i}^{2} \right) \mathbb{E}\left( \tilde{d}_{j}^{2} \right)}}\\
        &\quad -\frac{n+2}{2\varepsilon _{2}^{2}}\sum_{i=1}^n{\mathbb{E}\left( \tilde{d}_{i}^{6} \right)}-\frac{3\left( n+2 \right)}{2\varepsilon _{2}^{2}}\sum_{i=1}^n{\sum_{j=1,j\ne i}^n{\mathbb{E}\left( \tilde{d}_{i}^{4} \right) \mathbb{E}\left( \tilde{d}_{j}^{2} \right)}}\\
        &-\frac{n+2}{2\varepsilon _{2}^{2}}\sum_{i=1}^n{\sum_{j=1,j\ne i}^n{\sum_{k=1,k\ne i,j}^n{\mathbb{E}\left( \tilde{d}_{i}^{2} \right) \mathbb{E}\left( \tilde{d}_{j}^{2} \right) \mathbb{E}\left( \tilde{d}_{k}^{2} \right)}}},
        \end{aligned}\label{subeq:B-11-1}
    \end{eqnarray} \\

\begin{equation}
    \begin{aligned}
        Q_2&=\left[ \mathbb{E}\left( \frac{1}{4}\sum_{i=1}^n{\tilde{d}_{i}^{4}}+\frac{1}{4}\sum_{i=1}^n{\sum_{j=1,j\ne i}^n{\tilde{d}_{i}^{2}\tilde{d}_{j}^{2}}}-\frac{n+2}{\varepsilon _{2}^{2}}\sum_{i=1}^n{\tilde{d}_{i}^{2}} \right) \right] ^2 \\
        &=\left[ \frac{1}{4}\sum_{i=1}^n{\mathbb{E}\left( \tilde{d}_{i}^{4} \right)}+\frac{1}{4}\sum_{i=1}^n{\sum_{j=1,j\ne i}^n{\mathbb{E}\left( \tilde{d}_{i}^{2} \right) \mathbb{E}\left( \tilde{d}_{j}^{2} \right)}}-\frac{n+2}{\varepsilon _{2}^{2}}\sum_{i=1}^n{\mathbb{E}\left( \tilde{d}_{i}^{2} \right)} \right] ^2 \\
        &=\frac{1}{16}\left[ \sum_{i=1}^n{\mathbb{E}\left( \tilde{d}_{i}^{4} \right)} \right] ^2+\frac{1}{16}\left[ \sum_{i=1}^n{\sum_{j=1,j\ne i}^n{\mathbb{E}\left( \tilde{d}_{i}^{2} \right) \mathbb{E}\left( \tilde{d}_{j}^{2} \right)}} \right] ^2\\
        &\quad +\frac{\left( n+2 \right) ^2}{\varepsilon _{2}^{4}}\left[ \sum_{i=1}^n{\mathbb{E}\left( \tilde{d}_{i}^{2} \right)} \right] ^2\\
        &\quad +\frac{1}{8}\left[ \sum_{i=1}^n{\mathbb{E}\left( \tilde{d}_{i}^{4} \right)} \right] \left[ \sum_{i=1}^n{\sum_{j=1,j\ne i}^n{\mathbb{E}\left( \tilde{d}_{i}^{2} \right) \mathbb{E}\left( \tilde{d}_{j}^{2} \right)}} \right]\\
        &\quad -\frac{n+2}{2\varepsilon _{2}^{2}}\left[ \sum_{i=1}^n{\mathbb{E}\left( \tilde{d}_{i}^{4} \right)} \right] \left[ \sum_{i=1}^n{\mathbb{E}\left( \tilde{d}_{i}^{2} \right)} \right] \\
        &\quad -\frac{n+2}{2\varepsilon _{2}^{2}}\left[ \sum_{i=1}^n{\sum_{j=1,j\ne i}^n{\mathbb{E}\left( \tilde{d}_{i}^{2} \right) \mathbb{E}\left( \tilde{d}_{j}^{2} \right)}} \right] \left[ \sum_{i=1}^n{\mathbb{E}\left( \tilde{d}_{i}^{2} \right)} \right]\\
        &=\frac{1}{16}\sum_{i=1}^n{\left[ \mathbb{E}\left( \tilde{d}_{i}^{4} \right) \right] ^2}+\frac{1}{16}\sum_{i=1}^n{\sum_{j=1,j\ne i}^n{\mathbb{E}\left( \tilde{d}_{i}^{4} \right) \mathbb{E}\left( \tilde{d}_{j}^{4} \right)}}\\
        &\quad +\frac{1}{4}\sum_{i=1}^n{\sum_{j=1,j\ne i}^n{\mathbb{E}\left( \tilde{d}_{i}^{4} \right) \mathbb{E}\left( \tilde{d}_{i}^{2} \right) \mathbb{E}\left( \tilde{d}_{j}^{2} \right)}}\\
        &\quad +\frac{1}{8}\sum_{i=1}^n{\sum_{j=1,j\ne i}^n{\sum_{k=1,k\ne i,j}^n{\mathbb{E}\left( \tilde{d}_{i}^{4} \right) \mathbb{E}\left( \tilde{d}_{j}^{2} \right) \mathbb{E}\left( \tilde{d}_{k}^{2} \right)}}}\\
        &\quad +\frac{1}{8}\sum_{i=1}^n{\sum_{j=1,j\ne i}^n{\left[ \mathbb{E}\left( \tilde{d}_{i}^{2} \right) \right] ^2\left[ \mathbb{E}\left( \tilde{d}_{j}^{2} \right) \right] ^2}}\\
        &\quad +\frac{1}{4}\sum_{i=1}^n{\sum_{j=1,j\ne i}^n{\sum_{k=1,k\ne i,j}^n{\left[ \mathbb{E}\left( \tilde{d}_{i}^{2} \right) \right] ^2\mathbb{E}\left( \tilde{d}_{j}^{2} \right) \mathbb{E}\left( \tilde{d}_{k}^{2} \right)}}}\\
        &\quad +\frac{1}{16}\sum_{i=1}^n{\sum_{j=1,j\ne i}^n{\sum_{k=1,k\ne i,j}^k{\sum_{s=1,s\ne i,j,k}^n{\mathbb{E}\left( \tilde{d}_{i}^{2} \right) \mathbb{E}\left( \tilde{d}_{j}^{2} \right) \mathbb{E}\left( \tilde{d}_{k}^{2} \right) \mathbb{E}\left( \tilde{d}_{s}^{2} \right)}}}}\\
        &\quad +\frac{\left( n+2 \right) ^2}{\varepsilon _{2}^{4}}\sum_{i=1}^n{\left[ \mathbb{E}\left( \tilde{d}_{i}^{2} \right) \right] ^2} +\frac{\left( n+2 \right) ^2}{\varepsilon _{2}^{4}}\sum_{i=1}^n{\sum_{j=1,j\ne i}^n{\mathbb{E}\left( \tilde{d}_{i}^{2} \right) \mathbb{E}\left( \tilde{d}_{j}^{2} \right)}}\\
        &\quad -\frac{n+2}{2\varepsilon _{2}^{2}}\sum_{i=1}^n{\mathbb{E}\left( \tilde{d}_{i}^{4} \right) \mathbb{E}\left( \tilde{d}_{i}^{2} \right)}-\frac{n+2}{2\varepsilon _{2}^{2}}\sum_{i=1}^n{\sum_{j=1,j\ne i}^n{\mathbb{E}\left( \tilde{d}_{i}^{4} \right) \mathbb{E}\left( \tilde{d}_{j}^{2} \right)}}\\
        &\quad -\frac{n+2}{\varepsilon _{2}^{2}}\sum_{i=1}^n{\sum_{j=1,j\ne i}^n{\mathbb{E}\left( \tilde{d}_{i}^{2} \right) \mathbb{E}\left( \tilde{d}_{i}^{2} \right) \mathbb{E}\left( \tilde{d}_{j}^{2} \right)}}\\
        &\quad -\frac{n+2}{2\varepsilon _{2}^{2}}\sum_{i=1}^n{\sum_{j=1,j\ne i}^n{\sum_{k=1,k\ne i,j}^n{\mathbb{E}\left( \tilde{d}_{i}^{2} \right) \mathbb{E}\left( \tilde{d}_{j}^{2} \right) \mathbb{E}\left( \tilde{d}_{k}^{2} \right)}}}.
    \end{aligned}\label{subeq:B-11-2}
\end{equation}
\end{subequations} \\

From Eqs.(\ref{subeq:B-11-1}) and (\ref{subeq:B-11-2}), we obtain 
\begin{equation}\label{eq:B-14}
    \begin{aligned}
        &\mathbb{V}\left( Q \right) =\frac{1}{16M^4}\sum_{i=1}^n{\left[ \mathbb{E}\left( \tilde{d}_{i}^{8} \right) -\left[ \mathbb{E}\left( \tilde{d}_{i}^{4} \right) \right] ^2 \right]}\\
        &\quad +\frac{1}{4M^4}\sum_{i=1}^n{\sum_{j=1,j\ne i}^n{\left[ \mathbb{E}\left( \tilde{d}_{i}^{6} \right) -\mathbb{E}\left( \tilde{d}_{i}^{4} \right) \mathbb{E}\left( \tilde{d}_{i}^{2} \right) \right] \mathbb{E}\left( \tilde{d}_{j}^{2} \right)}}\\
        &\quad +\frac{1}{8M^4}\sum_{i=1}^n{\sum_{j=1,j\ne i}^n{\left[ \mathbb{E}\left( \tilde{d}_{i}^{4} \right) \mathbb{E}\left( \tilde{d}_{j}^{4} \right) -\left[ \mathbb{E}\left( \tilde{d}_{i}^{2} \right) \right] ^2\left[ \mathbb{E}\left( \tilde{d}_{j}^{2} \right) \right] ^2 \right]}}\\
        &\quad +\frac{1}{4M^4}\sum_{i=1}^n{\sum_{j=1,j\ne i}^n{\sum_{k=1,k\ne i,j}^n{\left[ \mathbb{E}\left( \tilde{d}_{i}^{4} \right) -\left[ \mathbb{E}\left( \tilde{d}_{i}^{2} \right) \right] ^2 \right] \mathbb{E}\left( \tilde{d}_{j}^{2} \right) \mathbb{E}\left( \tilde{d}_{k}^{2} \right)}}}\\
        &\quad +\frac{\left( n+2 \right) ^2}{\varepsilon _{2}^{4}M^4}\sum_{i=1}^n{\left[ \mathbb{E}\left( \tilde{d}_{i}^{4} \right) -\left[ \mathbb{E}\left( \tilde{d}_{i}^{2} \right) \right] ^2 \right]}\\
        &\quad -\frac{n+2}{2\varepsilon _{2}^{2}M^4}\sum_{i=1}^n{\left[ \mathbb{E}\left( \tilde{d}_{i}^{6} \right) -\mathbb{E}\left( \tilde{d}_{i}^{4} \right) \mathbb{E}\left( \tilde{d}_{i}^{2} \right) \right]}\\
        &\quad -\frac{n+2}{\varepsilon _{2}^{2}M^4}\sum_{i=1}^n{\sum_{j=1,j\ne i}^n{\left[ \mathbb{E}\left( \tilde{d}_{i}^{4} \right) -\left[ \mathbb{E}\left( \tilde{d}_{i}^{2} \right) \right] ^2 \right] \mathbb{E}\left( \tilde{d}_{j}^{2} \right)}}\\
        &=O\left( \frac{n^3d_{\max}^{6}}{M^4} \right)         .
    \end{aligned}
\end{equation}


Finally, from Eqs.(\ref{eq:B-6}) and (\ref{eq:B-14}) it follows that 
\begin{equation}
    \text{MSE}= O\left( \frac{n^3d_{\max}^{2}+n^2d_{\max}^{4}}{M^2}+\frac{n^3d_{\max}^{6}}{M^4} \right).
\end{equation}
\end{proof}


\subsection*{Proof of Theorem 4.6}

    \textbf{Theorem 4.6} \emph{The estimate $\hat{q}_{ru}\left( G\right)$ produced by $\mathbf{Shuffle_{ru}}$ satisfies $\mathbb{E}\left( \hat{q}_{ru}\left( G\right)\right)=q_{ru}\left( G\right)$.}

\setcounter{equation}{0}                                                                                                        \renewcommand\theequation{C.\arabic{equation}}  
\begin{proof}
    First, we prove that $X_2$ is unbiased, i.e., 
    \begin{equation}
    \begin{aligned}
        \mathbb{E}\left( X_2 \right) &=\mathbb{E}\left( \sum_{i=2}^n{\hat{r}_i} \right)  \\
        &=\mathbb{E}\left[ \sum_{i=2}^n{d_i\sum_{j=1}^{i-1}{\frac{\left( \tilde{a}_{i,j}-p \right) \tilde{d}_j}{1-2p}}} \right] \\
        &=\sum_{i=2}^n{d_i\sum_{j=1}^{i-1}{\mathbb{E}\left( \frac{\tilde{a}_{i,j}-p}{1-2p} \right) \mathbb{E}\left( \tilde{d}_j \right)}} \\
        &=\sum_{i=2}^n{d_i\sum_{j=1}^{i-1}{a_{ij}d_j}} \\
        &=\sum_{\left( v_i,v_j \right) \in E}{d_id_j}.
    \end{aligned}
    \end{equation}

We now consider $Y_2$. Since $Y_2$ has the similar form as $Y_1$, and the unbiasedness of $Y_1$ has been already proven in the proof of Theorem 4.2, it follows that $Y_2$ is also unbiased. Thus, we have
    \begin{equation}
        \mathbb{E}\left( Y_2 \right) = \left[ \sum_{\left( v_i,v_j \right) \in E}{\frac{1}{2}\left( d_i+d_j \right)} \right] ^2.
    \end{equation}
    
Armed with the results above, we come to 
    \begin{equation}
        \mathbb{E}\left( \hat{q}_{ru}\left( G \right) \right) =q_{ru}\left( G \right). 
    \end{equation}
\end{proof}

\subsection*{Proof of Theorem 4.7}

\setcounter{equation}{0}                                                                                                        \renewcommand\theequation{D.\arabic{equation}} 

    \textbf{Theorem 4.7} \emph{When $\varepsilon$, $\delta$ are constants, $\alpha \in (0,1)$, $\varepsilon_0 = \log \left( n\right) + O\left( 1\right)$, the estimate $\hat{q}_{ru}\left( G\right)$ produced by $\mathbf{Shuffle_{ru}}$ provides the following utility guarantee:}
    \begin{align}
        \mathrm{MSE}\left( \hat{q}_{ru}\left( G \right) \right) &=O\left( \frac{n^{1+\alpha}d_{\max}^{4}}{M^2}+\frac{n^3d_{\max}^{2}}{\left( \log n \right) ^2M^2}+\frac{n^3d_{\max}^{6}}{\left( \log n \right) ^2M^4} \right) . 
    \end{align}

\begin{proof}
    Let $U=\frac{X_2}{M}$ and $W=-\frac{Y_2}{M^2}$, then the MSE of $\hat{q}_{ru}\left( G\right)$ by $\mathbf{Shuffle_{ru}}$ can be written as follows
    \begin{equation}
    \begin{aligned}
        \text{MSE}\left( \hat{q}_{ru}\left( G \right) \right) &=\mathbb{V}\left( \hat{q}_{ru}\left( G \right) \right)=\mathbb{V}\left( U+W \right) \\
        &\le 4\max \left\{ \mathbb{V}\left( U \right) ,\mathbb{V}\left( W \right) \right\}. \quad \text{(by Lemma \ref{lemma:16})}
    \end{aligned}
    \end{equation}

 We now need to calculate $\mathbb{V}\left( U \right)$ and $\mathbb{V}\left( W \right)$ separately. Since the expression for $W$ is similar to that of $Q$ in the proof of Theorem 4.3, this leads to $\mathbb{V}\left( W \right) \le O\left( \frac{n^3d_{\max}^{6}}{\left( \log n \right) ^2M^4}\right)$. Next, we only need to compute $\mathbb{V}\left( U \right)$ to establish the upper bound of $\text{MSE}\left( \hat{q}_{ru}\left( G \right) \right)$. For ease of presentation, we define $\tilde{C}_{i,j}=\frac{\left( \tilde{a}_{i,j}-p \right) \tilde{d}_j}{1-2p}$, then

 \begin{equation}
    \begin{aligned}
        &\mathbb{V}\left( U \right) =M^{-2}\mathbb{V}\left( \sum_{i=2}^n{d_i\sum_{j=1}^{i-1}{\tilde{C}_{i,j}}} \right) \\
        &=M^{-2}\sum_{i=2}^n{d_{i}^{2}\mathbb{V}\left( \sum_{j=1}^{i-1}{\tilde{C}_{i,j}} \right)}\\
        &\quad +2M^{-2}\sum_{2\le k<l\le n}{\text{Cov}\left( \sum_{j=1}^{k-1}{d_k\tilde{C}_{k,j}},\sum_{h=1}^{l-1}{d_l\tilde{C}_{l,h}}\right)}\\
        &=M^{-2}\sum_{i=2}^n{\frac{d_{i}^{2}}{\left( 1-2p \right) ^2}\sum_{j=1}^{i-1}{\left[ p\left( 1-p \right) d_{j}^{2}+\frac{2\left( 1-2p \right) ^2a_{i.j}}{\alpha ^2\varepsilon _{0}^{2}}+\frac{2p\left( 1-p \right)}{\alpha ^2\varepsilon _{0}^{2}} \right]}}\\
        &\quad +\frac{4M^{-2}}{\alpha ^2\varepsilon _{0}^{2}}\sum_{2\le k<l\le n}{\sum_{j=1}^{k-1}{a_{k,j}a_{l,j}d_kd_l}}\\
        &\le \frac{p\left( 1-p \right)}{\left( 1-2p \right) ^2M^2}\sum_{i=2}^n{\sum_{j=1}^{i-1}{d_{\max}^{4}}}+\frac{4}{\alpha ^2\varepsilon _{0}^{2}M^2}\sum_{2\le k<l\le n}{\left( k-1 \right) d_{\max}^{2}} \\
        &=\frac{p\left( 1-p \right) n\left( n-1 \right)}{2\left( 1-2p \right) ^2M^2}d_{\max}^{4}+\frac{2n\left( n-1 \right) \left( n-2 \right)}{3\alpha ^2\varepsilon _{0}^{2}M^2}d_{\max}^{2} \\
        &=\frac{n^{\alpha}\left( n-1 \right)}{M^2}d_{\max}^{4}+\frac{n\left( n-1 \right) \left( n-2 \right)}{\left( \log n \right) ^2M^2}d_{\max}^{2}\\
        &=O\left( \frac{n^{1+\alpha}d_{\max}^{4}}{M^2}+\frac{n^3d_{\max}^{2}}{\left( \log n \right) ^2M^2} \right) .
    \end{aligned}
    \end{equation}
    
    Therefore, we gain 
    \begin{equation}
        \text{MSE}\left( \hat{q}_{ru}\left( G \right) \right) = O\left( \frac{n^{1+\alpha}d_{\max}^{4}}{M^2}+\frac{n^3d_{\max}^{2}}{\left( \log n \right) ^2M^2}+\frac{n^3d_{\max}^{6}}{\left( \log n \right) ^2M^4} \right) . 
    \end{equation}
\end{proof}



\subsection*{Proof of Theorem 4.10}

    \textbf{Theorem 4.10} \emph{The estimate $\hat{q}_{ru}\left( G\right)$ produced by $\mathbf{Decentral_{ru}}$ satisfies $\mathbb{E}\left( \hat{q}_{ru}\left( G\right)\right)=q_{ru}\left( G\right)$.}
    \label{theorem:12}

\setcounter{equation}{0}                                                                                                        \renewcommand\theequation{E.\arabic{equation}} 

\begin{proof}
    Since $\tilde{T}_i \sim \text{Lap}\left( \frac{\hat{d}_{max}}{\varepsilon_2}\right)$, by Lemma \ref{lemma:15}, we have
    \begin{equation}
        \mathbb{E}\left( \tilde{T}_i\right)=T_i+\frac{2\hat{d}_{max}^2}{\varepsilon_2^2}.
    \end{equation}
    Then, we can easily obtain
      \begin{equation}
    \begin{aligned}
        \mathbb{E}\left( X_2\right) &=\mathbb{E}\left( \frac{1}{2} \sum_{i=1}^n{\tilde{d}_i \tilde{T}_i}\right) \\
        &=\frac{1}{2} \sum_{i=1}^n{d_i T_i} \\
        &=\sum_{\left( v_i,v_j \right) \in E}{d_id_j},
    \end{aligned}
      \end{equation}
    Following a similar derivation as in Eq.(\ref{eq:proof_Y1}), we have
      \begin{equation}
    \begin{aligned}
        \mathbb{E}\left( Y_2 \right) =\left[ \sum_{\left( v_i,v_j \right) \in E}{\frac{1}{2}\left( d_i+d_j \right)} \right] ^2.
    \end{aligned}
    \end{equation}
    Finally, we show that  
    \begin{equation}
        \mathbb{E}\left( \hat{q}_{ru}\left( G \right) \right) =q_{ru}\left( G \right).
    \end{equation}

\end{proof}

\subsection*{Proof of Theorem 4.11}

\setcounter{equation}{0}                                                                                                        \renewcommand\theequation{F.\arabic{equation}} 

    \textbf{Theorem 4.11} \emph{When $\varepsilon_1$, $\varepsilon_2$ are constants, the estimate $\hat{q}_{ru}\left( G\right)$ produced by $\mathbf{Decentral_{ru}}$ provides the following utility guarantee:}
    \begin{align}
        \mathrm{MSE}\left( \hat{q}_{ru}\left( G \right) \right) &= O\left( \frac{n^3d_{\max}^{6}}{M^4} \right) . 
    \end{align}

\begin{proof}
    Since $\tilde{T}_i\sim \text{Lap}\left( T_i,\frac{\hat{d}_{\max}}{\varepsilon _2} \right) $, then by Lemma \ref{lemma:15}, we can get
\begin{eqnarray}
\mathbb{E}\left( \tilde{T}_{i}^{2} \right) =T_{i}^{2}+\frac{2\hat{d}_{\max}^{2}}{\varepsilon _{2}^{2}},\label{subeq:F-1-1}
\end{eqnarray}
    

    Below we calculate the MSE of $q_{ru}\left( G\right)$ produced by $\mathbf{Decentral_{ru}}$. Let $H=\frac{X_3}{M}$ and $S=-\frac{Y_3}{M^2}$, then
    \begin{equation}\label{eq:F-3}
    \begin{aligned}
        \text{MSE}\left( \hat{q}_{ru}\left( G \right) \right) &=\mathbb{V}\left( \hat{q}_{ru}\left( G \right) \right)=\mathbb{V}\left( H+S \right) \\
        &\le 4\max \left\{ \mathbb{V}\left( H \right) ,\mathbb{V}\left( S \right) \right\}. \quad \text{(by Lemma \ref{lemma:16})}
    \end{aligned}
    \end{equation}
    Next, we calculate $\mathbb{V}\left( H \right)$ and $\mathbb{V}\left( S \right)$ respectively.
    \begin{equation}
    \begin{aligned}\label{eq:F-4}
        \mathbb{V}\left( H\right)&=\mathbb{V}\left( \frac{X_3}{M}\right)=M^{-2}\mathbb{V}\left( X_3\right) \\
        &=\frac{1}{4M^2} \sum_{i=1}^n{\mathbb{V}\left(\hat{d}_i \hat{T}_i\right)} \\
        &=\frac{1}{4M^2} \sum_{i=1}^n{\left[\mathbb{E}\left(\hat{d}_i^2 \hat{T}_i^2\right)-\mathbb{E}\left(\hat{d}_i \hat{T}_i\right)^2\right]} \\
        &=\frac{1}{4M^2} \sum_{i=1}^n{\left[\mathbb{E}\left(\hat{d}_i^2\right)\mathbb{E}\left(\hat{T}_i^2\right)-\mathbb{E}\left(\hat{d}_i\right)^2 \mathbb{E}\left(\hat{T}_i\right)^2\right]} \\
        &=\frac{1}{4M^2} \sum_{i=1}^n{\left[\left(d_i^2+\frac{2}{\varepsilon_1^2}\right)\left(T_i^2+\frac{2(d_{[1]}^\ast+d_{[2]}^\ast)^2}{\varepsilon_2^2}\right)-d_i^2 T_i^2\right]} \\
        &=\frac{1}{4M^2} \sum_{i=1}^n{\frac{2 \varepsilon_2^2 T_i^2+2 \varepsilon_1^2 d_i^2 (d_{[1]}^\ast+d_{[2]}^\ast)^2+4(d_{[1]}^\ast+d_{[2]}^\ast)^2}{\varepsilon_1^2 \varepsilon_2^2}} \\
        & \le \frac{1}{4M^2} \sum_{i=1}^n{\frac{2 \varepsilon_2^2 d_{max}^4+2 \varepsilon_1^2 d_{max}^2(d_{[1]}^\ast+d_{[2]}^\ast)^2+4(d_{[1]}^\ast+d_{[2]}^\ast)^2}{\varepsilon_1^2 \varepsilon_2^2}} \\
        &=O\left( \frac{nd_{max}^2(d_{[1]}^\ast+d_{[2]}^\ast)^2}{M^2}\right).
    \end{aligned}
    \end{equation}

    Since the expression for $S$ is similar to the expression for $Q$ in the proof of Theorem 4.3, we obtain $\mathbb{V}\left( S \right) \le O\left( \frac{n^3d_{\max}^{6}}{M^4}\right)$.

    To sum up, we have 
    \begin{equation}
        \text{MSE}\left( \hat{q}_{ru}\left( G \right) \right)= O\left( \frac{n^3d_{\max}^{6}}{M^4}\right).
    \end{equation}
\end{proof}

\end{document}